\newcommand{\cmark}{\ding{51}}
\newcommand{\xmark}{\ding{55}}
\newcommand{\full}[2]{#1}
\newtheorem{theorem}{Theorem}
\newtheorem{lemma}{Lemma}
\newtheorem{corollary}{Corollary}
\newtheorem{proposition}{Proposition}
\newtheorem{principle}{Principle}
\newtheorem{definition}{Definition}
\newtheorem{example}{Example}
\renewcommand{\restriction}{\mathord{\upharpoonright}}
\newcommand{\coleq}{\hspace{1pt}\colon\!\!\colon\!\!\!\!=}
\newcommand{\h}{{\cal H}}
\newcommand{\D}{{\cal D}}
\newcommand{\hc}{\mathit{hc}}
\newcommand{\leastmodel}{\mu F_{\h}}
\newcommand{\leastmodeld}[1]{\mu F_{#1}}
\newcommand{\bm}[1]{\mbox{\boldmath $#1$}}
\newcommand{\Amult}{A}
\newcommand{\Jmult}{J_{\mathit{mult}}}
\newcommand{\Dmult}{D_{\mathit{mult}}}
\newcommand{\hmult}{\h_{\mathit{mult}}}
\newcommand{\dcmult}{\D_{\mathit{mult}}}
\newcommand{\rhomult}{\rho_{\mathit{mult}}}
\newcommand{\thetamult}{\theta_{\mathit{mult}}}
\newcommand{\Gammamult}{\Gamma_{\mathit{mult}}}
\newcommand{\defof}[1]{\mathit{def}(#1)}
\newcommand{\goalof}[1]{\mathit{goal}(#1)}
\newcommand{\fvs}[1]{\mathit{fvs}(#1)}
\newcommand{\pvs}[1]{\mathit{pvs}(#1)}
\newcommand{\dom}[1]{\mathrm{dom}(#1)}
\newcommand{\arity}[1]{\mathit{ar}(#1)}
\newcommand{\set}[1]{\left\{ #1 \right\}}
\newcommand{\relmiddle}[1]{\mathrel{}\middle#1\mathrel{}}
\newcommand{\seq}[1]{\widetilde{#1}}
\newcommand{\sembrack}[1]{\left\llbracket#1\right\rrbracket}
\newcommand{\natset}{\mathbb{N}}
\newcommand{\intset}{\mathbb{Z}}
\newcommand{\TINT}{\mathtt{int}}
\newcommand{\TREAL}{\mathtt{real}}
\newcommand{\TBOOL}{\mathtt{bool}}
\newcommand{\TLIST}{\mathtt{list}}
\newcommand{\TUNIT}{\mathtt{unit}}
\newcommand{\main}{\mathtt{main}}
\newcommand{\mult}{\mathtt{mult}}
\newcommand{\multacc}{\mathtt{mult\_acc}}
\newcommand{\MULT}{\mathtt{mult}}
\newcommand{\SUM}{\mathtt{sum}}
\newcommand{\SUMACC}{\mathtt{sum\_acc}}
\newcommand{\SUMFROMDOWN}{\mathtt{sum\_down}}
\newcommand{\SUMFROMUP}{\mathtt{sum\_up}}
\newcommand{\APPLY}{\mathtt{apply}}
\newcommand{\REPEAT}{\mathtt{repeat}}
\newcommand{\ADD}{\mathtt{add}}
\newcommand{\NONINTER}{\mathtt{noninter}}
\newcommand{\MCC}{\mathtt{mc91}}
\newcommand{\ACK}{\mathtt{ack}}
\newcommand{\MULTC}{\mathtt{mult\_Ccode}}
\newcommand{\FMOD}{\mathtt{flip\_mod}}
\newcommand{\FIND}{\mathtt{find}}
\newcommand{\FINDOPT}{\mathtt{find\_opt}}
\newcommand{\MEM}{\mathtt{mem}}
\newcommand{\SUML}{\mathtt{sum\_list}}
\newcommand{\FOLDL}{\mathtt{fold\_left}}
\newcommand{\FOLDR}{\mathtt{fold\_right}}
\newcommand{\LINT}{\mathcal{T}_{\intset}}
\newcommand{\QFLIA}{QF\_LIA\xspace}
\newcommand{\QFLRA}{QF\_LRA\xspace}
\newcommand{\ASSERT}{\mathtt{assert}}
\newcommand{\FALSE}{\mathtt{false}}
\newcommand{\NONE}{\mathtt{None}}
\newcommand{\SOME}{\mathtt{Some}}
\newcommand{\NOTFOUND}{\mathtt{Not\_Found}}
\newcommand{\NIL}{\mathtt{Nil}}
\newcommand{\CONS}{\mathtt{Cons}}
\newcommand{\notreds}{\longrightarrow^* \hspace{-1.8em}/\quad}
\newcommand\rulesp{\vspace*{2ex}}
\newcommand{\fresh}{\mathrm{fresh}}
\newcommand{\foreach}{\mathrm{for~each}}
\newcommand{\mk}[2]{{#1}^{#2}}
\newcommand{\mkk}[3]{{#1}_{#3}^{#2}}
\newcommand{\In}[2]{\left\llbracket#1 {\bm \in} #2\right\rrbracket}
\newcommand{\Sub}[2]{\left\llbracket#1 {\bm \subseteq} #2\right\rrbracket}
\newcommand{\sem}[3]{\sembrack{#1, #2}^{#3}}
\begin{document}

\setlength{\pdfpageheight}{\paperheight}
\setlength{\pdfpagewidth}{\paperwidth}

\title{Automating Induction for Solving Horn Clauses}

\authorinfo{Hiroshi Unno \and Sho Torii}
           {University of Tsukuba}
           {\{uhiro,sho\}@logic.cs.tsukuba.ac.jp}

\maketitle

\begin{abstract}
Verification problems of programs written in various paradigms (such
as imperative, logic, concurrent, functional, and object-oriented
ones) can be reduced to problems of solving Horn clause constraints on
predicate variables that represent unknown inductive invariants.
This paper presents a novel Horn constraint solving method based on
inductive theorem proving: the method reduces Horn constraint solving
to validity checking of first-order formulas with inductively defined
predicates, which are then checked by induction on the derivation of
the predicates.  To automate inductive proofs, we introduce a novel
proof system tailored to Horn constraint solving and use an SMT solver
to discharge proof obligations arising in the proof search.
The main advantage of the proposed method is that it can
verify \emph{relational specifications} across programs in various
paradigms where multiple function calls need to be analyzed
simultaneously.  The class of specifications includes practically
important ones such as functional equivalence, associativity,
commutativity, distributivity, monotonicity, idempotency, and
non-interference.
Furthermore, our novel combination of Horn clause constraints with
inductive theorem proving enables us to naturally and automatically
axiomatize recursive functions that are possibly non-terminating,
non-deterministic, higher-order, exception-raising, and over
non-inductively defined data types.
We have implemented a relational verification tool for the OCaml
functional language based on the proposed method and obtained
promising results in preliminary experiments.

\end{abstract}

\section{Introduction}
\label{sec:intro}
Verification problems of programs written in various paradigms,
including imperative~\cite{Gurfinkel2015}, logic,
concurrent~\cite{Gupta2011a},
functional~\cite{Unno2008,Rondon2008,Unno2009,Vazou2014}, and
object-oriented~\cite{Kahsai2016} ones, can be reduced to problems of
solving Horn clause constraints on predicate variables that represent
unknown inductive invariants.  A given program is guaranteed to
satisfy its specification if the Horn constraints generated from the
program have a solution (see \cite{Grebenshchikov2012} for an overview
of the approach).

This paper presents a novel Horn constraint solving method based on
inductive theorem proving: the method reduces Horn constraint solving
to validity checking of first-order formulas with inductively defined
predicates, which are then checked by induction on the derivation of
the predicates.  The main technical challenge here is how to automate
inductive proofs.  To this end, we propose an inductive proof system
tailored for Horn constraint solving and an SMT-based technique to
automate proof search in the system.

Compared to previous Horn constraint solving
methods~\cite{Unno2009,Terauchi2010,Hoder2011,Gupta2011,Grebenshchikov2012,Rummer2013,McMillan2013,Unno2015}
based on Craig interpolation~\cite{Craig1957a,McMillan2005}, abstract
interpretation~\cite{Cousot1977}, and PDR~\cite{Bradley2011}, the
proposed method has two major advantages:
\begin{enumerate}
\item It can verify \emph{relational specifications} where multiple function
calls need to be analyzed simultaneously.
As shown in Sections~\ref{sec:hornex} and \ref{sec:exp}, the class of
specifications includes practically important ones such as functional
equivalence, associativity, commutativity, distributivity,
monotonicity, idempotency, and non-interference.
\item It can solve Horn clause constraints over whatever background theories
supported by the underlying SMT solver.  Example constraints in
Section~\ref{sec:hornex} are over the theories of nonlinear
arithmetics and algebraic data structures, which have not been
supported by available Horn constraint solvers to our knowledge.
\end{enumerate}

To show the usefulness of our approach, we have implemented a
relational verification tool for the OCaml functional language based
on the proposed method and obtained promising results in preliminary
experiments.

For an example of the reduction from (relational) verification to Horn
constraint solving, consider the following functional program $\Dmult$
(in OCaml syntax).\footnote{Our work also applies to programs that
require a path-sensitive analysis of intricate control flows caused by
non-termination, non-determinism, higher-order functions, and
exceptions but, for illustration purposes, we use this simple program
as a running example.}
\begin{alltt}
let rec mult x y =
  if y=0 then 0 else x + mult x (y-1)
let rec mult_acc x y a =
  if y=0 then a else mult_acc x (y-1) (a+x)
let main x y a =
  assert (mult x y + a = mult_acc x y a)
\end{alltt}
Here, the function $\mult$ takes two integer arguments $\verb|x|$,
$\verb|y|$ and recursively computes $\verb|x| \times \verb|y|$ (note
that $\mult$ never terminates if $\verb|y|<0$).  $\multacc$ is a
tail-recursive version of $\mult$ with an accumulator $\verb|a|$.  The
function $\main$ contains an assertion with the condition
\verb|mult x y + a = mult_acc x y a|, which represents a relational
specification, namely, the functional equivalence of $\mult$ and
$\multacc$.
Our verification problem here is whether for any integers $x$, $y$,
and $a$, the evaluation of $\verb|main|\ x\ y\ a$, under the
call-by-value evaluation strategy adopted by OCaml, never causes an
assertion failure, that is $\forall x,y,a \in \natset.\ \verb|main|\
x\ y\ a \centernot\longrightarrow^* \verb|assert|\ \FALSE$.
By using existing Horn constraint generation methods for call-by-value
functional programs~\cite{Knowles2007,Unno2009}, the relational
verification problem is reduced to the constraint solving problem of
the following Horn clause constraint set $\hmult$:
\[
\left\{
\begin{array}{l}
P(x, 0, 0), \\
P(x, y, x+r) \Leftarrow P(x, y-1, r) \land (y \neq 0),\\
Q(x, 0, a, a), \\
Q(x, y, a, r) \Leftarrow Q(x, y-1, a+x, r) \land (y \neq 0),\\
\bot \Leftarrow P(x, y, r_1) \land Q(x, y, a, r_2) \land (r_1 + a \neq r_2)
\end{array}
\right\}
\]
Here, the predicate variable $P$ (resp. $Q$) represents an inductive
invariant among the arguments and the return value of the function
$\mult$ (resp. $\multacc$).  The first Horn clause $P(x, 0, 0)$ is
generated from the then-branch of the definition of $\mult$ and
expresses that $\mult$ returns $0$ if $0$ is given as the second
argument.  The second clause in $\hmult$, $P(x, y, x+r) \Leftarrow
P(x, y-1, r) \land (y \neq 0)$ is generated from the else-branch and
represents that $\mult$ returns $x+r$ if the second argument $y$ is
non-zero and $r$ is returned by the recursive call $\verb|mult x (y-1)|$.
The other Horn clauses are similarly generated from the then- and
else- branches of $\multacc$ and the assertion in $\main$.  Because
$\hmult$ has a satisfying substitution (i.e., solution)
$\thetamult=\{P \mapsto \lambda(x,y,r).x \times y=r,Q\mapsto
\lambda(x,y,a,r).x \times y + a = r\}$ for the predicate variables $P$
and $Q$, the correctness of the constraint generation
method~\cite{Unno2009} guarantees that the call-by-value evaluation of
$\main\ x\ y\ a$ never causes an assertion failure.

The previous Horn constraint solving methods, however, cannot solve
this kind of constraints that require a relational analysis of
multiple predicates.  To see why, recall the constraint in $\hmult$,
$\bot \Leftarrow P(x, y, r_1) \land Q(x, y, a, r_2) \land (r_1 +
a \neq r_2)$ that asserts the equivalence of $\mult$ and $\multacc$,
where a relational analysis of the two predicates $P$ and $Q$ is
required.  The previous methods, however, analyze each predicate $P$
and $Q$ separately, and therefore must infer nonlinear invariants
$r_1=x \times y$ and $r_2=x \times y + a$ respectively for the
predicate applications $P(x, y, r_1)$ and $Q(x, y, a, r_2)$ to
conclude $r_1+a=r_2$ by canceling $x \times y$, because $x$ and $y$
are the only shared arguments between $P(x, y, r_1)$ and $Q(x, y, a,
r_2)$.  The previous methods can only find solutions that are
expressible by efficiently decidable theories such as the
quantifier-free linear real (\QFLRA) and integer (\QFLIA)
arithmetic\footnote{See \url{http://smt-lib.org/} for the definition
of the theories.}, which are not powerful enough to express the above
nonlinear invariants and the solution $\thetamult$ of $\hmult$.

By contrast, our induction-based Horn constraint solving method can
directly and automatically show that the predicate applications $P(x,
y, r_1)$ and $Q(x, y, a, r_2)$ imply $r_1+a=r_2$ (i.e., $\hmult$ is
solvable), by simultaneously analyzing $P(x, y, r_1)$ and $Q(x, y, a,
r_2)$.  More precisely, our method interprets $P,Q$ as the predicates
inductively defined by the definite clauses (i.e., the clauses whose
head is a predicate application) of $\hmult$, and uses induction on
the derivation of $P(x, y, r_1)$ to prove the conjecture $\forall
x,y,r_1,a,r_2.\left( P(x, y, r_1) \land Q(x, y, a, r_2) \land (r_1 +
a \neq r_2) \Rightarrow \bot \right)$ represented by the goal clause
(i.e., the clause whose head is \emph{not} a predicate application) of
$\hmult$.  Section~\ref{sec:overview} gives an overview of our method
using this running example.

The use of Horn clause constraints, which can be considered as an
Intermediate Verification Language (IVL) common to Horn constraint
solvers and target languages, enables our method to verify relational
specifications across programs written in various paradigms.
Horn clause constraints can naturally axiomatize various advanced
language features including recursive functions that are partial
(i.e., possibly non-terminating), non-deterministic, higher-order,
exception-raising, and over non-inductively defined data types (recall
that $\hmult$ axiomatizes the partial functions $\mult$ and
$\multacc$, and see Section~\ref{sec:hornex} for more examples).
Furthermore, we can automate the axiomatization process by using
program logics such as Hoare logics for imperative and refinement type
systems~\cite{Xi1999,Unno2008,Rondon2008,Unno2009} for functional
programs.  In fact, researchers have developed and made available
tools such as SeaHorn~\cite{Gurfinkel2015} and
JayHorn~\cite{Kahsai2016}, respectively for translating C and Java
programs into Horn clause constraints.
In spite of the expressiveness, Horn clause constraints have a simpler
logical semantics compared to other popular IVLs like
Boogie~\cite{Barnett2006} and Why3~\cite{Bobot2011}.  This simplicity
enabled us to directly apply inductive theorem proving and made the
correctness proof and implementation easier.

In contrast to our induction method based on the logic of predicates
defined by Horn clause constraints, most state-of-the-art automated
inductive theorem provers such as ACL2s~\cite{Chamarthi2011},
Leon~\cite{Suter2011a}, Dafny~\cite{Leino2012},
Zeno~\cite{Sonnex2012}, HipSpec~\cite{Claessen2013}, and
CVC4~\cite{Reynolds2015} are based on logics of pure total functions
over inductively-defined data structures.  Consequently, the
axiomatization of advanced language features and specifications
becomes a non-straightforward process, which often requires users'
manual intervention and possibly has a negative effect on the
automation of induction later.  Thus, our approach complements
automated inductive theorem proving with the expressive power of Horn
clause constraints and, from the opposite point of view, opens the way
to leveraging the achievements of the automated induction community
into Horn constraint solving.

The rest of the paper is organized as follows. In
Section~\ref{sec:overview}, we will give an overview of our
induction-based Horn constraint solving method.
Section~\ref{sec:horn} formalizes Horn constraint solving problems and
shows examples of the reduction from various program verification
problems to Horn constraint solving problems.
Section~\ref{sec:induction} formalizes our constraint solving method
and proves its correctness.  Section~\ref{sec:exp} reports on our
prototype implementation based on the proposed method and the results
of preliminary experiments.  We compare our method with related work
in Section~\ref{sec:related} and conclude the paper with some remarks
on future work in Section~\ref{sec:conc}.

\section{Overview of Induction-Based Horn Constraint Solving Method}
\label{sec:overview}
In this section, we use the Horn constraint set $\hmult$ in
Section~\ref{sec:intro} as a running example to give an overview of
our induction-based Horn constraint solving method.  Our method
interprets the definite clauses (i.e., the clauses whose head is a
predicate application) of a given Horn constraint set as derivation
rules for predicate applications $P(\seq{t})$, which we
call \emph{atoms} henceforth.  For example, the definite clauses
$\dcmult \subseteq \hmult$ are interpreted as the following derivation
rules:
\begin{center}
  \begin{minipage}{0.4\hsize}
    \infrule
      {\models y = 0 \land r = 0}
      {P(x, y, r)}
  \end{minipage}
  \begin{minipage}{0.58\hsize}
    \infrule
      {P(x, y - 1, r - x) \andalso
        \models y \neq 0}
      {P(x, y, r)}
  \end{minipage}
\rulesp

  \begin{minipage}{0.4\hsize}
    \infrule
      {\models y = 0 \land a = r}
      {Q(x, y, a, r)}
  \end{minipage}
  \begin{minipage}{0.58\hsize}
    \infrule
      {Q(x, y - 1, a + x, r) \andalso
        \models y \neq 0}
      {Q(x, y, a, r)}
  \end{minipage}
\end{center}
Here, the heads of the clauses are changed into the uniform
representations $P(x,y,r)$ and $Q(x,y,a,r)$ of atoms over variables.
The above rules inductively define the least predicate interpretation
$\{P \mapsto \set{(x,y,r) \in \intset^3 \mid x \times y = r \land
y \geq 0},Q \mapsto \{(x,y,a,r) \in \intset^4 \mid x \times
y+a=r \land y \geq 0\}\}$ that satisfies the definite clauses
$\dcmult$.
It then follows that a given Horn constraint set has a solution if and
only if all the goal clauses (i.e., the clauses whose head
is \emph{not} an atom) are valid under the interpretation (see
Corollary~\ref{cor:leastmodel} for the proof).  Therefore, constraint
solving of $\hmult$ boils down to the validity checking of the goal
clause
\[
\forall x,y,r_1,a,r_2.\left( P(x, y, r_1) \land Q(x, y, a, r_2) \land (r_1+a \neq r_2) \Rightarrow \bot \right)
\]
under the least predicate interpretation for $P$ and $Q$.

\begin{figure}[t]
\infrule[Induct]
  {P(\seq{t}) \in A \andalso
   \set{\seq{y}}=\fvs{A} \cup \fvs{\phi} \\
   \seq{x} : \fresh \andalso
   \sigma = \set{\seq{y} \mapsto \seq{x}} \\
   \D; \Gamma \cup \set{\forall \seq{x}.\left( \left(P(\sigma\seq{t}) \prec P(\seq{t})\right) \land \bigwedge \sigma A \Rightarrow \neg(\sigma\phi)\right)}; A; \phi \vdash \bot}
  {\D; \Gamma; A; \phi \vdash \bot}
\rulesp

\infrule[Unfold]
  {P(\seq{t}) \in A \andalso
   \sigma=\set{\seq{x} \mapsto \seq{t}} \andalso
   \D; \Gamma; A \cup \sigma A'; \phi \land \sigma \phi' \vdash \bot \\
   (\foreach\ (P(\seq{x}) \Leftarrow A' \land \phi') \in \D)}
  {\D; \Gamma; A; \phi \vdash h}
\rulesp

\infrule[Apply$\bot$]
  {\forall \seq{x}.\left( \left(P(\seq{t'}) \prec P(\seq{t})\right) \land \bigwedge A' \Rightarrow \phi'\right) \in \Gamma \andalso
   \dom{\sigma} = \set{\seq{x}} \\
   P(\sigma \seq{t'}) \prec P(\seq{t}) \andalso
   \models \bigwedge A \land \phi \Rightarrow \bigwedge \sigma A' \\
   \D; \Gamma; A; \phi \land \sigma \phi' \vdash \bot}
  {\D; \Gamma; A; \phi \vdash \bot}
\rulesp

\infrule[Valid$\bot$]
  {\models \phi \Rightarrow \bot}
  {\D; \Gamma; A; \phi \vdash \bot}
\rulesp

\caption{A simplified version of the inference rules in Figure~\ref{fig:rule} for the judgment $\D; \Gamma; A; \phi \vdash \bot$.}
\label{fig:rule_simplified}
\end{figure}

To check the validity of such a conjecture, our method uses induction
on the derivation of atoms.
\begin{principle}[Induction on Derivations]
\label{principle:ind}
Let $\mathcal{P}$ be a property on derivations $D$ of atoms.  We then
have $\forall D. \mathcal{P}(D)$ if and only if $\forall
D.\left( \left(\forall D' \prec
D. \mathcal{P}(D')\right) \Rightarrow \mathcal{P}(D) \right)$, where
$D' \prec D$ represents that $D'$ is a strict sub-derivation of $D$.
\end{principle}
Formally, we propose an inductive proof system for deriving judgments
of the form $\D;\Gamma; A; \phi \vdash \bot$, where $\bot$ represents
the contradiction, $\phi$ represents a formula without atoms, $A$
represents a set of atoms, $\Gamma$ represents a set of induction
hypotheses and user-specified lemmas, and
$\D$ represents a set of definite clauses that define the least
predicate interpretation of the predicate variables in $\Gamma$ or
$A$.  Here, $\Gamma$, $A$, and $\phi$ are allowed to have common free
term variables.  The free term variables of a clause in $\D$ have the
scope within the clause, and are considered to be universally
quantified (see Section~\ref{sec:horn} for a formal account).
Intuitively, a judgment $\D;\Gamma; A; \phi \vdash \bot$ means that
under the least predicate interpretation induced by $\D$, the formula
$\bigwedge \Gamma \land \bigwedge A \land \phi \Rightarrow \bot$ is
valid.  For example, consider the following judgment $\Jmult$:
\[
\Jmult \triangleq \dcmult; \emptyset; \set{P(x, y, r_1), Q(x, y, a, r_2)}; (r_1 + a \neq r_2) \vdash \bot
\]
If $\Jmult$ is derivable, $P(x, y, r_1) \land Q(x, y, a, r_2) \land
(r_1+ a \neq r_2) \Rightarrow \bot$ is valid under the least predicate
interpretation induced by $\dcmult$, and hence $\hmult$ has a
solution.

The inference rules for the judgment $\D;\Gamma; A; \phi \vdash \bot$
are shown in Figure~\ref{fig:rule}.  The rules there, however, are too
general and formal for the purpose of providing an overview of the
idea.  Therefore, we defer a detailed explanation of the rules to
Section~\ref{sec:induction}, and here explain a simplified version
shown in Figure~\ref{fig:rule_simplified}, obtained from the complete
version by eliding some conditions and subtleties while retaining the
essence.
The rules are designed to exploit $\Gamma$ and $\D$ for iteratively
updating the current \emph{knowledge} represented by the formula
$\bigwedge A \land \phi$ until a contradiction is implied.
The first rule \rn{Induct} selects an atom $P(\seq{t}) \in A$ and
performs induction on the derivation of the atom by adding a new
induction hypothesis
$\forall \seq{x}.\left( \left(P(\sigma\seq{t}) \prec
P(\seq{t})\right) \land \bigwedge \sigma
A \Rightarrow \neg(\sigma\phi) \right)$ to $\Gamma$.  Here, a map
$\sigma$ is used to generalize the free term variables $\seq{y}$ that
occur in $A$ or $\phi$ (denoted by $\fvs{A} \cup \fvs{\phi}$) into
fresh variables $\seq{x}$, and $P(\sigma\seq{t}) \prec P(\seq{t})$
requires that the derivation of $P(\sigma\seq{t})$ is a strict
sub-derivation of that of $P(\seq{t})$.  The second rule \rn{Unfold}
selects an atom $P(\seq{t}) \in A$, performs a case analysis on the
last rule used to derive the atom, which is represented by a definite
clause in $\D$ of the form $P(\seq{x}) \Leftarrow A' \land \phi'$, and
updates the current knowledge $\bigwedge A \land \phi$ with $\bigwedge
(A \cup \sigma A') \land \phi \land \sigma\phi'$ for
$\sigma=\set{\seq{x} \mapsto \seq{t}}$.
The third rule \rn{Apply$\bot$} selects an induction hypothesis in
$\Gamma$, $\forall \seq{x}.\left( \left(P(\seq{t'}) \prec
P(\seq{t})\right) \land \bigwedge A' \Rightarrow \phi'\right)$, and
tries to find an instantiation $\sigma$ of the quantified variables
$\seq{x}$ such that
\begin{itemize}
\item the instantiated premise $\bigwedge \sigma A'$ of the hypothesis is implied by the current knowledge $\bigwedge A \land \phi$ and
\item the derivation of the atom $P(\sigma\seq{t'}) \in \sigma A'$ to which the hypothesis is being applied is a strict sub-derivation of that of the atom $P(\seq{t})$ on which the induction (that has introduced the hypothesis) has been performed.
\end{itemize}
If such a $\sigma$ is found, the rule updates the current knowledge
with $\bigwedge A \land \phi \land \sigma \phi'$.
The fourth rule \rn{Valid$\bot$} checks whether
$\phi \Rightarrow \bot$ is valid, and if it is the case, closes the
proof branch under consideration.

\begin{figure}[t]
\begin{center}
\infer[(\rn{Induct})]
  {\Jmult}
  {\infer[(\rn{Unfold})]
     {J_0}
     {\infer[(\rn{Unfold})]
        {J_1}
        {\infer[(\rn{Valid$\bot$})]{J_3}{} &
         \infer[(\rn{Valid$\bot$})]{J_4}{}} &
      \infer[(\rn{Unfold})]
        {J_2}
        {\infer[(\rn{Valid$\bot$})]{J_5}{} &
         \infer[(\rn{Apply$\bot$})]
           {J_6}
           {\infer[(\rn{Valid$\bot$})]{J_7}{}}}}}
\begin{align*}
J_0&\triangleq\dcmult; \Gammamult; \Amult_{\emptyset}; r_1 + a \neq r_2 \vdash \bot \\
J_1&\triangleq\dcmult; \Gammamult; \Amult_{\emptyset}; r_1 + a \neq r_2 \land y = 0 \land r_1 = 0 \vdash \bot \\
J_2&\triangleq\dcmult; \Gammamult; \Amult_P; r_1 + a \neq r_2 \land y \neq 0 \vdash \bot \\
J_3&\triangleq\dcmult; \Gammamult; \Amult_{\emptyset}; r_1 + a \neq r_2 \land y = 0 \land r_1 = 0 \land a = r_2 \vdash \bot \\
J_4&\triangleq\dcmult; \Gammamult; \Amult_Q; r_1 + a \neq r_2 \land y = 0 \land r_1 = 0 \land y \neq 0 \vdash \bot \\
J_5&\triangleq\dcmult; \Gammamult; \Amult_P; r_1 + a \neq r_2 \land y \neq 0 \land y = 0 \land a = r_2 \vdash \bot \\
J_6&\triangleq\dcmult; \Gammamult; \Amult_{PQ}; r_1 + a \neq r_2 \land y \neq 0 \vdash \bot \\
J_7&\triangleq\dcmult; \Gammamult; \Amult_{PQ}; r_1 + a \neq r_2 \land y \neq 0 \land r_1 + a = r_2 \vdash \bot \\
\Gammamult &\triangleq \{\forall x', y', r_1', a', r_2'.( (P(x', y', r_1') \prec P(x, y, r_1)) \land \\
& \qquad P(x', y', r_1') \land Q(x', y', a', r_2') \Rightarrow r_1' + a' = r_2') \} \\
\Amult_{\emptyset} &\triangleq \set{P(x, y, r_1), Q(x, y, a, r_2)} \\
\Amult_P &\triangleq \Amult_{\emptyset} \cup \{P(x, y - 1, r_1 - x)\} \\
\Amult_Q &\triangleq \Amult_{\emptyset} \cup \set{Q(x, y - 1, a + x, r_2)} \\
\Amult_{PQ} &\triangleq \Amult_P \cup \set{Q(x, y - 1, a + x, r_2)}
\end{align*}
\end{center}
\caption{The structure of an example derivation of $\Jmult$.}
\label{fig:deriv}
\end{figure}

Figure~\ref{fig:deriv} shows the structure (with side-conditions
omitted) of a derivation of the judgment $\Jmult$, constructed by
using the simplified version of the inference rules.  We below explain
how the derivation is constructed.  First, by performing induction on
the atom $P(x, y, r_1)$ in $\Jmult$ using the rule \rn{Induct}, we
obtain the subgoal $J_0$ with an induction hypothesis $\Gammamult$
added.  We then apply \rn{Unfold} to perform a case analysis on the
last rule used to derive the atom $P(x, y, r_1)$, and obtain the two
subgoals $J_1$ and $J_2$ as the result, because $\Dmult$ has two
clauses with the head that matches with the atom $P(x, y, r_1)$.  The
two subgoals are then discharged as follows.
\begin{itemize}
\item {\bf Subgoal 1:} By performing a case analysis on $Q(x, y, a, r_2)$ in $J_1$ using the rule \rn{Unfold}, we further get two subgoals $J_3$ and $J_4$.  Both $J_3$ and $J_4$ are proved by
the rule \rn{Valid$\bot$} because $\models \phi_3 \Rightarrow \bot$
and $\models \phi_4 \Rightarrow \bot$ hold.
\item {\bf Subgoal 2:} By performing a case analysis on $Q(x, y, a, r_2)$ in $J_2$ using the rule \rn{Unfold}, we obtain two subgoals $J_5$ and $J_6$.  $J_5$ is proved by the rule \rn{Valid$\bot$} because $\models \phi_5 \Rightarrow \bot$ holds.  We then apply the induction hypothesis in $\Gammamult$,
\begin{align*}
&\forall x', y', r_1', a', r_2'.( (P(x', y', r_1') \prec P(x, y, r_1)) \land \\
&\qquad P(x', y', r_1') \land Q(x', y', a', r_2') \Rightarrow r_1' + a' = r_2')
\end{align*}
to the atom $P(x, y - 1, r_1 - x) \in \Amult_{PQ}$ in $J_6$ using the
rule \rn{Apply$\bot$}.  Note that this can be done by using the
quantifier instantiation $\sigma$ defined by
\[
\set{x' \mapsto x, y' \mapsto y-1, r_1' \mapsto r_1-x, a' \mapsto a+x, r_2' \mapsto r_2},
\]
because $\sigma(P(x', y', r_1')) = P(x, y - 1, r_1 - x) \prec P(x, y,
r_1)$ holds and the premise $\sigma(P(x', y', r_1') \land Q(x', y',
a', r_2'))=P(x, y-1, r_1-x) \land Q(x, y-1, a+x, r_2)$ of the
instantiated hypothesis is implied by the current knowledge
$\bigwedge \Amult_{PQ} \land r_1 + a \neq r_2 \land y \neq 0$.  We
thus obtain the subgoal $J_7$, where the $\phi$-part of the knowledge
is updated to
\begin{align*}
& r_1 + a \neq r_2 \land y \neq 0 \land \sigma(r_1' + a' = r_2') \\
&\equiv r_1 + a \neq r_2 \land y \neq 0 \land (r_1 - x) + (a + x) = r_2 \\
&\equiv r_1 + a \neq r_2 \land y \neq 0 \land r_1 + a = r_2.
\end{align*}
Because this implies a contradiction, $J_7$ is finally proved by using
the rule \rn{Valid$\bot$}.
\end{itemize}

To automate proof search in the system, this paper proposes an
SMT-based technique: we use an off-the-shelf SMT solver for checking
whether the current knowledge implies a contradiction (in the
rule \rn{Valid$\bot$}) and whether there is an element of $\Gamma$
that can be used to update the current knowledge, by finding a
quantifier instantiation $\sigma$ (in the rule \rn{Apply$\bot$}).  The
use of an SMT solver provides our method with efficient and powerful
reasoning about data structures supported by SMT, including integers,
real numbers, arrays, algebraic data types (ADTs), and uninterpreted
functions.  There, however, still remain two challenges to be
addressed towards full automation:
\begin{enumerate}

\item {\bf Challenge:}
How to check (in the rule \rn{Apply$\bot$}) the strict sub-derivation
relation $P(\seq{t}') \prec P(\seq{t})$ between the derivation of an
atom $P(\seq{t}')$ to which an induction hypothesis in $\Gamma$ is
being applied, and the derivation of the atom $P(\seq{t})$ on which
the induction has been performed?  Recall that in the above derivation
of $\Jmult$, we needed to check $P(x, y - 1, r_1 - x) \prec P(x, y,
r_1)$ before applying the rule \rn{Apply$\bot$} to $J_6$.

{\bf Our solution:} The formalized rules presented in
Section~\ref{sec:induction} keep sufficient information for checking
the strict sub-derivation relation: we associate each induction
hypothesis in $\Gamma$ with an \emph{induction identifier} $\alpha$,
and each atom in $A$ with a set $M$ of identifiers indicating which
hypotheses can be applied to the atom.  Further details are explained
in Section~\ref{sec:induction}.
\item {\bf Challenge:} In which order should the rules be applied?

{\bf Our solution:} This paper adopts the following simple strategy,
and evaluates it by experiments.
\begin{itemize}
\item Repeatedly apply the rule \rn{Apply$\bot$} if possible, until no new knowledge is obtained.  (Even if the rule does not apply, applications of \rn{Induct} and \rn{Unfold} explained in the following items may make \rn{Apply$\bot$} applicable.)
\item If the current knowledge cannot be updated by using the rule \rn{Apply$\bot$}, select some atom from $A$ in a breadth-first manner, and apply the rule \rn{Induct} to the atom.
\item Apply the rule \rn{Unfold} whenever \rn{Induct} is applied.
\item Try to apply the rule \rn{Valid$\bot$} whenever the $\phi$-part of the knowledge is updated.
\end{itemize}
\end{enumerate}

\section{Horn Constraint Solving Problems}
\label{sec:horn}
This section formalizes Horn constraint solving problems and proves
the correctness of our reduction from Horn constraint solving to
inductive theorem proving in Corollary~\ref{cor:leastmodel}.
Section~\ref{sec:hornex} also shows example Horn constraint solving
problems reduced from (relational) verification problems of programs
that use various advanced language features, including higher-order
functions and exceptions.

The syntax of Horn Clause Constraint Sets (HCCSs) over the theory
$\LINT$ of quantifier-free linear integer arithmetic is defined by
\begin{align*}
(\mbox{HCCSs})\ \h &\coleq \set{\hc_1, \ldots, \hc_m}\\
(\mbox{Horn clauses})\ \hc &\coleq h \Leftarrow b \\
(\mbox{heads})\ h &\coleq P(\seq{t}) \mid \bot \\
(\mbox{bodies})\ b &\coleq P_1(\seq{t}_1) \land \cdots \land P_m(\seq{t}_m) \land \phi \\
(\LINT\mbox{-formulas})\ \phi &\coleq t_1 \leq t_2 \mid \top \mid \bot \mid \neg \phi \mid \phi_1 \land \phi_2 \mid \phi_1 \lor \phi_2 \\
(\LINT\mbox{-terms})\ t &\coleq x \mid n \mid t_1 + t_2
\end{align*}
Here, the meta-variables $P$ and $x$ respectively represent predicate
variables and term variables, and $\seq{t}$ represents a sequence of
terms $t_1,\dots,t_m$.  We write the arity of $P$ as $\arity{P}$.
Note that, in the syntax of $\LINT$-formulas, linear inequalities $t_1
\leq t_2$ can be used to encode $t_1 < t_2$, $t_1 = t_2$, and $t_1
\neq t_2$.  For example, $t_1 < t_2$ is encoded as $t_1 + 1 \leq
t_2$.  The formula $\top$ (resp. $\bot$) represents the tautology
(resp. the contradiction).  We here restrict ourselves to $\LINT$ for
simplicity, although our induction-based Horn constraint solving
method formalized in Section~\ref{sec:induction} supports constraints
over whatever background theories supported by the underlying SMT
solver, including the theories of nonlinear arithmetics, algebraic
data structures, and uninterpreted function symbols as shown in
Section~\ref{sec:hornex}.

\subsection{Notation for HCCSs}
\label{sec:notation}

A \emph{Horn clause constraint set} $\h$ is a finite set
$\set{\hc_1, \ldots, \hc_m}$ of Horn clauses.  A \emph{Horn clause}
$h \Leftarrow b$ consists of a head $h$ and a body $b$.  We often
abbreviate a Horn clause $h \Leftarrow \top$ as $h$.
We write $\pvs{\hc}$ for the set of the predicate variables that occur
in $\hc$ and define $\pvs{\h}=\bigcup_{\hc \in \h}\pvs{\hc}$.
Similarly, we write $\fvs{\hc}$ for the set of the term variables in
$\hc$ and define $\fvs{\h}=\bigcup_{\hc \in \h}\fvs{\hc}$.  We assume
that for any $\hc_1,\hc_2 \in \h$, $\hc_1 \neq \hc_2$ implies
$\fvs{\hc_1}
\cap \fvs{\hc_2}=\emptyset$.  We write $\h \restriction_P$ for the set
of Horn clauses in $\h$ of the form $P(\seq{t}) \Leftarrow b$.  We
define $\h(P)=\lambda \seq{x}.\exists \seq{y}.\bigvee_{i=1}^m
(b_i \land \seq{x}=\seq{t}_i)$ if
$\h \restriction_P=\set{P(\seq{t}_i) \Leftarrow
b_i}_{i \in \set{1,\dots,m}}$ where
$\set{\seq{y}}=\fvs{\h \restriction_P}$ and
$\set{\seq{x}} \cap \set{\seq{y}}=\emptyset$.  By using $\h(P)$, an
HCCS $\h$ is logically interpreted as the formula
\[
\bigwedge_{P \in \pvs{\h}} \forall \seq{x}_P.\left( \h(P)(\seq{x}_P) \Rightarrow P(\seq{x}_P) \right),
\]
where $\seq{x}_P=x_1,\dots,x_{\arity{P}}$.
A Horn clause with the head of the form $P(\seq{t})$ (resp. $\bot$) is
called a \emph{definite} clause (resp. a \emph{goal} clause).  We
write $\defof{\h}$ (resp. $\goalof{\h}$) for the subset of $\h$
consisting of only the definite (resp. goal) clauses.  Note that $\h
= \defof{\h}
\cup \goalof{\h}$ and $\defof{\h} \cap \goalof{\h}=\emptyset$.

\subsection{Predicate Interpretation}

A \emph{predicate interpretation} $\rho$ for an HCCS $\h$ is a map
from each predicate variable $P \in \pvs{\h}$ to a subset of
$\intset^{\arity{P}}$.  We write the domain of $\rho$ as $\dom{\rho}$.
We write $\rho_1 \subseteq\rho_2$ if $\rho_1(P) \subseteq \rho_2(P)$
for all $P \in \pvs{\h}$.  We call an interpretation $\rho$
a \emph{solution of $\h$} and write $\rho \models \h$ if
$\rho \models \hc$ holds for all $\hc \in \h$.  For example,
$\rhomult= \{P \mapsto \set{ (x, y, r) \in \intset^3 \mid x \times y =
r }, Q \mapsto \set{ (x, y, a, r) \in \intset^4 \mid x \times y + a =
r}\}$ is a solution of the HCCS $\hmult$ in Section~\ref{sec:intro}.

\begin{definition}[Horn Constraint Solving Problems]
A \emph{Horn constraint solving problem} is the problem of checking
whether a given HCCS $\h$ has a solution.
\end{definition}

We now establish the reduction from Horn constraint solving to
inductive theorem proving, which is the foundation of our
induction-based Horn constraint solving method.

The definite clauses $\defof{\h}$ are considered to inductively define
the \emph{least predicate interpretation} for $\h$ as the least
fixed-point $\leastmodel$ of the following function on predicate
interpretations.
\[
F_{\h}(\rho) = \left\{
  P \mapsto \left\{(\seq{x}) \in \intset^{\arity{P}} \relmiddle{|}
                   \rho \models \h(P)(\seq{x})\right\} \relmiddle{|}
  P \in \dom{\rho}
  \right\}
\]
Because $F_{\h}$ is continuous~\cite{Jaffar1994}, the least
fixed-point $\leastmodel$ of $F_{\h}$ exists.  Furthermore, we can
express it as
$$
\leastmodel=\bigcup_{i \in \natset} F_{\h}^i(\set{P \mapsto \emptyset \mid P \in \pvs{\h}}),
$$
where $F_{\h}^i$ means $i$-times application of $F_{\h}$.
It immediately follows that the least predicate interpretation
$\leastmodel$ is a solution of $\defof{\h}$ because any fixed-point of
$F_{\h}$ is a solution of $\defof{\h}$.  Furthermore, $\leastmodel$ is
the least solution.  Formally, we can prove the following proposition.
\begin{proposition}
\label{prp:leastmodel}
$\leastmodel \models \defof{\h}$ holds, and for all $\rho$ such that
$\rho \models \defof{\h}$, $\leastmodel \subseteq \rho$ holds.
\end{proposition}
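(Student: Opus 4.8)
The proof splits into the two claimed facts about $\leastmodel = \mu F_{\h}$, and both follow from standard least-fixed-point reasoning once we unfold the definitions.

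\paragraph{Plan.}
First I would show $\leastmodel \models \defof{\h}$. The key observation is that $\leastmodel$, being the least fixed-point of $F_{\h}$, satisfies $F_{\h}(\leastmodel) = \leastmodel$; that is, for every $P \in \pvs{\h}$ and every tuple $\seq{x} \in \intset^{\arity{P}}$, we have $\seq{x} \in \leastmodel(P)$ iff $\leastmodel \models \h(P)(\seq{x})$. Unfolding the definition $\h(P) = \lambda \seq{x}.\exists \seq{y}.\bigvee_{i=1}^m (b_i \land \seq{x} = \seq{t}_i)$ for $\h\restriction_P = \set{P(\seq{t}_i) \Leftarrow b_i}$, this means precisely that a definite clause $P(\seq{t}_i) \Leftarrow b_i$ is satisfied by $\leastmodel$: whenever $\leastmodel \models b_i$ under some assignment to the free variables, the witnessed tuple lies in $\leastmodel(P)$. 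Since this holds for every definite clause, $\leastmodel \models \defof{\h}$. The only care needed here is the routine translation between the clause-wise reading of $\rho \models \hc$ and the aggregated formula $\h(P)$, plus the disjointness assumption $\fvs{\hc_1} \cap \fvs{\hc_2} = \emptyset$ which lets us treat the $b_i$'s independently.

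\paragraph{Minimality.}
Next I would show that for any $\rho$ with $\rho \models \defof{\h}$, we have $\leastmodel \subseteq \rho$. The natural route is to use the explicit iterative characterization $\leastmodel = \bigcup_{i \in \natset} F_{\h}^i(\bot_\rho)$, where $\bot_\rho = \set{P \mapsto \emptyset \mid P \in \pvs{\h}}$, and prove by induction on $i$ that $F_{\h}^i(\bot_\rho) \subseteq \rho$. The base case $i = 0$ is immediate since $\emptyset \subseteq \rho(P)$ for all $P$. For the inductive step, assume $F_{\h}^i(\bot_\rho) \subseteq \rho$; since $F_{\h}$ is monotone (clear from its definition, as $\h(P)$ is a positive formula in the predicate variables), $F_{\h}^{i+1}(\bot_\rho) = F_{\h}(F_{\h}^i(\bot_\rho)) \subseteq F_{\h}(\rho)$. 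It then suffices to observe $F_{\h}(\rho) \subseteq \rho$: this is exactly the statement that $\rho$ satisfies the logical interpretation of $\defof{\h}$, i.e., $\rho \models \defof{\h}$ implies that for each $P$, $\set{\seq{x} \mid \rho \models \h(P)(\seq{x})} \subseteq \rho(P)$. Taking the union over all $i$ gives $\leastmodel \subseteq \rho$. Alternatively, one can invoke the Knaster–Tarski/Kleene characterization directly: $\rho$ being a pre-fixed-point of $F_{\h}$ ($F_{\h}(\rho) \subseteq \rho$) implies $\mu F_{\h} \subseteq \rho$ by definition of the least fixed-point, bypassing the explicit induction.

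\paragraph{Main obstacle.}
The only genuinely fiddly step is the bookkeeping that equates "$\rho \models \defof{\h}$" with "$F_{\h}(\rho) \subseteq \rho$" and with "$\rho$ satisfies $\bigwedge_{P} \forall \seq{x}_P.(\h(P)(\seq{x}_P) \Rightarrow P(\seq{x}_P))$" — the three are definitionally the same modulo unfolding, but one must be careful with the existential quantification over $\fvs{\h\restriction_P}$ and the renaming of free variables into the canonical $\seq{x}_P$. Monotonicity and continuity of $F_{\h}$ we are told we may assume (continuity is cited to \cite{Jaffar1994}), so the fixed-point machinery itself poses no difficulty. No clever construction is required; the proposition is essentially "the least fixed-point is the least pre-fixed-point," specialized to this concrete $F_{\h}$.
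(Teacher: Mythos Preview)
Your argument is correct and is exactly the standard least-fixed-point reasoning one would expect. Note that the paper does not actually give a proof of this proposition: it only offers the one-line justification preceding the statement (``any fixed-point of $F_{\h}$ is a solution of $\defof{\h}$; furthermore, $\leastmodel$ is the least solution'') and cites \cite{Jaffar1994} for continuity of $F_{\h}$. Your write-up simply unpacks that remark---showing that $\rho \models \defof{\h}$ is equivalent to $F_{\h}(\rho) \subseteq \rho$, and then invoking either the Kleene iteration or Knaster--Tarski---so there is no divergence in approach, only in level of detail.
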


On the other hand, the goal clauses $\goalof{\h}$ are considered as
specifications of the least predicate interpretation $\leastmodel$.
As a corollary of Proposition~\ref{prp:leastmodel}, it follows that $\h$ has
a solution if and only if $\leastmodel$ satisfies the specifications
$\goalof{\h}$.
\begin{corollary}
\label{cor:leastmodel}
$\rho \models \h$ for some $\rho$ if and only if
$\leastmodel \models \goalof{\h}$
\end{corollary}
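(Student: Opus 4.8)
The plan is to prove Corollary~\ref{cor:leastmodel} directly from Proposition~\ref{prp:leastmodel}, using the decomposition $\h = \defof{\h} \cup \goalof{\h}$ together with the monotonicity of goal-clause satisfaction in the predicate interpretation.

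\medskip

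First, for the ``only if'' direction, suppose $\rho \models \h$ for some predicate interpretation $\rho$. Then in particular $\rho \models \defof{\h}$, so by Proposition~\ref{prp:leastmodel} we have $\leastmodel \subseteq \rho$. It remains to show $\leastmodel \models \goalof{\h}$. Here I would appeal to the observation that a goal clause $\bot \Leftarrow P_1(\seq{t}_1) \land \cdots \land P_m(\seq{t}_m) \land \phi$ is \emph{anti-monotone} in the interpretation: if it holds under a larger interpretation $\rho$ (i.e.\ no assignment to the free variables satisfies all the body atoms together with $\phi$), then it holds under any smaller interpretation $\rho' \subseteq \rho$, since shrinking the $\rho(P_i)$ can only remove satisfying assignments of the body. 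Applying this with $\rho' = \leastmodel$ and using $\rho \models \goalof{\h}$ gives $\leastmodel \models \goalof{\h}$.

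\medskip

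Second, for the ``if'' direction, suppose $\leastmodel \models \goalof{\h}$. By Proposition~\ref{prp:leastmodel} we already know $\leastmodel \models \defof{\h}$. Since $\h = \defof{\h} \cup \goalof{\h}$ and $\rho \models \h$ iff $\rho \models \hc$ for every $\hc \in \h$, we conclude $\leastmodel \models \h$, so $\h$ has a solution (namely $\leastmodel$ itself). Thus the witness $\rho$ in the statement can always be taken to be the least interpretation.

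\medskip

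I expect the only nontrivial point to be the anti-monotonicity of goal-clause satisfaction used in the ``only if'' direction; everything else is bookkeeping with the definitions of $\models$ and the set decomposition of $\h$. That anti-monotonicity is itself straightforward — it follows by inspecting the definition of $\rho \models \hc$ for a clause with head $\bot$ and noting that each body atom $P_i(\seq{t}_i)$ becomes harder to satisfy as $\rho(P_i)$ shrinks — but it is the one step where the direction of the inclusion $\leastmodel \subseteq \rho$ is actually exploited, so I would state it explicitly (perhaps as a small remark or inline lemma) before assembling the two directions.
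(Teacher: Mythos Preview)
Your proof is correct and is exactly the natural expansion of what the paper does: the paper states the result as an immediate corollary of Proposition~\ref{prp:leastmodel} without writing out a proof, and your argument spells out precisely how both parts of that proposition (that $\leastmodel \models \defof{\h}$, and that $\leastmodel$ is least among all solutions of $\defof{\h}$) combine with the anti-monotonicity of goal clauses to give both directions.
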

In Section~\ref{sec:induction}, we present an induction-based method
for proving $\leastmodel \models \goalof{\h}$.

\subsection{Examples Reduced from Program Verification Problems}
\label{sec:hornex}

This section shows example Horn constraint solving problems reduced
from (relational) verification problems of programs that use advanced
language features such as algebraic data structures, higher-order
functions, and exceptions.
The reduction used in this section is mostly based on an existing Horn
constraint generation method~\cite{Unno2009} for an ML-like (i.e.,
call-by-value, statically-typed, and higher-order) functional
language.  The method can be used to reduce a given assertion safety
verification problem defined below into a Horn constraint solving
problem.
\begin{definition}[Assertion Safety Verification Problems]
\label{def:prob}
An assertion safety verification problem of a given functional
program, with a special function $\main$ of the ordinary ML type
$\TINT \to \dots \to \TINT \to \TUNIT$, is the problem of deciding
whether
\[\forall n_1,\dots,n_m \in \intset.\ \main\ n_1\dots n_m\ \notreds\ \ASSERT\ \FALSE,\]
where $\longrightarrow$ is the one-step evaluation relation.  We call
the program \emph{safe} if this property holds, and \emph{unsafe}
otherwise.
\end{definition}
The constraint generation method is based on refinement
types~\cite{Xi1999}, which are used internally to express value
dependent inductive invariants and specifications of the program.  The
following theorem states the soundness of the reduction.
\begin{theorem}[Soundness~\cite{Unno2009}]
Let $\h$ be the HCCS generated from a program $D$.  If there exists a
solution $\rho$ of $\h$, then $D$ is safe.
\label{verify_theo}
\end{theorem}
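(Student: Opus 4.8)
The plan is to route the argument through the refinement type system of \cite{Unno2009} on which the constraint generation is based, and then invoke its type soundness. First I would recall that the generation procedure assigns to each subterm of $D$ a refinement type \emph{template} whose refinement predicates are exactly the predicate variables occurring in $\h$, and emits $\h$ as precisely the set of subtyping and well-formedness obligations that must hold for the template derivation to constitute a valid refinement typing of $D$. Thus the first step is a \emph{completeness-of-generation} lemma: from any solution $\rho \models \h$ one reads back a closed refinement typing derivation $\vdash_\rho D$, obtained by substituting $\rho(P)$ for each predicate variable $P$ throughout the template derivation. Each former Horn-clause obligation then becomes a valid implication over the background theory, so the subtyping and well-formedness rules genuinely apply and the template derivation becomes a bona fide typing derivation.

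The second step is the standard metatheory of the refinement type system. I would establish (i) a substitution lemma and preservation (subject reduction): if $\vdash_\rho e : \tau$ and $e \longrightarrow e'$, then $\vdash_\rho e' : \tau$. For the call-by-value higher-order setting this requires the usual case analysis over $\beta$-reduction, conditionals, and the \texttt{let rec} construct, whose soundness rests on the fact that the generated clauses characterize a sound over-approximation (note: an \emph{arbitrary} solution $\rho$, not necessarily the least one) of each recursively defined function's input/output relation, so the fixed-point semantics of the functions is contained in the $\rho$-interpretation. I would also establish (ii) that $\ASSERT\ \FALSE$ is \emph{not} typable: the typing rule for $\ASSERT\ v$ requires the refinement context to entail that $v$ equals $\TRUE$, which is unsatisfiable when $v = \FALSE$, so no derivation $\vdash_\rho \ASSERT\ \FALSE : \TUNIT$ exists.

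Finally I would combine these. From $\vdash_\rho D$ and the declared type $\TINT \to \dots \to \TINT \to \TUNIT$ of $\main$, for any $n_1,\dots,n_m \in \intset$ the application $\main\ n_1 \cdots n_m$ is typable at $\TUNIT$. By preservation, every term reachable from $\main\ n_1 \cdots n_m$ is typable at $\TUNIT$; since $\ASSERT\ \FALSE$ is not typable, we conclude $\main\ n_1 \cdots n_m\ \notreds\ \ASSERT\ \FALSE$, i.e.\ $D$ is safe.

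I expect the main obstacle to be the first step: making precise the correspondence between a \emph{semantic} solution $\rho$, which maps each $P$ to an arbitrary subset of $\intset^{\arity{P}}$, and a \emph{syntactic} refinement typing derivation, in which refinements are formulas rather than arbitrary sets. This gap is closed in \cite{Unno2009} either by formulating a semantic/set-theoretic refinement type system in which refinements denote sets directly, or by a relative-completeness hypothesis on the refinement logic so that each $\rho(P)$ is definable; the present argument inherits that treatment. The subject-reduction analysis, while otherwise routine, also demands care for possibly non-terminating recursive functions, since the invariant being preserved is soundness of the over-approximating $\rho$-interpretation rather than any termination-sensitive property.
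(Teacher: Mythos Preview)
The paper does not prove this theorem at all: it is stated with attribution to \cite{Unno2009} and used as a black box. There is no ``paper's own proof'' to compare against; the present paper's contribution begins only after this reduction is taken for granted.

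Your sketch is a reasonable reconstruction of the standard soundness argument for refinement-type-based constraint generation (solution $\Rightarrow$ typing derivation $\Rightarrow$ safety via subject reduction and untypability of $\ASSERT\ \FALSE$), and is in line with how such results are established in the cited work. The semantic-vs.-syntactic refinement issue you flag is real and is handled in that literature exactly as you suggest, via a semantic formulation of refinements. But for the purposes of reviewing this paper, the correct observation is simply that the theorem is quoted, not proved, here.
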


We now show example Horn constraints generated by the method.  The
partial recursive functions shown in Section~\ref{sec:intro} are
automatically axiomatized using refinement types as follows.
\begin{example}
\label{ex:mult}
Recall the program $\Dmult$ in Section~\ref{sec:intro}.  The
constraint generation method first prepares the following refinement
type templates for the functions in $\Dmult$.
\[
\begin{array}{rl}
\mathtt{mult}:& (x: \TINT) \to (y: \TINT) \to \set{r: \TINT \mid P(x, y, r)} \\
\mathtt{mult\_acc}:& (x: \TINT) \to (y: \TINT) \to (a: \TINT) \to \\
                   & \set{r: \TINT \mid Q(x, y, a, r)}
\end{array}
\]
Here, the predicate variable $P$ (resp. $Q$) represents an inductive
invariant among the arguments and the return value of the function
$\mult$ (resp. $\multacc$).  The constraint generation method then
type-checks the program against the type templates, and obtains the
Horn constraint set $\hmult$ in Section~\ref{sec:intro}, which has a
solution if and only if the program is typable under a refinement type
system.  The refinement type system guarantees that if a given program
is typable, the evaluation of $\verb|main|\ n$ never causes an
assertion failure for any integer $n$.
\qed
\end{example}

\begin{example}
Consider the following program.
\begin{alltt}
let rec sum n =
  if n < 0 then n + sum (n + 1)
  else if n = 0 then 0 else n + sum (n - 1)
let rec sum_acc n a =
  if n < 0 then sum_acc (n + 1) (a + n)
  else if n = 0 then a else sum_acc (n - 1) (a + n)
let main n a = assert(sum n + a = sum_acc n a)
\end{alltt}
In a similar manner to Example~\ref{ex:mult}, we obtain the following
Horn constraint set:
\[
\left\{
\begin{array}{l}
P(0, 0), \\
P(x, r + x) \Leftarrow P(x + 1, r) \land x < 0, \\
P(x, r + x) \Leftarrow P(x - 1, r) \land x > 0, \\
Q(0, a, a), \\
Q(x, a, r) \Leftarrow Q(x + 1, a + x, r) \land x < 0, \\
Q(x, a, r) \Leftarrow Q(x - 1, a + x, r) \land x > 0, \\
\bot \Leftarrow P(x,r_1) \land Q(x,a,r_2) \land r_1+a \neq r_2
\end{array}
\right\}
\]
Here, the predicate variable $P$ (resp. $Q$) represents an inductive
invariant among the arguments and the return value of the function
$\SUM$ (resp. $\SUMACC$).  Here, suppose that the main function is
replaced by
\begin{alltt}
let main n =
  if n >= 0 then assert (2 * sum n = n * (n + 1))
\end{alltt}
We then obtain the following goal clause over the nonlinear integer
arithmetic instead:
\[
\begin{array}{l}
\bot \Leftarrow P(x,r) \land x \geq 0 \land 2 \times r \neq x \times (x + 1)
\end{array}
\]
\qed
\end{example}

The method can automatically axiomatize complex recursive functions on
integers.
\begin{example}
Consider the following program with complex recursion.
\begin{alltt}
let rec mc91 x =
  if x > 100 then x - 10 else mc91 (mc91 (x + 11))
let main x = if x <= 101 then assert(mc91 x = 91)
\end{alltt}
By using the refinement type template
\[
\begin{array}{rl}
\mathtt{mc91}:& (x: \TINT) \to \set{r: \TINT \mid P(x, r)} \\
\end{array}
\]
the constraint generation method returns the following Horn constraint
set:
\[
\left\{
\begin{array}{l}
P(x, x - 10) \Leftarrow x > 100, \\
P(x, s) \Leftarrow P(x + 11, r) \land P(r, s) \land x \leq 100, \\
\bot \Leftarrow P(x,r) \land x \leq 101 \land r \neq 91
\end{array}
\right\}
\]
Here, the predicate variable $P$ represents an inductive invariant
among the arguments and the return value of the function $\MCC$.
\qed
\end{example}

Our method can also handle recursive functions on non-inductively
defined data types such as real numbers.
\begin{example}
Consider the following program that models a dynamical system from
\cite{Kupferschmid2011}.
\begin{alltt}
let rec dyn_sys vc =
  let fa = 0.5418 *. vc *. vc in (* the force control *)
  let fr = 1000. -. fa in
  let ac = 0.0005 *. fr in
  let vc' = vc +. ac in
  assert (vc' < 49.61); (* the safety velocity *)
  dyn_sys vc'
let main () = dyn_sys 0. (* the initial velocity *)
\end{alltt}
By using the refinement type template
\[
\begin{array}{rl}
\mathtt{dyn\_sys}:& \set{x: \TREAL \mid P(x)} \to \TUNIT \\
\end{array}
\]
the constraint generation method returns the following Horn constraint
set:
\[
\left\{
\begin{array}{l}
  P(\mathit{vc}') \Leftarrow P(\mathit{vc}) \land
  \mathit{fa} = 0.5418 \times \mathit{vc} \times \mathit{vc} \land
  \mathit{fr} = 1000 - \mathit{fa} \land \\
  \qquad\qquad\ \mathit{ac} = 0.0005 \times \mathit{fr} \land
  \mathit{vc}' = \mathit{vc} + \mathit{ac} \land
  \mathit{vc}' < 49.61, \\
  P(0), \\
  \bot \Leftarrow P(\mathit{vc}) \land
  \mathit{fa} = 0.5418 \times \mathit{vc} \times \mathit{vc} \land
  \mathit{fr} = 1000 - \mathit{fa} \land \\
  \qquad\ \mathit{ac} = 0.0005 \times \mathit{fr} \land
  \mathit{vc}' = \mathit{vc} + \mathit{ac} \land
  \mathit{vc}' \geq 49.61
\end{array}
\right\}
\]
Here, the predicate variable $P$ represents an inductive invariant on
the argument of the function $\mathtt{dyn\_sys}$.
\qed
\end{example}

The constraint generation method can handle functional programs that
manipulate user-defined algebraic data structures.
\begin{example}
Consider the following program that manipulates lists.
\begin{alltt}
type list = Nil | Cons of int * list

let rec append l ys = match l with
  | Nil -> ys
  | Cons(x, xs) -> Cons(x, append xs ys)
let rec drop n l = match l with
  | Nil -> Nil
  | Cons(x, xs) ->
    if n = 0 then Cons(x, xs) else drop (n - 1) xs
let rec take n l = match l with
  | Nil -> Nil
  | Cons(x, xs) ->
    if n = 0 then Nil else Cons(x, take (n - 1) xs)
let main n xs =
  assert(append (take n xs) (drop n xs) = xs)
\end{alltt}
By using the refinement type templates
\[
\begin{array}{rl}
\mathtt{append}:& (x: \TLIST) \to (y: \TLIST) \to \set{r: \TLIST \mid P(x, y, r)} \\
\mathtt{drop}:& (x: \TINT) \to (y: \TLIST) \to \set{r: \TLIST \mid Q(x, y, r)} \\
\mathtt{take}:& (x: \TINT) \to (y: \TLIST) \to \set{r: \TLIST \mid R(x, y, r)}
\end{array}
\]
the constraint generation method returns the following Horn constraint
set over the theory of algebraic data structures:
\[
\left\{
\begin{array}{l}
P(\NIL, l_2, l_2), \\
P(\CONS(x,l), l_2, \CONS(x,r)) \Leftarrow P(l,l_2,r), \\
Q(n,\NIL,\NIL), \\
Q(n,\CONS(x,l'),\CONS(x,l')) \Leftarrow n = 0, \\
Q(n,\CONS(x,l'),r) \Leftarrow Q(n-1,l',r) \land n \neq 0, \\
R(n,\NIL,\NIL), \\
R(n,\CONS(x,l'),\NIL) \Leftarrow n = 0, \\
R(n,\CONS(x,l'),\CONS(x,r)) \Leftarrow Q(n-1,l',r) \land n \neq 0, \\
\bot \Leftarrow P(n,l,r_1) \land Q(n,l,r_2) \land R(r_1,r_2,r) \land r \neq l
\end{array}
\right\}
\]
\qed
\end{example}

The method can also axiomatize higher-order functions into Horn clause
constraints automatically.
\begin{example}
Consider the following higher-order program.
\begin{alltt}
type list = Nil | Cons of int * list

let rec sum_list l = match l with
  | Nil -> 0
  | Cons(x, xs) -> x + sum_list xs
let rec fold_left f s l = match l with
  | Nil -> s
  | Cons(x, xs) -> fold_left f (f s x) xs
let plus x y = x + y
let main l = assert(sum_list l = fold_left plus 0 l)
\end{alltt}
By using the refinement type templates
\[ 
\begin{array}{rl}
\mathtt{sum\_list}:& (x: \TLIST) \to \set{r: \TINT \mid P(x, r)} \\
\mathtt{fold\_left}:& (f:(a: \TINT) \to (b: \TINT) \to \set{c:\TINT \mid Q(a,b,c)}) \\
                    &\!\!\to (x:\TINT) \to (y:\TLIST) \to \set{z:\TINT \mid R(x,y,z)} \\
\mathtt{plus}:& (x: \TINT) \to (y: \TLIST) \to \set{r: \TLIST \mid S(x, y, r)}
\end{array}
\]
the constraint generation method returns the following Horn constraint
set over the theories of linear integer arithmetic and algebraic data
structures:
\[
\left\{
\begin{array}{l}
P(\NIL, 0), \\
P(\CONS(x,l'), x+r) \Leftarrow P(l',r), \\
R(s,\NIL,s), \\
R(s,\CONS(x,l'),r') \Leftarrow Q(s,x,r) \land R(r,l',r'), \\
S(x,y,x+y), \\
Q(x,y,z) \Leftarrow S(x,y,z), \\
\bot \Leftarrow P(l,r_1) \land R(0,l,r_2) \land r_1 \neq r_2
\end{array}
\right\}
\]
\qed
\end{example}

The method can also axiomatize recursive functions that may raise
exceptions into Horn clause constraints.
\begin{example}
Consider the following higher-order program that manipulates lists and
possibly raises and catches exceptions.
\begin{alltt}
exception Not_found
type int_option = None | Some of int

let rec find p l = match l with
  | [] -> raise Not_found
  | x::xs -> if p x then x else find p xs
let rec find_opt p l = match l with
  | [] -> None
  | x::xs -> if p x then Some x else find_opt p xs
let main p l = try find_opt p l = Some (find p l)
               with Not_found -> find_opt p l = None
\end{alltt}
Here, \verb|find| and \verb|find_opt| respectively use the exception
\verb|Not_found| and the option type \verb|int_option|, defined
respectively in the first and the second lines, for finding the first
element of the list \verb|l| satisfying the predicate
$\mathtt{p}:\TINT \to \TBOOL$.  The constraint generation method
cannot directly handle this program because the underlying refinement
type system does not support exceptions.  Note, however, that we can
mechanically transform the program into the following one by
eliminating exceptions using a selective CPS
transformation~\cite{Sato2013}.
\begin{alltt}
type exc = Not_found
type int_option = None | Some of int

let rec find p l ok ex = match l with
  | [] -> ex Not_found
  | x::xs -> if p x then ok x else find p xs ok ex
let rec find_opt p l = match l with
  | [] -> None
  | x::xs -> if p x then Some x else find_opt p xs
let main p l =
  find p l (fun x -> assert (find_opt p l = Some x))
   (fun Not_found -> assert (find_opt p l = None))
\end{alltt}
Note here that two function arguments $\mathtt{ok}$ and $\mathtt{ex}$
of the ordinary ML type $\TINT \to \TUNIT$, which respectively
represent continuations for normal and exceptional cases, are added to
the function $\mathtt{find}$.
The constraint generation method then prepares the following
refinement type templates:
\[
\begin{array}{rl}
\mathtt{find}:& (x: \TINT \to \TBOOL) \to (y: \TLIST) \to \\
  &(\set{z: \TINT \mid P_{ok}(x, y, z)} \to \TUNIT) \to \\
  &(\set{w: \mathtt{exc} \mid P_{ex}(x, y, w)} \to \TUNIT) \to \TUNIT \\ 
\mathtt{find\_opt}:& (x: \TINT \to \TBOOL) \to (y: \TLIST) \to \\
  &\set{z: \mathtt{int\_option} \mid Q(x, y, z)}
\end{array}
\]
Here, the predicate variable $P_{ok}$ represents invariants among the
first and the second arguments of \verb|find| and the argument of the
third argument \verb|ok| of \verb|find|.  The predicate variable
$P_{ex}$ represents invariants among the first and the second
arguments of \verb|find| and the argument of the fourth argument
\verb|ex| of \verb|find|.  The predicate variable $Q$ represents
invariants among the arguments and the return value of
\verb|find_opt|.  The constraint generation method then obtains the
following Horn constraint set over the theories of linear integer
arithmetic, algebraic data structures, and uninterpreted function
symbols:
\[
\left\{
\begin{array}{l}
 P_{ok}(p, x::xs, r) \Leftarrow p\ x = \top, \\
 P_{ok}(p, x::xs, r) \Leftarrow P_{ok}(p, xs, r) \land p\ x = \bot, \\
 P_{ex}(p, [\ ], \NOTFOUND), \\
 P_{ex}(p, x::xs, r) \Leftarrow P_{ex}(p, xs, r) \land p\ x = \bot, \\
 Q(p, [\ ], \NONE), \\
 Q(p, x::xs, \SOME\ x) \Leftarrow p\ x = \top, \\
 Q(p, x::xs, r) \Leftarrow Q(p, xs, r) \land p\ x = \bot, \\
 \bot \Leftarrow P_{ok}(p, l, r_1) \land Q(p, l, r_2) \land \SOME\ r_1 \neq r_2 \\
 \bot \Leftarrow P_{ex}(p, l, r_1) \land r_1 \neq \NOTFOUND \\
 \bot \Leftarrow P_{ex}(p, l, \NOTFOUND) \land Q(p, l, r_2) \land \NONE \neq r_2 \\
\end{array}
\right\}
\]
Here, $p$ is an uninterpreted function symbol, which is essential for
the success of Horn constraint solving here because we need to express
the fact that the multiple occurrences of $p\ x$ in the body of
different clauses return the same value if the same function is passed
as $p$.  \qed
\end{example}

Our method also supports demonic non-determinism.
\begin{example}
Consider the following higher-order program that calls \verb|rand_int|
to generate random integers.
\begin{alltt}
let rec randpos dummy =
  let n = rand_int () in
  if n > 0 then n else randpos dummy
let rec sum_fun f n =
  if n = 0 then f 0
  else f n + sum_fun f (n - 1)
let main n = assert (sum_fun randpos n > 0)
\end{alltt}
Note that the specification is satisfied because the function
\verb|randpos| never returns a non-positive integer.  By using the
refinement type templates
\[
\begin{array}{rl}
\mathtt{randpos}:& (x: \TINT) \to \set{y: \TLIST \mid P(x, y)} \\
\mathtt{sum\_fun}:& (f:(a: \TINT) \to \set{b:\TINT \mid Q(a,b)}) \to \\
                  & (x:\TINT) \to \set{y:\TINT \mid R(f,x,y)}
\end{array}
\]
we obtain the following Horn constraint set:
\[
\left\{
\begin{array}{l}
P(x,y) \Leftarrow y > 0, \\
P(x,y) \Leftarrow P(x,y) \land y \leq 0, \\
Q(a,b) \Leftarrow Q(x,r) \land Q(a,b) \land x \neq 0, \\
Q(a,b) \Leftarrow P(a,b), \\
R(f,0,y) \Leftarrow Q(0,y), \\
R(f,x,r_1 + r_2) \Leftarrow \\
\qquad Q(0,y) \land Q(x,r_1) \land R(f,x-1,r_2) \land x \neq 0, \\
\bot \Leftarrow Q(\mathit{randpos},x,y) \land y \leq 0
\end{array}
\right\}
\]
\qed
\end{example}

Our method based on Horn clause constraints is not limited to
relational verification of functional programs.  By combining the
constraint generation tools for C~\cite{Gurfinkel2015} and
Java~\cite{Kahsai2016}, we can axiomatize relational verification
problems across functional, imperative, object-oriented, and, of
course, (constraint) logic programs into Horn clause constraints.
\begin{example}
Consider the following C program.
\begin{alltt}
  int mult(int x, int y) \{
      int r=0; while(y != 0) {r = r + x; y = y - 1;}
      return r;
  \}
\end{alltt}
Using the Hoare logic, we obtain the following Horn constraint set:
\[
\left\{
\begin{array}{l}
 I(x, y, r) \Leftarrow r = 0, \\
 I(x, y-1, r + x) \Leftarrow I(x, y, r) \land y \neq 0, \\
 R(x, y, r) \Leftarrow I(x, y, r) \land y = 0
\end{array}
\right\}
\]
Here, the predicate variable $I$ represents the loop invariant of the
while loop, and $R$ represents invariants among the arguments $x, y$
and the return value $r$ of the procedure \verb|mult|.
The goal clause $\bot \Leftarrow P(x,y,r_1) \land R(x,y,r_2) \land r_1
\neq r_2$, with the predicate $P$ defined by $\hmult$ in
Section~\ref{sec:intro}, represents the equivalence of C and OCaml
implementations of \verb|mult|.  \qed
\end{example}

It is also worth mentioning here that there have also been proposed
techniques for reducing verification problems of multi-threaded
programs~\cite{Gupta2011a,Grebenshchikov2012} and functional programs
with the call-by-need evaluation strategy~\cite{Vazou2014} into Horn
constraint solving problems.  Angelic
non-determinism~\cite{Hashimoto2015a} and temporal
specifications~\cite{Beyene2013} can also be automatically axiomatized
into Horn clause constraints extended with existentially quantified
heads.

\section{Induction-based Horn Constraint Solving Method}
\label{sec:induction}
As explained in Section~\ref{sec:overview}, our method is based on the
reduction from Horn constraint solving into inductive theorem proving.
The correctness of the reduction is established by
Corollary~\ref{cor:leastmodel} in Section~\ref{sec:horn}.  The
remaining task is to develop an automated method for proving the
inductive conjectures obtained from Horn clause constraints.  To this
end, Section~\ref{sec:proof_system} formalizes our inductive proof
system tailored to Horn constraint solving and proves its correctness.
Section~\ref{sec:auto_induct} discusses how to automate proof search
in the system using an SMT solver.

\subsection{Inductive Proof System}
\label{sec:proof_system}

\begin{figure}[t]
\infrule{}{}
\vspace*{-2ex}
Perform induction on the derivation of the atom $P(\seq{t})$:
\infrule[Induct]
  {\mkk{P}{M}{\circ}(\seq{t}) \in A \andalso 
   \Gamma' = \Gamma \cup \set{(\alpha \triangleright P(\seq{t}),A,\phi,h)} \\
   \D; \Gamma'; (A \setminus \mkk{P}{M}{\circ}(\seq{t})) \cup \set{\mkk{P}{M}{\alpha}(\seq{t})}; \phi \vdash
   h \andalso (\alpha: \fresh)}
  {\D; \Gamma; A; \phi \vdash h}
  \rulesp
  
Case-analyze the last rule used (where $m$ rules are possible):
\infrule[Unfold]
  {\mkk{P}{M}{\alpha}(\seq{t}) \in A \andalso
   \D(P)(\seq{t}) = \bigvee_{i=1}^m \exists \seq{x_i}. \left( \phi_i \land \bigwedge A_i \right) \\
   \D; \Gamma; A \cup \mkk{A_i}{M \cup \set{\alpha}}{\circ}; \phi \land \phi_i
   \vdash h \andalso
   (\foreach\ i \in \set{1, \dots, m})}
  {\D; \Gamma; A; \phi \vdash h}
\rulesp

Apply an induction hypothesis or a user-specified lemma in $\Gamma$:
\infrule[Apply$\bot$]
  {(g,A',\phi',\bot) \in \Gamma \andalso
    \dom{\sigma} = \fvs{A'} \\
   \models \phi \Rightarrow \In{\sigma g}{A} \andalso
   \models \phi \Rightarrow \Sub{\sigma A'}{A}\\
   \set{\seq{x}}=\fvs{\phi'} \setminus \dom{\sigma} \andalso
   \D; \Gamma; A; \phi \land \forall \seq{x}.\neg(\sigma \phi') \vdash h}
  {\D; \Gamma; A; \phi \vdash h}
\rulesp

Apply an induction hypothesis or a user-specified lemma in $\Gamma$:
\infrule[Apply$P$]
  {(g,A',\phi',P(\seq{t})) \in \Gamma \andalso
    \dom{\sigma} = \fvs{A'} \cup \fvs{\seq{t}} \\
   \models \phi \Rightarrow \In{\sigma g}{A} \andalso
   \models \phi \Rightarrow \exists \seq{x}.(\sigma\phi') \andalso
   \models \phi \Rightarrow \Sub{\sigma A'}{A}\\
   \set{\seq{x}} = \fvs{\phi'} \setminus \dom{\sigma}
   \andalso
   \D; \Gamma; A \cup \set{\mkk{P}{\emptyset}{\circ}(\sigma
     \seq{t})}; \phi \vdash h}
  {\D; \Gamma; A; \phi \vdash h}
\rulesp

Apply a definite clause in $\D$:
\infrule[Fold]
  {(P(\seq{t}) \Leftarrow \phi' \land \bigwedge A') \in \D \andalso
    \dom{\sigma} = \fvs{A'} \cup \fvs{\seq{t}} \\
    \models \phi \Rightarrow \exists \seq{x}.(\sigma \phi') \andalso
    \models \phi \Rightarrow \Sub{\sigma A'}{A}\\
   \set{\seq{x}} = \fvs{\phi'} \setminus \dom{\sigma} \andalso
   \D; \Gamma; A \cup \set{\mkk{P}{\emptyset}{\circ}(\sigma
     \seq{t})}; \phi \vdash h}
  {\D; \Gamma; A; \phi \vdash h}
\rulesp

Check if the current knowledge entails the asserted proposition:

\rulesp
\begin{minipage}{0.5\hsize}
\infrule[Valid$\bot$]
  {\models \phi \Rightarrow \bot}
  {\D; \Gamma; A; \phi \vdash \bot}
\end{minipage}
\begin{minipage}{0.5\hsize}
\infrule[Valid$P$]
  {\models \phi \Rightarrow \In{P(\seq{t})}{A}}
  {\D; \Gamma; A; \phi \vdash P(\seq{t})}
\end{minipage}
\rulesp

Auxiliary functions:
\begin{align*}
\In{P(\seq{t})}{A} &\triangleq \bigvee_{P(\seq{t'}) \in A} \seq{t} = \seq{t}'\\
\In{\bullet}{A} &\triangleq \top\\
\In{\alpha \triangleright P(\seq{t})}{A} &\triangleq
\In{P(\seq{t})}{\set{\mk{P}{M}(\seq{t'}) \in A \mid \alpha \in M}}\\
\Sub{A_1}{A_2} &\triangleq \bigwedge_{P(\seq{t}) \in A_1} \In{P(\seq{t})}{A_2}
\end{align*}

\caption{The inference rules for the judgment $\D; \Gamma; A; \phi \vdash h$.}
\label{fig:rule}
\end{figure}

We formalize a general and more elaborate version of the inductive
proof system explained in Section~\ref{sec:overview}.  A judgment of
the extended system is of the form $\D; \Gamma; A; \phi \vdash h$,
where $\D$ is a set of definite clauses and $\phi$ represents a
formula without atoms.  We here assume that $\D(P)$ is defined
similarly as $\h(P)$.
The asserted proposition $h$ on the right is now allowed to be an atom
$P(\seq{t})$ instead of $\bot$.  For deriving such judgments, we will
introduce new rules \rn{Fold} and \rn{Valid$P$} later in this section.
$\Gamma$ represents a set $\set{(g_1,A_1,\phi_1,h_1), \dots,
(g_m,A_m,\phi_m,h_m)}$ consisting of user-specified lemmas and
induction hypotheses, where $g_i$ is either $\bullet$ or
$\alpha \triangleright P(\seq{t})$.
$(\bullet,A,\phi,h) \in \Gamma$ represents the user-specified lemma
\[\forall \seq{x}.\left( \bigwedge A \land \phi \Rightarrow h \right)
\ \mbox{ where }\set{\seq{x}}=\fvs{A,\phi,h},\] while
$(\alpha \triangleright P(\seq{t}),A,\phi,h) \in \Gamma$ represents the
induction hypothesis
\begin{align*}
\forall \seq{x}.\left( \left(P(\seq{t}) \prec P(\seq{t}')\right) \land \bigwedge A \land \phi \Rightarrow h\right) \\
\mbox{where }\set{\seq{x}}=\fvs{P(\seq{t}),A,\phi,h}
\end{align*}
that has been introduced by induction on the derivation of the atom
$P(\seq{t}')$.  Here, $\alpha$ represents the \emph{induction
identifier} assigned to the application of induction that has
introduced the hypothesis.
Note that $h$ on the right-hand side of $\Rightarrow$ is now allowed
to be an atom of the form $Q(\seq{t})$.  We will introduce a new
rule \rn{Apply$P$} later in this section for using such lemmas and
hypotheses to obtain new knowledge.
$A$ is also extended to be a set
$\set{\mkk{P_1}{M_1}{\alpha_1}(\seq{t}_1), \dots, \mkk{P_m}{M_m}{\alpha_m}(\seq{t}_m)}$
of annotated atoms.  Each element $\mkk{P}{M}{\alpha}(\seq{t})$ has
two annotations:
\begin{itemize}
\item an induction identifier $\alpha$ indicating that the induction with the
identifier $\alpha$ is performed on the atom by the rule \rn{Induct}.
If the rule \rn{Induct} has never been applied to the atom, $\alpha$
is set to be a special identifier denoted by $\circ$.
\item a set of induction identifiers $M$ indicating that if $\alpha' \in M$,
the derivation $D$ of the atom $\mkk{P}{M}{\alpha}(\seq{t})$ satisfies
$D \prec D'$ for the derivation $D'$ of the atom $P(\seq{t}')$ on
which the induction with the identifier $\alpha'$ is performed.  Thus,
an induction hypothesis $(\alpha' \triangleright
P(\seq{t}'),A',\phi',h') \in \Gamma$ can be applied to the atom
$\mkk{P}{M}{\alpha}(\seq{t}) \in A$ only if $\alpha' \in M$ holds.
\end{itemize}
Note that we use these annotations only for guiding inductive proofs
and $\mkk{P}{M}{\alpha}(\seq{t})$ is logically equivalent to
$P(\seq{t})$.  We often omit these annotations when they are clear
from the context.

The inference rules for the judgment $\D; \Gamma; A; \phi \vdash h$
are defined in Figure~\ref{fig:rule}.
The rule \rn{Induct} selects an atom $\mkk{P}{M}{\circ}(\seq{t}) \in
A$ and performs induction on the derivation of the atom.  This rule
generates a fresh induction identifier $\alpha \neq \circ$, adds a new
induction hypothesis $(\alpha \triangleright P(\seq{t}),A,\phi,h)$ to
$\Gamma$, and replaces the atom $\mkk{P}{M}{\circ}(\seq{t})$ with the
annotated one $\mkk{P}{M}{\alpha}(\seq{t})$ for remembering that the
induction with the identifier $\alpha$ is performed on it.
The rule \rn{Unfold} selects an atom $\mkk{P}{M}{\alpha}(\seq{t}) \in
A$ and performs a case analysis on the last rule
$P(\seq{t}) \Leftarrow \phi_i \land \bigwedge A_i$ used to derive the
atom.  As the result, the goal is broken into $m$-subgoals if there
are $m$ rules possibly used to derive the atom.  The rule adds
$\mkk{A_i}{M \cup \set{\alpha}}{\circ}$ and $\phi_i$ respectively to
$A$ and $\phi$ in the $i$-th subgoal, where $\mkk{A}{M}{\alpha}$
represents $\left\{\mkk{P}{M}{\alpha}(\seq{t}) \relmiddle{|}
P(\seq{t}) \in A\right\}$.  Note here that each atom in $A_i$ is
annotated with $M \cup \set{\alpha}$ because the derivation of the
atom $A_i$ is a strict sub-derivation of that of the atom
$\mkk{P}{M}{\alpha}(\seq{t})$ on which the induction with the
identifier $\alpha$ has been performed.  If $\alpha=\circ$, it is the
case that the rule \rn{Induct} has never been applied to the atom
$\mkk{P}{M}{\alpha}(\seq{t})$ yet.
The rules \rn{Apply$\bot$} and \rn{Apply$P$} select
$(g,A',\phi',h) \in \Gamma$, which represents a user-specified lemma
if $g=\bullet$ and an induction hypothesis otherwise, and try to add
new knowledge respectively to the $\phi$- and the $A$-part of the
current knowledge: the rules try to find an instantiation $\sigma$ for
the free term variables in $(g,A',\phi',h)$, which are considered to
be universally quantified, and then use $\sigma(g,A',\phi',h)$ to
obtain new knowledge.
Contrary to the rule \rn{Unfold}, the rule \rn{Fold} tries to use a
definite clause $P(\seq{t}) \Leftarrow \phi' \land \bigwedge
A' \in \D$ from the body to the head direction: \rn{Fold} tries to
find $\sigma$ such that $\sigma(\phi' \land \bigwedge A')$ is implied
by the current knowledge, and update it with $P(\sigma\seq{t})$.
This rule is useful when we check the correctness of user specified
lemmas.
The rule \rn{Valid$\bot$} checks if $\phi$ is unsatisfiable, while the
rule \rn{Valid$P$} checks if the asserted proposition $P(\seq{t})$ on
the right-hand side of the judgment is implied by the current
knowledge $\bigwedge A \land \phi$.

Given a Horn constraint solving problem $\h$, our method reduces the
problem into an inductive theorem proving problem as follows.
For each goal clause in $\goalof{\h} = \set{\bigwedge
A_i \land \phi_i \Rightarrow \bot}_{i=1}^m$, we check the judgment
$\defof{\h};\emptyset; \mkk{A_i}{\emptyset}{\circ}; \phi_i \vdash \bot$
is derivable by the inductive proof system.  Here, each atom in $A_i$
is initially annotated with $\emptyset$ and $\circ$.

We now prove the correctness of our method, which follows from the
soundness of the inductive proof system.  To state the soundness, we
first define $\sem{\Gamma}{A}{k}$, which represents the conjunction of
user-specified lemmas and induction hypotheses in $\Gamma$
instantiated for the atoms occurring in the $k$-times unfolding of
$A$.
\begin{align*}
\sem{\Gamma}{A}{0} &= \sembrack{\Gamma}(A), \\
\sem{\Gamma}{A}{k+1} &=
\sembrack{\Gamma}(A) \land
\bigwedge_{\mkk{P}{M}{\alpha}(\seq{t}) \in A}
\bigvee_{i=1}^m \exists \seq{x_i}.
  \left(
    \phi_i \land \sem{\Gamma}{\mkk{A_i}{M \cup \set{\alpha}}{\circ}}{k}
  \right),
\end{align*}
where $\D(P)(\seq{t})
= \bigvee_{i=1}^m \exists \seq{x_i}. \left( \phi_i \land \bigwedge
A_i \right)$ and $\sembrack{\Gamma}$ is defined by:
\begin{align*}
\sembrack{\Gamma}(A') &= \bigwedge \bigcup_{(g, A, \phi, h) \in \Gamma} \sembrack{(g, A, \phi, h)}(A'), \\
\sembrack{(\bullet, A, \phi, h)}(A') &=
\set{\forall \seq{x}.\ \bigwedge A \land \phi \Rightarrow h \relmiddle{|} \set{\seq{x}}=\fvs{A,\phi,h}},
\end{align*}
\begin{align*}
\sembrack{(\alpha \triangleright P(\seq{t}), A, \phi, h)}(A') &= \\
&\!\!\!\!\!\!\!\!\!\!\!\!\!\!\!\!\!\!\!\!\!\!\!\!\!\!\!\!\!\!\!\!\!\!\!\!\!\!\!\!\!\!\!\!\!\!\!\!\!\!\!\!\!\!
\set{
\forall \seq{x}.\ \bigwedge A \land \phi \land \seq{t} = \seq{t}' \Rightarrow h
\relmiddle{|}
\begin{array}{l}
\mk{P}{M}(\seq{t}') \in A', \alpha \in M, \\
\set{\seq{x}}=\fvs{\seq{t},A,\phi,h}
\end{array}
}.
\end{align*}
Intuitively, $\sembrack{\Gamma}(A)$ represents the conjunction of
user-specified lemmas and induction hypotheses in $\Gamma$
instantiated for the atoms in $A$.
The soundness of the inductive proof system is now stated by the
following lemma (see \full{Appendix~\ref{soundness}}{an extended
version~\cite{}} for a proof).
\begin{lemma}[Soundness]
\label{lem:sound}
If \(\D; \Gamma; A; \phi \vdash h\) is derivable, then
there is $k$ such that
$\leastmodeld{\D} \models \sem{\Gamma}{A}{k} \land \bigwedge
A \land \phi \Rightarrow h$ holds.
\end{lemma}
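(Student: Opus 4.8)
The plan is to proceed by induction on the derivation of the judgment $\D; \Gamma; A; \phi \vdash h$, with a case analysis on the last inference rule used. For each rule, the induction hypothesis gives us a witness $k'$ (or several, one per premise) such that $\leastmodeld{\D} \models \sem{\Gamma'}{A'}{k'} \land \bigwedge A' \land \phi' \Rightarrow h'$ for the premise judgment(s), and we must exhibit a $k$ that works for the conclusion. The guiding intuition is that $\sem{\Gamma}{A}{k}$ packages up exactly the induction hypotheses that are sound to assume when reasoning about derivations of depth at most $k$ rooted at the atoms of $A$, so the various rule cases amount to bookkeeping about how unfolding, hypothesis introduction, and hypothesis application interact with this depth-indexed family of formulas.

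First I would dispatch the ``axiom'' rules. For \rn{Valid$\bot$} and \rn{Valid$P$} the premise is a pure validity $\models \phi \Rightarrow \bot$ (resp. $\models \phi \Rightarrow \In{P(\seq{t})}{A}$), and taking $k = 0$ works immediately since $\sem{\Gamma}{A}{0} = \sembrack{\Gamma}(A)$ is entailed trivially and $\In{P(\seq{t})}{A}$ together with $\bigwedge A$ forces $h = P(\seq{t})$. Next, the rules that only add first-order knowledge extracted from $\Gamma$ — \rn{Apply$\bot$} and \rn{Apply$P$} — and the rule \rn{Fold}: here I would take the same $k$ as the premise, and argue that the added conjunct ($\forall \seq{x}.\neg(\sigma\phi')$, or the new atom $P(\sigma\seq{t})$) is already a semantic consequence of $\sem{\Gamma}{A}{k} \land \bigwedge A \land \phi$ under $\leastmodeld{\D}$. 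For \rn{Apply$\bot$}/\rn{Apply$P$} this uses the side conditions $\models \phi \Rightarrow \In{\sigma g}{A}$ and $\models \phi \Rightarrow \Sub{\sigma A'}{A}$ together with the definition of $\sembrack{(\alpha \triangleright P(\seq t),A',\phi',h')}$, which is precisely where the $\In{\alpha \triangleright P(\seq t)}{A}$ annotation machinery certifies that the sub-derivation ordering is respected, so instantiating the hypothesis at $\sigma$ is legitimate; for \rn{Fold} it uses that $\leastmodeld{\D}$ is a model of $\D$ (Proposition~\ref{prp:leastmodel}).

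The rule \rn{Unfold} forces $k$ to grow: from premises for $i = 1,\dots,m$ with witnesses $k_i$, I would set $k = 1 + \max_i k_i$, and the key calculation is to unwind one layer of the recursion in $\sem{\Gamma}{A}{k}$ using $\D(P)(\seq t) = \bigvee_i \exists \seq{x_i}.(\phi_i \land \bigwedge A_i)$, distribute the disjunction over the conjunction, and match each disjunct against the corresponding premise; here one also uses that $\sem{\Gamma}{A_i^{M\cup\{\alpha\}}}{k_i} \Rightarrow \sem{\Gamma}{A_i^{M\cup\{\alpha\}}}{k_i'}$ is monotone decreasing in the index so that bumping each $k_i$ up to the common $k-1$ is harmless. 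I expect the genuinely delicate case to be \rn{Induct}: the conclusion's witness must be tied to the \emph{well-foundedness} of $\prec$ on derivations (Principle~\ref{principle:ind}). The argument there is the heart of the proof — one shows, by an inner induction on the derivation depth of the chosen atom $P(\seq t)$ in $\leastmodeld{\D}$, that the newly added hypothesis $(\alpha \triangleright P(\seq t), A, \phi, h)$ is sound to assume, because whenever it is applied it is applied to a strictly shallower derivation (tracked by $\alpha \in M$), so $\sem{\Gamma'}{A^{\circ\to\alpha}}{k'}$ holding at all strictly smaller depths yields $\sem{\Gamma}{A}{k}$ at the current depth for a suitable $k$ (e.g. $k$ the larger of $k'$ and the atom's depth). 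Making this layering of ``induction on the proof-system derivation'' against ``induction on the $\leastmodeld{\D}$-derivation'' precise, and checking that the annotation sets $M$ exactly encode the permitted applications, is the main obstacle; everything else is the bookkeeping sketched above.
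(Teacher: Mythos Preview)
Your plan is correct and matches the paper's: induction on the derivation with a case split on the last rule, \rn{Unfold} taking $k = 1 + \max_i k_i$, and \rn{Induct} handled by an inner well-founded induction on the $\leastmodeld{\D}$-derivation of the selected atom (the paper packages this step as a standalone lemma that discharges the newly added hypothesis outright, so that the premise's witness $k'$ already serves as the conclusion's $k$). Your tentative choice ``$k$ the larger of $k'$ and the atom's depth'' in the \rn{Induct} case is the one loose end: the derivation depth of $P(\seq{t})$ in $\leastmodeld{\D}$ varies with the valuation of the free term variables and so cannot contribute to a \emph{uniform} $k$ --- but, as just noted, it need not.
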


The correctness of our Horn constraint solving method follows
immediately from Lemma~\ref{lem:sound} and
Corollary~\ref{cor:leastmodel} as follows.
\begin{theorem}
\label{cor:to_itp}
Suppose that $\h$ is an HCCS with $\goalof{\h} = \set{\bigwedge
A_i \land \phi_i \Rightarrow \bot}_{i=1}^m$.  It then follows that
$\rho \models \h$ for some $\rho$ if $\defof{\h}; \emptyset;
A_i; \phi_i \vdash
\bot$ is derivable for all $i=1,\dots,m$.
\end{theorem}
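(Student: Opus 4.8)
The plan is to obtain the theorem by chaining Corollary~\ref{cor:leastmodel} with the Soundness Lemma~\ref{lem:sound}. By Corollary~\ref{cor:leastmodel}, $\rho \models \h$ for some $\rho$ if and only if $\leastmodel \models \goalof{\h}$, so, writing $\goalof{\h} = \set{\bigwedge A_i \land \phi_i \Rightarrow \bot}_{i=1}^m$, it suffices to establish $\leastmodel \models \bigwedge A_i \land \phi_i \Rightarrow \bot$ for every $i$. Before invoking the soundness lemma I would first record that $\leastmodeld{\defof{\h}} = \leastmodel$: since every clause in $\h\restriction_P$ is a definite clause, $\defof{\h}(P) = \h(P)$ for all $P$, hence $F_{\defof{\h}} = F_{\h}$ and the two least fixed-points coincide.

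Now fix $i$. The hypothesis gives a derivation of $\defof{\h}; \emptyset; \mkk{A_i}{\emptyset}{\circ}; \phi_i \vdash \bot$ (here $A_i$ in the theorem statement abbreviates the atom set in which each atom carries the trivial annotations $\emptyset$ and $\circ$, which are semantically inert). Applying Lemma~\ref{lem:sound} yields some $k$ with $\leastmodel \models \sem{\emptyset}{A_i}{k} \land \bigwedge A_i \land \phi_i \Rightarrow \bot$, so the only remaining task is to discharge the auxiliary conjunct, i.e., to show $\leastmodel \models \bigwedge A_i \Rightarrow \sem{\emptyset}{A_i}{k}$. I would prove, by induction on $k$ and uniformly in the annotated atom set $A$, that $\leastmodel \models \bigwedge A \Rightarrow \sem{\emptyset}{A}{k}$. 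For $k = 0$ this is immediate since $\sem{\emptyset}{A}{0} = \sembrack{\emptyset}(A) = \top$ because $\Gamma$ is empty. For the step, observe that $\leastmodel$, being a fixed-point of $F_{\defof{\h}}$, satisfies $P(\seq{t}) \Leftrightarrow \D(P)(\seq{t})$ for every predicate $P$ (with $\D = \defof{\h}$), so whenever $\bigwedge A$ holds one may choose for every $\mkk{P}{M}{\alpha}(\seq{t}) \in A$ a witnessed disjunct $\exists \seq{x_j}.(\phi_j \land \bigwedge A_j)$ of $\D(P)(\seq{t})$; applying the induction hypothesis to each $\mkk{A_j}{M \cup \set{\alpha}}{\circ}$ then produces exactly the body of $\sem{\emptyset}{A}{k+1}$. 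Composing the two implications gives $\leastmodel \models \bigwedge A_i \land \phi_i \Rightarrow \bot$.

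Since this holds for all $i$, we conclude $\leastmodel \models \goalof{\h}$, and Corollary~\ref{cor:leastmodel} delivers a solution $\rho \models \h$. The substantive content is entirely inside Lemma~\ref{lem:sound}; granting it, the argument here is a short composition, and the only step demanding any care is the auxiliary induction used to erase $\sem{\emptyset}{A_i}{k}$ --- which is routine once one notices that an empty $\Gamma$ collapses $\sembrack{\emptyset}(\cdot)$ to $\top$, leaving only the ``unfolding skeleton'' of $\sem{\emptyset}{A}{k}$, a formula that holds under $\leastmodel$ at every atom of $A$ precisely because those atoms are derivable there.
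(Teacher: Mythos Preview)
Your proposal is correct and follows essentially the same route as the paper: apply Lemma~\ref{lem:sound}, discharge the auxiliary conjunct $\sem{\emptyset}{A_i}{k}$ using $\leastmodeld{\defof{\h}} \models \bigwedge A_i \Rightarrow \sem{\emptyset}{A_i}{k}$, and conclude via Corollary~\ref{cor:leastmodel}. You actually supply more detail than the paper does---the paper simply asserts that implication as a ``fact'', whereas you sketch the (straightforward) induction on $k$ that establishes it, and you also make explicit the identification $\leastmodeld{\defof{\h}} = \leastmodel$ that the paper leaves implicit.
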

\begin{proof}
Suppose that $\defof{\h}; \emptyset; A_i; \phi_i \vdash \bot$ for all
$i=1,\dots,m$.  By Lemma~\ref{lem:sound} and the fact that
$\leastmodeld{\defof{\h}} \models \bigwedge
A_i \Rightarrow \sem{\emptyset}{A_i}{k}$, we get
$\leastmodeld{\defof{\h}} \models \bigwedge
A_i \land \phi_i \Rightarrow \bot$.
We therefore have $\leastmodeld{\defof{\h}} \models \goalof{\h}$.  It
then follows that $\rho \models \h$ for some $\rho$ by
Corollary~\ref{cor:leastmodel}.
\end{proof}

\subsection{Rule Application Strategy}
\label{sec:auto_induct}

We now elaborate on our rule application strategy shown in
Section~\ref{sec:overview}.  Because all the inference rules
except \rn{Valid$\bot$} and \rn{Valid$P$} add new knowledge to $A$
and/or $\phi$, we repeatedly apply them until \rn{Valid$\bot$}
and \rn{Valid$P$} close all the proof branches under consideration.
More specifically, we adopt the following strategy:
\begin{itemize}
\item Repeatedly apply the rules \rn{Apply$\bot$}, \rn{Apply$P$}, and \rn{Fold} if possible until no new knowledge is obtained.  (Even if the rules do not apply, applications of \rn{Induct} and \rn{Unfold} explained in the following items may make \rn{Apply$\bot$}, \rn{Apply$P$}, and \rn{Fold} applicable.)
\item If the current knowledge cannot be updated by using the above rules, select some atom from $A$ in a breadth-first manner, and apply the rule \rn{Induct} to the atom.
\item Apply the rule \rn{Unfold} whenever \rn{Induct} is applied.
\item Try to apply the rules \rn{Valid$\bot$} and \rn{Valid$P$} whenever the current knowledge is updated.
\end{itemize}

\section{Implementation and Preliminary Experiments}
\label{sec:exp}
We have implemented a Horn constraint solver based on the proposed
method and integrated it, as a backend solver, with an existing
verification tool called Refinement
Caml~\cite{Unno2008,Unno2009,Unno2015}, a refinement type checking and
inference tool for the OCaml functional language based on Horn
constraint solving.
Our solver can generate a proof tree like the one in
Figure~\ref{fig:deriv} as a certificate, if the given Horn constraint
set is judged to have a solution.  Furthermore, our solver can
generate a counterexample, if the constraint set is judged to be
unsolvable.
We adopted Z3~\cite{Moura2008} as the underlying SMT solver.  The
details of the implementation are explained in Section~\ref{sec:impl}.
The web interface of the verification tool as well as all the
benchmark programs used in the experiments reported here are available
from \url{http://www.cs.tsukuba.ac.jp/~uhiro/}.

We have tested our constraint solver on two benchmark sets.  The first
set is 85 benchmarks from the test suite for automated induction
provided by the authors of the IsaPlanner system~\cite{Dixon2003}.
The benchmark set consists of verification problems of relational
specifications of pure mathematical functions on inductive data
structures, most of which cannot be verified by the previous Horn
constraint
solvers~\cite{Unno2009,Terauchi2010,Hoder2011,Gupta2011,Grebenshchikov2012,Rummer2013,McMillan2013,Unno2015}.
The benchmark set has also been used to evaluate previous automated
inductive theorem
provers~\cite{Leino2012,Sonnex2012,Claessen2013,Reynolds2015}.  The
experiment results on this benchmark set are reported in
Section~\ref{sec:isaplan}.

To demonstrate advantages of our novel combination of Horn constraint
solving with inductive theorem proving, we have prepared the second
benchmark set consisting of verification problems of (mostly
relational) specifications of programs that use various advanced
language features, which are naturally and automatically axiomatized
by our method using predicates defined by Horn clause constraints as
the least satisfying interpretation.  The experiment results on this
benchmark set are reported in Section~\ref{sec:various}.

\subsection{Implementation Details}
\label{sec:impl}

This section describes details of the implementation.  We explain how
to check the correctness of user specified lemmas and how to generate
a counterexample if the given Horn constraint set has no solution,
respectively in Sections~\ref{sec:lemma} and \ref{sec:refute}.
Section~\ref{sec:assign} describes implementation details of the
rules \rn{Apply$\bot$}, \rn{Apply$P$}, and \rn{Fold} in
Figure~\ref{fig:rule}.  In particular, we discuss how to find an
assignment $\sigma$ for free term variables that occur in the element
of $\Gamma$ selected by the rules.

\subsubsection{Checking the correctness of user-specified lemmas}
\label{sec:lemma}

Our system allows users to specify lemmas as the initial $\Gamma$.
Our tool checks that $\D; \emptyset; A; \phi \vdash h$ is derivable
for each user-specified lemma $(\bullet,A,\phi,h)$ by using the exact
same rules in Figure~\ref{fig:rule}.
We use the rule \rn{Fold}, in addition to the rules \rn{Apply$\bot$}
and \rn{Apply$P$}, to update the current knowledge.  To avoid
redundant applications of \rn{Fold}, we select only definite clauses
in $\D$ with the head of the form $P(\seq{t}')$ if $h=P(\seq{t})$ and
we do not use \rn{Fold} at all if $h=\bot$.

\subsubsection{Counterexample generation}
\label{sec:refute}

Our tool can conclude that the goal clause (or the user-specified
lemma) currently solving has no solution if a subgoal of the form
$\D; \Gamma; A; \phi \vdash \bot$ satisfying the following conditions
is obtained:
\begin{itemize}
\item all the atoms in $A$ are already unfolded by the rule \rn{Unfold} but
\item $\phi$ is satisfiable.
\end{itemize}
Note that the first condition ensures that the $\phi$-part of the
current knowledge under-approximates the body of the goal clause.

Our tool then returns a satisfying model of $\phi$ found by the
underlying SMT solver as a counterexample witnessing the unsolvability
of the given Horn constraint set.  Some readers may notice that the
counterexample generation is essentially the same as the execution of
constraint logic programs~\cite{Jaffar1994}.

\subsubsection{Finding an assignment $\sigma$ for quantifier instantiation}
\label{sec:assign}

We here explain how to find $\sigma$ for instantiating quantified
variables of lemmas and induction hypotheses by using an SMT solver in
the implementation of the rule \rn{Apply$\bot$}.  The same technique
is also used for finding $\sigma$ in the rules \rn{Fold}
and \rn{Apply$P$}.

Recall that, in order to apply the rule \rn{Apply$\bot$} to a judgment
$\D; \Gamma; A; \phi \vdash h$, we need to find an assignment $\sigma$
for free term variables in $(g, A', \phi', h') \in \Gamma$.  Note here
that the atom that occurs in $g$ also occurs in $A'$.
We below assume that all the arguments of the atoms in $A'$ are
distinct term variables.
This does not lose generality because we can always replace
$P(\seq{t}) \in A'$ by $P(\seq{x})$ with fresh $\seq{x}$ by adding the
constraint $\seq{x} = \seq{t}$ to $\phi'$.  First of all, we
construct, for each $P(\seq{x}) \in A'$, the set
$\set{\set{\seq{x} \mapsto \seq{t}_i}}_{i=1}^m$ of assignments, where
$P(\seq{t}_1),\dots,P(\seq{t}_m) \in A$.
Here, if $g=\alpha \triangleright P(\seq{x})$, the set
$\set{\set{\seq{x} \mapsto \seq{t}_i}}_{i=1}^m$ of assignments is
constructed only from $P^{M_1}(\seq{t}_1),\dots,P^{M_m}(\seq{t}_m) \in
A$ such that $\alpha \in M_i$.
For example, let us consider $g=\alpha \triangleright P_1(\seq{x}_1)$,
$A' =\set{P_1(\seq{x}_1), P_1(\seq{x}_2), P_2(\seq{x}_3)}$ and $A
= \set{\mk{P_1}{\set{\alpha}}(\seq{t}_1), \mk{P_1}{\emptyset}(\seq{t}_2), \mk{P_2}{\emptyset}(\seq{t}_3)}$.
For the atoms $P_1(\seq{x}_1)$, $P_1(\seq{x}_2)$, and
$P_2(\seq{x}_3)$, we respectively obtain the sets
$\set{\set{\seq{x}_1 \mapsto \seq{t}_1}}$,
$\{\{\seq{x}_2 \mapsto \seq{t}_1\}, \{\seq{x}_2 \mapsto \seq{t}_2\}\}$,
and $\set{\set{\seq{x}_3 \mapsto \seq{t}_3}}$ of assignments.
We then compute all the combination of assignments, and filter out
those that contradict with $\phi$.  For the above example, we obtain
the following two as candidates of $\sigma$:
\begin{align*}
\sigma_1 = \set{\seq{x}_1 \mapsto \seq{t}_1,\seq{x}_2 \mapsto \seq{t}_1,\seq{x}_3 \mapsto \seq{t}_3} \\
\sigma_2 = \set{\seq{x}_1 \mapsto \seq{t}_1,\seq{x}_2 \mapsto \seq{t}_2,\seq{x}_3 \mapsto \seq{t}_3}
\end{align*}

For the rules \rn{Fold} and \rn{Apply$P$}, we use the same technique
explained above for \rn{Apply$\bot$}, but additionally check the
condition $\models \phi \Rightarrow \exists \seq{x}.\ \sigma \phi'$,
where $\set{\seq{x}} = \fvs{\phi'} \setminus \dom{\sigma}$.

\subsection{Experiments on IsaPlanner benchmark set}
\label{sec:isaplan}

The IsaPlanner benchmark set consists of 85 conjectures for total
recursive functions on inductively defined data structures such as
natural numbers, lists, and binary trees. We have translated these
conjectures into assertion safety verification problems of OCaml
programs.  In the translation, we encoded natural numbers using
integer primitives, and defined lists and binary trees as algebraic
data types in OCaml.  More specifically, natural numbers $Z$ and $S\
t$ are respectively encoded as $0$ and $t' + 1$ for $t'$ obtained by
encoding $t$.  To preserve the semantics of natural numbers, we
translated conjectures of the form $\forall x \in \natset.\ \phi$ into
$\forall x \in \intset.\left( x \geq 0 \Rightarrow \phi \right)$.

The translated verification problems are then verified by our
verification tool.
Our tool automatically reduced the verification problems into Horn
constraint solving problems by using the constraint generation
method~\cite{Unno2009}, and automatically (i.e., without using
user-specified lemmas) solved 68 out of 85 verification problems.  We
have manually analyzed the experiment results and found that 8 out of
17 failed verification problems require lemma discovery.  The other 9
problems caused timeout of Z3.  It was because the rule application
strategy implemented in our tool caused useless detours in proofs and
put heavier burden on Z3 than necessary.

The experiment results on the IsaPlanner benchmark set show that our
Horn-clause-based axiomatization of total recursive functions does not
cause significant negative impacts on the automation of induction;
According to \cite{Sonnex2012} that uses the IsaPlanner benchmark set
to compare state-of-the-art automated inductive theorem provers based
on logics of pure total functions over inductively-defined data
structures, IsaPlanner~\cite{Dixon2003} proved 47 out of 85,
Dafny~\cite{Leino2012} proved 45, ACL2s~\cite{Chamarthi2011} proved
74, and Zeno~\cite{Sonnex2012} proved 82.  The
HipSpec~\cite{Claessen2013} inductive prover and the SMT solver CVC4
extended with induction~\cite{Reynolds2015} are reported to have
proved 80.  In contrast to our Horn-clause-based method, these
inductive theorem provers can be, and in fact are directly applied to
prove the conjectures in the benchmark set, because the benchmark set
contains only pure total functions over inductively-defined data
structures.

It is also worth noting that, all the inductive provers that won best
results (greater than 70) on the benchmark set support automatic lemma
discovery, in a stark contrast to our tool.
For example, the above result (80 out of 85) of CVC4 is obtained when
they enable an automatic lemma discovery technique proposed
in \cite{Reynolds2015} and use a different encoding (called {\bf dti}
in \cite{Reynolds2015}) of natural numbers than ours.
When they disable the lemma discovery technique and use a similar
encoding to ours (called {\bf dtt} in \cite{Reynolds2015}), CVC4 is
reported to have proved 64.
Thus, we believe that extending our method with automatic lemma
discovery, which has been comprehensively studied by the automated
induction
community~\cite{Ireland1996,Kaufmann2000,Chamarthi2011,Sonnex2012,Claessen2013,Reynolds2015},
further makes induction-based Horn constraint solving powerful.

\begin{table*}[t]
\caption{Experiment results on programs that use various language features}
\label{tab:exp}
\begin{minipage}{\textwidth}
  \begin{center}
  \begin{threeparttable}
    \begin{tabular}{|c|l|l|c|c|r|} \hline
      ID & specification & kind & language features & result & time (sec.) \\
      \hline \hline
      1 & $\mult\ x\ y + a = \multacc\ x\ y\ a$ & equiv & P &\cmark & 0.257 \\
      \hline
      2 & $\mult\ x\ y = \multacc\ x\ y\ 0$ & equiv & P & \cmark\tnote{\dag} & 0.435 \\
      \hline
      3 & $\mult\ (1+x)\ y = y+\mult\ x\ y$ & equiv & P & \cmark & 0.233 \\
      \hline
      4 & $y \geq 0 \Rightarrow \mult\ x\ (1+y) = x+\mult\ x\ y$ & equiv & P & \cmark & 0.248 \\
      \hline
      5 & $\mult\ x\ y = \mult\ y\ x$ & comm & P & \cmark\tnote{\ddag} & 0.345 \\
      \hline
      6 & $\mult\ (x + y)\ z = \mult\ x\ z + \mult\ y\ z$ & dist & P & \cmark & 1.276 \\
      \hline
      7 & $\mult\ x\ (y + z) = \mult\ x\ y + \mult\ x\ z$ & dist & P & \xmark & n/a \\
      \hline
      8 & $\mult\ (\mult\ x\ y)\ z = \mult\ x\ (\mult\ y\ z)$ & assoc & P & \xmark & n/a \\
      \hline
      9 & $0 \leq x_1 \leq x_2 \land 0 \leq y_1 \leq y_2 \Rightarrow \mult\ x_1\ y_1 \leq \mult\ x_2\ y_2$ & mono & P & \cmark & 0.265 \\
      \hline
      \hline
      10 & $\SUM\ x + a = \SUMACC\ x\ a$ & equiv & & \cmark & 0.384 \\
      \hline
      11 & $\SUM\ x = x + \SUM\ (x-1)$ & equiv & & \cmark & 0.272 \\
      \hline
      12 & $x \leq y \Rightarrow \SUM\ x \leq \SUM\ y$ & mono & & \cmark & 0.350 \\
      \hline
      13 & $x \geq 0 \Rightarrow \SUM\ x = \SUMFROMDOWN\ 0\ x$ & equiv & P & \cmark & 0.312 \\
      \hline
      14 & $x < 0 \Rightarrow \SUM\ x = \SUMFROMUP\ x\ 0$ & equiv & P & \cmark & 0.368 \\
      \hline
      15 & $\SUMFROMDOWN\ x\ y = \SUMFROMUP\ x\ y$ & equiv & P & \xmark & n/a \\
      \hline
      \hline
      16 & $\SUM\ x = \APPLY\ \SUM\ x$ & equiv & H & \cmark & 0.286 \\
      \hline
      17 & $\mult\ x\ y = \APPLY2\ \mult\ x\ y$ & equiv & H, P & \cmark & 0.279 \\
      \hline
      18 & $\REPEAT\ x\ (\ADD\ x)\ a\ y\ = a + \MULT\ x\ y$ & equiv & H, P & \cmark & 0.317 \\
      \hline
      \hline
      19 & $x \leq 101 \Rightarrow \MCC\ x = 91$ & nonrel & I & \cmark & 0.165 \\
      \hline
      20 & $x \geq 0 \land y \geq 0 \Rightarrow \ACK\ x\ y > y$ & nonrel & I & \cmark & 0.212 \\
      \hline
      21 & $x \geq 0 \Rightarrow 2 \times \SUM\ x = x \times (x + 1)$ & nonrel & N & \cmark & 0.196 \\
      \hline
      22 & $\mathtt{dyn\_sys}\ 0. \notreds \ASSERT\ \FALSE$ & nonrel & R,N & \cmark & 0.144 \\
      \hline
      \hline
      23 & $\FMOD\ y\ x = \FMOD\ y\ (\FMOD\ y\ x)$ & idem & P & \cmark & 7.712 \\
      \hline
      24 & $\NONINTER\ h_1\ l_1\ l_2\ l_3 = \NONINTER\ h_2\ l_1\ l_2\ l_3$ & nonint & P & \cmark & 0.662 \\
      \hline
      25 & $\mathbf{try}\ \FINDOPT\ p\ l = \SOME\ (\FIND\ p\ l)\ \mathbf{with}\ $ & & & & \\
         & $\quad \NOTFOUND \rightarrow \FINDOPT\ p\ l = \NONE$ & equiv & H, E & \cmark & 0.758 \\
      \hline
      26 & $\mathbf{try}\ \MEM\ (\FIND\ ((=)\ x)\ l)\ l\ \mathbf{with}\ \NOTFOUND \rightarrow \lnot(\MEM\ x\ l)$ & equiv & H, E & \cmark & 0.764 \\
      \hline
      27 & $\SUML\ l = \FOLDL\ (+)\ 0\ l$ & equiv & H & \cmark & 3.681 \\
      \hline
      28 & $\SUML\ l = \FOLDR\ (+)\ l\ 0$ & equiv & H & \cmark & 0.329 \\
      \hline
      29 & $\mathtt{sum\_fun}\ \mathtt{randpos}\ n > 0$ & equiv & H,D & \cmark & 0.240 \\
      \hline
      \hline
      30 & $\MULT\ x\ y = \MULTC(x, y)$ & equiv & P, C & \cmark & 0.217 \\
      \hline
    \end{tabular}
\begin{tablenotes}
\item[\dag] A lemma $P_\multacc(x, y, a, r) \Rightarrow P_\multacc(x, y, a - x, r - x)$ is used
\item[\ddag] A lemma $P_\mult(x, y, r) \Rightarrow P_\mult(x - 1, y, r - y)$ is used
\end{tablenotes}
$P_f$ above represents the predicate that axiomatizes the function $f$. \\
The experiments were conducted on a machine with Intel(R) Xeon(R) CPU E5-2680 v3 (2.50 GHz, 16 GB of memory).
\end{threeparttable}
  \end{center}
\end{minipage}
\end{table*}

\subsection{Experiments on benchmark set consisting of programs with various advanced language features}
\label{sec:various}

We prepared and tested our tool with the second benchmark set
consisting of (mostly relational) assertion safety verification
problems of programs that use various advanced language features such
as partial (i.e., possibly non-terminating) functions, higher-order
functions, exceptions, non-determinism, algebraic data types, and
non-inductively defined data types (e.g., real numbers).  The
benchmark set also includes integer functions with complex recursion
and a verification problem concerning the equivalence of programs
written in different language paradigms.
All the verification problems except four (ID$19$--$22$ in
Table~\ref{tab:exp}) are relational ones where safe inductive
invariants are not expressible in \QFLIA, and therefore not solvable
by the previous Horn constraint solvers.
As shown in Section~\ref{sec:hornex}, these verification problems are
naturally and automatically axiomatized by our method using predicates
defined by Horn clause constraints as the least satisfying
interpretation.
By contrast, these assertion safety verification problems cannot be
straightforwardly axiomatized and proved by the previous automated
inductive theorem provers based on logics of pure total functions on
inductively-defined data structures: the axiomatization process of
these verification problems using pure total functions often requires
users' manual intervention and possibly causes a negative effect on
the automation of induction, because, in the process, one needs to
take into consideration the evaluation strategies and complex control
flows caused by higher-order functions and side-effects such as
non-termination, exceptions, and non-determinism.  Additionally, the
axiomatization process needs to preserve branching and calling context
information in order to perform path- and context-sensitive
verification.

Table~\ref{tab:exp} summarizes the experiment results on the benchmark
set.  The column ``specification'' represents the relational
specification verified and the column ``kind'' shows the kind of the
specification, where ``equiv'', ``assoc'', ``comm'', ``dist'', ``mono'',
``idem'', ``nonint'', and ``nonrel'' respectively represent the
equivalence, associativity, commutativity, distributivity,
monotonicity, idempotency, non-interference, and non-relational.
The column ``language features'' shows the language features used in
the verification problem, where each character has the following
meaning.
\begin{itemize}
\item[H:] higher-order functions
\item[E:] exceptions
\item[P:] partial (i.e., possibly non-terminating) functions
\item[D:] demonic non-determinism
\item[R:] real functions
\item[I:] integer functions with complex recursion
\item[N:] nonlinear functions
\item[C:] procedures written in different programming paradigms
\end{itemize}
The column ``result'' represents whether our verification method
succeeded \cmark or failed \xmark.  The column ``time'' represents the
elapsed time for verification in seconds.

Overall, the experiment results are promising, which show that our
tool can automatically solve relational verification problems that use
various advanced language features, in a practical time with
surprisingly few user-specified lemmas.
We also want to emphasize that the problem ID$5$, which required a
lemma, is a relational verification problem involving two function
calls with significantly different control flows: one recureses on $x$
and the other recurses on $y$.  Thus, the result demonstrates an
advantage of our induction-based method that it can exploit lemmas to
fill the gap between function calls with different control flows.
Our tool, however, failed to verify the distributivity ID$7$ of
$\mult$, the associativity ID$8$ of $\mult$, and the equivalence
ID$15$ of $\SUMFROMDOWN$ and $\SUMFROMUP$.
ID$7$ could be reduced to ID$6$ and solved, if a lemma $P_\mult(x, y,
r) \Rightarrow P_\mult(y, x, r)$, which represents the commutativity
of $\mult$, was used to rewrite the conjecture
\[P_\mult(x, y + z, s_1) \land P_\mult(x, y, s_2) \land P_\mult(x, z, s_3) \Rightarrow s_1=s_2+s_3\]
obtained from the specification $\mult\ x\ (y + z) = \mult\ x\ y
+ \mult\ x\ z$ into
\[P_\mult(y + z, x, s_1) \land P_\mult(y, x, s_2) \land P_\mult(z, x, s_3) \Rightarrow s_1=s_2+s_3\]
by replacing atoms of the form $P_\mult(t_1,t_2,t_3)$ with
$P_\mult(t_2,t_1,t_3)$.  The rule \rn{ApplyP}, however, replaces each
atom $P_\mult(t_1, t_2, t_3)$ with $P_\mult(t_1, t_2, t_3) \land
P_\mult(t_2, t_1, t_3)$ instead by keeping the original atom so that
we can monotonically increase the current knowledge.  Our tool
supports an option for the rule \rn{ApplyP} of eliminating the
original atom, and if it is enabled, ID$7$ is verified.
The associativity verification problem ID$8$ is even more difficult.
In addition to the above lemma, a lemma $P_\mult(x + y, z,
r) \Rightarrow \exists s_1,s_2.( P_\mult(x, z, s_1) \land P_\mult(y,
z, s_2) \land r = s_1 + s_2 )$ is required.  This lemma, however, is
currently not of the form supported by our inductive proof system.
In ID$15$, the functions $\SUMFROMDOWN$ and $\SUMFROMUP$ use different
recursion parameters (resp. $y$ and $x$),
and requires lemmas $P_\SUMFROMDOWN(x, y, s) \Rightarrow \exists
s_1,s_2.( P_\SUMFROMDOWN(0, y, s_1) \land P_\SUMFROMDOWN(0, x - 1,
s_2) \land s = s_1 - s_2 )$ and $P_\SUMFROMUP(x, y,
s) \Rightarrow \exists s_1,s_2.( P_\SUMFROMDOWN(0, y, s_1) \land
P_\SUMFROMDOWN(0, x - 1, s_2) \land s = s_1 - s_2 )$.  These lemmas
are provable by induction on the derivation of $P_\SUMFROMDOWN(x, y,
s)$ and $P_\SUMFROMUP(x, y, s)$, respectively.  However, as in the
case of ID$8$, our proof system does not support the form of the
lemmas.  To put it differently, ID$8$ and ID$15$ demonstrate the
incompleteness of our inductive proof system.  Our future work thus
includes an extension of the proof system to support more general form
of lemmas and judgments.

\section{Related Work}
\label{sec:related}
As discussed in Section~\ref{sec:intro}, Horn constraint solving
methods have been extensively
studied~\cite{Unno2009,Terauchi2010,Hoder2011,Gupta2011,Grebenshchikov2012,Rummer2013,McMillan2013,Unno2015}.
In contrast to the proposed induction based method, these methods do
not support Horn clause constraints over the theories of algebraic
data structures and nonlinear arithmetics, and cannot verify most if
not all relational specifications shown in Section~\ref{sec:exp}.

Because state-of-the-art SMT solvers such as Z3~\cite{Moura2008} and
CVC4 support quantifier instantiation heuristics, one may think that
they alone are sufficient for checking the validity of the logical
interpretation of Horn clause constraints shown in
Section~\ref{sec:notation}.  However, they alone are not sufficient
for proving most conjectures that require nontrivial use of induction
such as the benchmark problems in Section~\ref{sec:exp}.\footnote{This
point is also mentioned in the tutorial of Z3
(\url{http://rise4fun.com/Z3/tutorial/guide}).}  In
fact, \cite{Reynolds2015} reports that Z3 (resp. CVC4 without
induction) alone have proved only 35 (resp. 34) out of 85 problems in
the IsaPlanner benchmark set.

Automated inductive theorem proving techniques and tools have long
been studied, for example and to name a few: the Boyer-Moore theorem
provers~\cite{Kaufmann2000} like ACL2s~\cite{Chamarthi2011}, rewriting
induction provers~\cite{Reddy1990} like SPIKE~\cite{Bouhoula1992},
proof planners like
CLAM~\cite{Bundy1990,Bundy2001,Ireland1996,Johansson2010} and
IsaPlanner~\cite{Dixon2003,Dixon2004}, and SMT-based induction provers
like Leon~\cite{Suter2011a}, Dafny~\cite{Leino2012},
Zeno~\cite{Sonnex2012}, HipSpec~\cite{Claessen2013}, and
CVC4~\cite{Reynolds2015}.
These automated provers are mostly based on logics of pure total
functions over inductive data types.  Consequently, users of these
provers are required to axiomatize advanced language features and
specifications (e.g., ones discussed in Section~\ref{sec:hornex})
using pure total functions as necessary.
The axiomatization process, however, is non-trivial, error-prone, and
possibly causes a negative effect on the automation of induction.  For
example, if a partial function (e.g., $f(x)=f(x)+1$) is input, Zeno
goes into an infinite loop and CVC4 is unsound (unless control
literals proposed in \cite{Suter2011a} are used in the
axiomatization).  We have also confirmed that CVC4 failed to verify
complex integer functions like the McCarthy 91 and the Ackermann
functions (resp. ID$19$ and ID$20$ in Table~\ref{tab:exp}).
By contrast, our method supports advanced language features and
specifications via Horn-clause encoding of their semantics based on
program logics such as Hoare logics and refinement type systems.

To aid verification of relational specifications of functional
programs, Giesl~\cite{Giesl2000} proposed context-moving
transformations and Asada et al.~\cite{Asada2015} proposed a kind of
tupling transformation.  SymDiff~\cite{Lahiri2012,Hawblitzel2013} is a
transformation-based tool built on top of Boogie~\cite{Barnett2006}
for equivalence verification of imperative programs.
Self-composition~\cite{Barthe2004} is a program transformation
technique to reduce k-safety~\cite{Terauchi2005,Clarkson2008}
verification into ordinary safety verification, and has been applied
to non-interference~\cite{Terauchi2005,Unno2006,Barthe2011} and
regression verification~\cite{Felsing2014} of imperative programs.
These transformations are useful for some patterns of relational
verification problems, which are, however, less flexible than our
approach based on a more general principle of induction.  For example,
Asada et al.'s transformation enables verification of the functional
equivalence of recursive functions with the same recursion pattern
(e.g., ID$1$ in Table~\ref{tab:exp}), but does not help verification
of the commutativity of $\mult$ (ID$5$ in Table~\ref{tab:exp}).
Because each transformation is designed for a particular target
language, the transformations cannot be applied to aid relational
verification across programs written in different paradigms (e.g.,
ID$30$ in Table~\ref{tab:exp}).  Moreover, the correctness proof of
the transformations tends to be harder because it involves the
operational semantics of the target language, which is complex
compared to the logical semantics of Horn clause constraints.

There have been proposed program logics for relational
verification~\cite{Ciobaca2014,Barthe2012,Godlin2013,Barthe2015}.  In
particular, the relational refinement type system proposed
in \cite{Barthe2015} can be applied to differential privacy and other
relational security verification problems of higher-order functional
programs. This approach is, however, not automated.

\section{Conclusion and Future Work}
\label{sec:conc}
We have proposed a novel Horn constraint solving method based on an
inductive proof system and an SMT-based technique to automate proof
search in the system.  We have shown that our method is able to solve
Horn clause constraints obtained from relational verification problems
that were not possible with the previous methods based on
interpolating theorem proving.  Furthermore, our novel combination of
Horn clause constraints with inductive theorem proving enabled our
method to automatically axiomatize and verify relational
specifications of programs that use various advanced language
features.

As a future work, we are planning to extend our inductive proof system
to support more general form of lemmas and judgments.  We are also
planning to extend our proof search method to support automatic lemma
discovery as in the state-of-the-art inductive theorem
provers~\cite{Chamarthi2011,Sonnex2012,Claessen2013,Reynolds2015}.  To
aid users to better understand verification results of our method, it
is important to generate a symbolic representation of a solution of
the original Horn constraint set from the found inductive proof.  It
is however often the case that a solution of Horn constraint sets that
require relational analysis (e.g., $\hmult$) is not expressible by a
formula of the underlying logic.  It therefore seems fruitful to
generate a symbolic representation of mutual summaries in the sense
of \cite{Hawblitzel2013} across multiple predicates (e.g., $P,Q$ of
$\hmult$).

\newpage
\appendix

\section{Proof of Lemma~\ref{lem:sound}}
\label{soundness}

We first show lemmas used to prove Lemma~\ref{lem:sound}.
\begin{lemma}
\label{lem:mem}
$\models \phi \Rightarrow \In{P(\seq{t})}{A}$ implies
$\models \bigwedge A \land \phi \Rightarrow P(\seq{t})$.
\end{lemma}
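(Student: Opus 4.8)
The plan is to argue directly from the semantics. Fix an arbitrary predicate interpretation $\rho$ together with an assignment $\eta$ for the free term variables, and suppose $\rho,\eta \models \bigwedge A \land \phi$; the goal is to show $\rho,\eta \models P(\seq{t})$. First I would invoke the hypothesis $\models \phi \Rightarrow \In{P(\seq{t})}{A}$: since $\rho,\eta \models \phi$, we obtain $\rho,\eta \models \In{P(\seq{t})}{A}$, which by definition is $\rho,\eta \models \bigvee_{P(\seq{t'}) \in A} \seq{t} = \seq{t'}$. Hence there exists an atom $P(\seq{t'}) \in A$ with $\rho,\eta \models \seq{t} = \seq{t'}$, i.e.\ the argument tuples $\eta(\seq{t})$ and $\eta(\seq{t'})$ coincide componentwise.

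Next I would use the other conjunct of the assumption: since $\rho,\eta \models \bigwedge A$ and $P(\seq{t'}) \in A$, we have $\rho,\eta \models P(\seq{t'})$, that is $\eta(\seq{t'}) \in \rho(P)$. Combining this with $\eta(\seq{t}) = \eta(\seq{t'})$ gives $\eta(\seq{t}) \in \rho(P)$, i.e.\ $\rho,\eta \models P(\seq{t})$. Since $\rho$ and $\eta$ were arbitrary models of $\bigwedge A \land \phi$, this establishes $\models \bigwedge A \land \phi \Rightarrow P(\seq{t})$.

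The only point worth spelling out explicitly is the degenerate case where $A$ contains no atom over the predicate $P$: then $\In{P(\seq{t})}{A}$ is the empty disjunction, which is $\bot$, so the hypothesis reduces to $\models \phi \Rightarrow \bot$, i.e.\ $\phi$ is unsatisfiable, and the conclusion $\models \bigwedge A \land \phi \Rightarrow P(\seq{t})$ then holds vacuously. I do not anticipate any real obstacle: the argument is just an unfolding of the definition of $\In{\cdot}{\cdot}$ together with the elementary fact that membership in $\rho(P)$ respects equality of argument tuples. This lemma (and its analogue for $\Sub{\cdot}{\cdot}$, obtained by a straightforward conjunction over atoms) is then used freely in the soundness proof of the individual inference rules.
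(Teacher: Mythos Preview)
Your proof is correct and is exactly the unfolding that the paper's one-line proof (``By the definition of $\In{P(\seq{t})}{A}$'') leaves implicit; the approach is the same, just spelled out in full semantic detail.
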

\begin{proof}
By the definition of $\In{P(\seq{t})}{A}$.
\end{proof}

\begin{lemma}
\label{lem:sub}
$\models \phi \Rightarrow \Sub{A_1}{A_2}$ implies $\models \bigwedge
A_2 \land \phi \Rightarrow \bigwedge A_1$.
\end{lemma}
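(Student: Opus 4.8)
The plan is to reduce Lemma~\ref{lem:sub} directly to Lemma~\ref{lem:mem} by unfolding the definition of $\Sub{A_1}{A_2}$. Recall that $\Sub{A_1}{A_2} \triangleq \bigwedge_{P(\seq{t}) \in A_1} \In{P(\seq{t})}{A_2}$, so the hypothesis $\models \phi \Rightarrow \Sub{A_1}{A_2}$ is, by the semantics of conjunction, equivalent to the family of entailments $\models \phi \Rightarrow \In{P(\seq{t})}{A_2}$ ranging over all atoms $P(\seq{t}) \in A_1$.

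First I would fix an arbitrary atom $P(\seq{t}) \in A_1$ and invoke Lemma~\ref{lem:mem} with $A \mathrel{:=} A_2$, which yields $\models \bigwedge A_2 \land \phi \Rightarrow P(\seq{t})$. Since this holds for every atom in $A_1$, I would then conjoin these conclusions: a model satisfying $\bigwedge A_2 \land \phi$ satisfies $P(\seq{t})$ for each $P(\seq{t}) \in A_1$, hence satisfies $\bigwedge A_1$. This gives $\models \bigwedge A_2 \land \phi \Rightarrow \bigwedge A_1$, as required. (If $A_1 = \emptyset$ the claim is trivial since $\bigwedge \emptyset = \top$.)

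There is essentially no obstacle here; the proof is a one-line consequence of Lemma~\ref{lem:mem} together with the definitional expansion of $\Sub{\cdot}{\cdot}$ into a conjunction of $\In{\cdot}{\cdot}$ formulas. The only point requiring any care is making the quantification over atoms of $A_1$ explicit, so that applying Lemma~\ref{lem:mem} pointwise and then recombining is clearly justified. I would therefore keep the proof to a couple of sentences along these lines.
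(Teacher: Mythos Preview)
Your proposal is correct and matches the paper's own proof, which is the one-liner ``By the definition of $\Sub{A_1}{A_2}$ and Lemma~\ref{lem:mem}.'' You have simply spelled out in detail exactly what that sentence means: unfold $\Sub{A_1}{A_2}$ into a conjunction of $\In{P(\seq{t})}{A_2}$ formulas, apply Lemma~\ref{lem:mem} to each conjunct, and recombine.
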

\begin{proof}
By the definition of $\Sub{A_1}{A_2}$ and Lemma~\ref{lem:mem}.
\end{proof}

\begin{lemma}
\label{lem:subder}
Suppose that $(g, A', \phi', h) \in \Gamma$ and
$\models \phi \Rightarrow \In{\sigma g}{A}$ for some $\sigma$ with
$\dom{\sigma} = \fvs{A'} \cup \fvs{h}$.  It then follows that
$\models \sembrack{\Gamma}(A) \land \bigwedge \sigma
A' \land \phi \Rightarrow (\sigma h \lor \forall \seq{x}. \neg
(\sigma \phi'))$, where $\set{\seq{x}}
= \fvs{\phi'} \setminus \dom{\sigma}$.
\end{lemma}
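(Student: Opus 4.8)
The plan is to argue semantically, fixing an arbitrary model $\mathcal M$ of the antecedent of the asserted implication and deriving its consequent, splitting on the shape of $g$. First I would unpack the contribution of the selected element $(g,A',\phi',h)\in\Gamma$ to $\sembrack{\Gamma}(A)$. If $g=\bullet$, then $\In{\sigma g}{A}=\top$, so the extra hypothesis is vacuous, and $\sembrack{\Gamma}(A)$ has $\forall\seq{z}.\bigl(\bigwedge A'\land\phi'\Rightarrow h\bigr)$ as a conjunct, with $\set{\seq{z}}=\fvs{A',\phi',h}$. If $g=\alpha\triangleright P(\seq{t})$, then unfolding the definition $\In{\sigma g}{A}=\bigvee_{\mk{P}{M}(\seq{s})\in A,\,\alpha\in M}\sigma\seq{t}=\seq{s}$, the hypothesis $\models\phi\Rightarrow\In{\sigma g}{A}$ says that every model of $\phi$ makes $\sigma\seq{t}=\seq{s}$ true for at least one such atom $\mk{P}{M}(\seq{s})\in A$ with $\alpha\in M$, and for each such atom $\sembrack{\Gamma}(A)$ contains the conjunct $\forall\seq{z}.\bigl(\bigwedge A'\land\phi'\land\seq{t}=\seq{s}\Rightarrow h\bigr)$ with $\set{\seq{z}}=\fvs{\seq{t},A',\phi',h}$. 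In the latter case, since the atom occurring in $g$ also occurs in $A'$, $\fvs{\seq{t}}\subseteq\fvs{A'}\subseteq\dom{\sigma}$; hence in either case $\set{\seq{z}}\setminus\dom{\sigma}=\fvs{\phi'}\setminus\dom{\sigma}=\set{\seq{x}}$, and by the standard freshness convention for the (universally quantified) variables of elements of $\Gamma$ we may assume that $\seq{x}$ and the variables occurring in the range of $\sigma$ are disjoint from those of $A$, $\phi$, and $\sigma A'$.

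The core is then a single uniform first-order deduction. Fix $\mathcal M\models\sembrack{\Gamma}(A)\land\bigwedge\sigma A'\land\phi$ and suppose $\mathcal M\not\models\forall\seq{x}.\neg(\sigma\phi')$; I must show $\mathcal M\models\sigma h$. In the case $g=\alpha\triangleright P(\seq{t})$, since $\mathcal M\models\phi$ there is a fixed atom $\mk{P}{M}(\seq{s})\in A$ with $\alpha\in M$ and $\mathcal M\models\sigma\seq{t}=\seq{s}$; in the case $g=\bullet$, read the formulas below with the premise $\seq{t}=\seq{s}$ simply omitted. Either way, $\mathcal M$ satisfies the $\seq{z}$-quantified conjunct $\forall\seq{z}.\bigl(\bigwedge A'\land\phi'\land\seq{t}=\seq{s}\Rightarrow h\bigr)$ identified above. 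Instantiate its bound variables in $\mathcal M$: those in $\dom{\sigma}$ by the $\mathcal M$-values prescribed by $\sigma$, and the remaining ones $\seq{x}$ by an assignment witnessing $\mathcal M\not\models\forall\seq{x}.\neg(\sigma\phi')$, i.e., making $\sigma\phi'$ true. Under this instantiation the body's premises become $\bigwedge\sigma A'$, $\sigma\phi'$, and $\sigma\seq{t}=\seq{s}$ (using $\sigma\seq{s}=\seq{s}$ by freshness), all of which hold in $\mathcal M$; hence $\sigma h$ holds in $\mathcal M$, so $\mathcal M\models\sigma h\lor\forall\seq{x}.\neg(\sigma\phi')$, as required. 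Lemma~\ref{lem:mem} is not strictly needed here, though it can be used to phrase the extraction of a satisfied disjunct of $\In{\sigma g}{A}$ via $\bigwedge A\land\phi\Rightarrow\sigma g$.

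The main obstacle is not any single deep step but careful bookkeeping around binders and freshness: one must ensure that $\sigma$ may be applied to the $\seq{z}$-quantified conjunct of $\sembrack{\Gamma}(A)$ leaving exactly the variables $\seq{x}$ still quantified, that $\sigma$ does not capture the equation variables $\seq{s}$ (which come from the current atom set $A$, not from the element of $\Gamma$), and that $\seq{x}$ does not clash with the free variables of $A$, $\phi$, or $\sigma A'$, so that both the per-model choice of a satisfied disjunct of $\In{\sigma g}{A}$ and the per-model choice of the witnessing $\seq{x}$-assignment are legitimate. Once the naming conventions of Section~\ref{sec:proof_system} are made explicit, each case of the lemma reduces to the one-line argument of the preceding paragraph.
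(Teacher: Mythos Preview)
Your proof is correct and follows essentially the same route as the paper: a case split on the shape of $g$, identification of the relevant universally quantified conjunct of $\sembrack{\Gamma}(A)$, and instantiation by $\sigma$ (leaving exactly the variables $\seq{x}$ still bound) to obtain the desired implication. Your per-model argument is in fact slightly more careful than the paper's, which in the case $g=\alpha\triangleright P(\seq{t})$ asserts the existence of a \emph{single} atom $\mk{P}{M}(\seq{t}')\in A$ with $\alpha\in M$ and $\models\phi\Rightarrow\sigma\seq{t}=\seq{t}'$, whereas you correctly allow the witnessing disjunct of $\In{\sigma g}{A}$ to depend on the model.
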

\begin{proof}
We perform a case analysis on $g$.
\begin{itemize}
\item Suppose that $g = \bullet$.  By the definition of $\sembrack{\Gamma}(A)$, we
obtain
\[
\models \sembrack{\Gamma}(A) \Rightarrow \forall \seq{x}'.\left(\bigwedge A'\land \phi' \Rightarrow h \right),
\]
where $\set{\seq{x}'} = \fvs{A'} \cup \fvs{\phi'} \cup \fvs{h}$.
Therefore, we get
\[
\models \sembrack{\Gamma}(A) \Rightarrow \bigwedge \sigma A' \land \exists \seq{x}.(\sigma \phi') \Rightarrow \sigma h,
\]
where $\set{\seq{x}} = \fvs{\phi'} \setminus \dom{\sigma}$.  It
immediately follows that
\[
\models \sembrack{\Gamma}(A) \land \bigwedge \sigma A' \land \phi \Rightarrow
(\sigma h \lor \forall \seq{x}. \neg (\sigma \phi')).
\]
\item Suppose that $g = \alpha \triangleright P(\seq{t})$.  We assume that $\phi$ is not equivalent to $\bot$ (otherwise, the lemma is trivial).  By $\models \phi \Rightarrow \In{\sigma g}{A}$, there is $\mkk{P}{M}{}(\seq{t}') \in A$ such that $\alpha \in M$ and
$\models \phi \Rightarrow \sigma \seq{t}=\seq{t}'$.  By the definition
of $\sembrack{\Gamma}(A)$, we obtain
\[
\models \sembrack{\Gamma}(A) \Rightarrow \forall \seq{x}'.\left(\bigwedge A'\land \phi' \land \seq{t}=\seq{t}' \Rightarrow h \right),
\]
where $\set{\seq{x}'}
= \fvs{\seq{t}} \cup \fvs{A'} \cup \fvs{\phi'} \cup \fvs{h}$.
Therefore, we get
\[
\models \sembrack{\Gamma}(A) \Rightarrow \bigwedge \sigma A' \land \exists \seq{x}.(\sigma \phi') \land \sigma \seq{t}=\seq{t}' \Rightarrow \sigma h,
\]
where $\set{\seq{x}} = \fvs{\phi'} \setminus \dom{\sigma}$.  We thus
obtain
\[
\models \sembrack{\Gamma}(A) \land \bigwedge \sigma A' \land \phi \Rightarrow
(\sigma h \lor \forall \seq{x}. \neg (\sigma \phi')).
\]
\end{itemize}
\end{proof}

\begin{lemma}
\label{lem:unfold1}
if $k_1 \leq k_2$, then
$\models \sem{\Gamma}{A}{k_2} \Rightarrow \sem{\Gamma}{A}{k_1}$ holds.
\end{lemma}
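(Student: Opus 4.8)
The plan is to reduce the statement to a one-step inclusion and then argue by induction on the index. First I would observe that it suffices to prove, for every finite set $A$ of annotated atoms and every $n \in \natset$, that $\models \sem{\Gamma}{A}{n+1} \Rightarrow \sem{\Gamma}{A}{n}$: the general case $k_1 \leq k_2$ then follows by composing $k_2 - k_1$ instances of this one-step implication (transitivity of $\models{}\cdot\Rightarrow\cdot$), with the case $k_1 = k_2$ being trivial. Note that $\Gamma$ stays fixed throughout the recursion, so the only parameter that varies is the atom set.

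I would then prove the one-step claim by induction on $n$, \emph{with the atom set $A$ universally quantified in the induction hypothesis}; this is essential, because the defining equation for $\sem{\Gamma}{A}{n+1}$ refers to $\sem{\Gamma}{\mkk{A_i}{M \cup \set{\alpha}}{\circ}}{n}$ at the shifted atom sets $\mkk{A_i}{M \cup \set{\alpha}}{\circ}$. For the base case $n = 0$ we have $\sem{\Gamma}{A}{1} = \sembrack{\Gamma}(A) \land \Psi$ for some formula $\Psi$, which entails $\sembrack{\Gamma}(A) = \sem{\Gamma}{A}{0}$ by conjunction elimination.

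For the inductive step (with $n \geq 1$), fix $A$ and, for each $\mkk{P}{M}{\alpha}(\seq{t}) \in A$, write $\D(P)(\seq{t}) = \bigvee_{i=1}^{m} \exists \seq{x_i}.\,(\phi_i \land \bigwedge A_i)$ as in the definition of $\sem{\cdot}{\cdot}{\cdot}$. Unfolding the definitions, $\sem{\Gamma}{A}{n+1}$ and $\sem{\Gamma}{A}{n}$ both have $\sembrack{\Gamma}(A)$ as a conjunct, and for each such atom they have, respectively, the conjuncts $\bigvee_{i=1}^{m} \exists \seq{x_i}.\,(\phi_i \land \sem{\Gamma}{\mkk{A_i}{M \cup \set{\alpha}}{\circ}}{n})$ and $\bigvee_{i=1}^{m} \exists \seq{x_i}.\,(\phi_i \land \sem{\Gamma}{\mkk{A_i}{M \cup \set{\alpha}}{\circ}}{n-1})$. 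The induction hypothesis, instantiated at the atom set $\mkk{A_i}{M \cup \set{\alpha}}{\circ}$, gives $\models \sem{\Gamma}{\mkk{A_i}{M \cup \set{\alpha}}{\circ}}{n} \Rightarrow \sem{\Gamma}{\mkk{A_i}{M \cup \set{\alpha}}{\circ}}{n-1}$, and since entailment is monotone under $\land$, under $\exists \seq{x_i}$, under the finite disjunction $\bigvee_{i}$, and under the finite conjunction over the atoms of $A$, we conclude $\models \sem{\Gamma}{A}{n+1} \Rightarrow \sem{\Gamma}{A}{n}$.

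The argument is essentially routine structural monotonicity; the one point that needs care — which I would flag as the main (modest) obstacle — is to quantify over all atom sets $A$ before performing the induction on $n$, so that the hypothesis can legitimately be applied at the atom sets $\mkk{A_i}{M \cup \set{\alpha}}{\circ}$ occurring inside the recursive call.
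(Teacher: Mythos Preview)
Your proof is correct and is essentially what the paper intends: the paper's own proof is the single line ``By the definition of $\sem{\bullet}{\bullet}{\bullet}$,'' and what you have written is precisely the routine unfolding of that remark (reduce to the one-step implication and induct on the index with $A$ universally quantified, using monotonicity of $\land$, $\lor$, and $\exists$). There is nothing to add.
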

\begin{proof}
By the definition of $\sem{\bullet}{\bullet}{\bullet}$.
\end{proof}

\begin{lemma}
\label{lem:unfold2}
We have
\[
\leastmodeld{\D} \models \sem{\Gamma}{A}{k} \land P(\seq{t})
\Rightarrow \sem{\Gamma}{A \cup \set{\mkk{P}{\emptyset}{\circ}(\seq{t})}}{k}.
\]
\end{lemma}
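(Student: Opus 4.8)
The plan is to decompose $\sem{\Gamma}{A \cup \set{\mkk{P}{\emptyset}{\circ}(\seq{t})}}{k}$ into $\sem{\Gamma}{A}{k}$ conjoined with a single ``new'' conjunct contributed by the added atom, and then to discharge that conjunct using the fixed-point character of $\leastmodeld{\D}$ together with an auxiliary induction on $k$. For the decomposition: the freshly added atom $\mkk{P}{\emptyset}{\circ}(\seq{t})$ carries the empty set $\emptyset$ as its $M$-annotation, and since no induction hypothesis in $\Gamma$ carries the reserved identifier $\circ$, inspecting the definition of $\sembrack{\Gamma}(\cdot)$ shows $\sembrack{\Gamma}(A \cup \set{\mkk{P}{\emptyset}{\circ}(\seq{t})}) = \sembrack{\Gamma}(A)$: a hypothesis instantiation against the new atom would require an identifier in $\emptyset$, while the user-lemma instantiations do not depend on the atom set at all. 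Unfolding the definition of $\sem{\Gamma}{\cdot}{k}$ then yields, for $k \ge 1$,
\[
\sem{\Gamma}{A \cup \set{\mkk{P}{\emptyset}{\circ}(\seq{t})}}{k}
\equiv
\sem{\Gamma}{A}{k} \land \bigvee_{i=1}^m \exists \seq{x_i}.\left( \phi_i \land \sem{\Gamma}{\mkk{A_i}{\set{\circ}}{\circ}}{k-1} \right),
\]
where $\D(P)(\seq{t}) = \bigvee_{i=1}^m \exists \seq{x_i}.(\phi_i \land \bigwedge A_i)$; for $k = 0$ the two formulas coincide outright. So it suffices to show the extra disjunction holds in $\leastmodeld{\D}$ whenever $\leastmodeld{\D} \models \sem{\Gamma}{A}{k}$ and $\leastmodeld{\D} \models P(\seq{t})$.

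To discharge the new conjunct, fix a valuation of the free term variables satisfying the two hypotheses. Since $\leastmodeld{\D}$ is a fixed point of $F_{\D}$, $\leastmodeld{\D} \models P(\seq{t})$ is equivalent to $\leastmodeld{\D} \models \D(P)(\seq{t})$, so there is an index $i$ and an extension of the valuation to $\seq{x_i}$ making $\phi_i$ true and every atom of $A_i$ true in $\leastmodeld{\D}$. It then remains to establish $\leastmodeld{\D} \models \sem{\Gamma}{\mkk{A_i}{\set{\circ}}{\circ}}{k-1}$, for which I would prove the following auxiliary claim by induction on $j$: for any atom set $B$ all of whose atoms have $M$-annotation $\set{\circ}$, if $\leastmodeld{\D}$ validates every user-specified lemma of $\Gamma$ and $\leastmodeld{\D} \models Q(\seq{u})$ for every $Q(\seq{u}) \in B$, then $\leastmodeld{\D} \models \sem{\Gamma}{B}{j}$. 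The case $j = 0$ holds because $\sembrack{\Gamma}(B)$ then reduces to exactly the conjunction of the user lemmas (again using that no hypothesis has $\alpha = \circ$). In the induction step, each $Q(\seq{u}) \in B$ furnishes, via the fixed-point property, a disjunct of $\D(Q)(\seq{u})$ and a valuation of its existentials validating the corresponding $\phi$ and all of the atoms $C$; those atoms again have $M$-annotation $\set{\circ}$ (as $\set{\circ} \cup \set{\circ} = \set{\circ}$) and are true in $\leastmodeld{\D}$, so the induction hypothesis gives $\sem{\Gamma}{\mkk{C}{\set{\circ}}{\circ}}{j-1}$, which assembles into $\sem{\Gamma}{B}{j}$. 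Applying this claim with $B = \mkk{A_i}{\set{\circ}}{\circ}$ and $j = k-1$ — and noting that $\leastmodeld{\D} \models \sem{\Gamma}{A}{k}$ entails validity of the user lemmas of $\Gamma$ — closes the proof.

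The delicate point, and what I expect to be the main obstacle, is the bookkeeping of the $M$-annotations: the entire argument hinges on the fact that the freshly added atom, and every atom reached by unfolding it inside the semantic function, carries an $M$-annotation containing only $\circ$, so that no induction hypothesis of $\Gamma$ is ever instantiable against it and $\sembrack{\Gamma}$ therefore contributes nothing beyond the user lemmas. Once this invariant is pinned down precisely, the remaining steps are a routine exploitation of $\leastmodeld{\D} = F_{\D}(\leastmodeld{\D})$ and the basic properties of the semantic functions established above.
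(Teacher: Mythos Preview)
Your proof is correct. The paper's own proof is a single line — ``By the definition of $\sem{\bullet}{\bullet}{\bullet}$'' — so you have supplied in full the argument the authors leave implicit: the decomposition of $\sem{\Gamma}{A \cup \set{\mkk{P}{\emptyset}{\circ}(\seq{t})}}{k}$ into $\sem{\Gamma}{A}{k}$ plus a fresh conjunct, the observation that the $\emptyset$/$\set{\circ}$ annotations block all induction-hypothesis instantiations (leaving only the closed user lemmas), and the auxiliary induction on $j$ that discharges the new conjunct via the fixed-point equation $\leastmodeld{\D} = F_{\D}(\leastmodeld{\D})$. This is not a different route from the paper; it is the paper's route made explicit, and the bookkeeping invariant you single out as the ``delicate point'' is exactly what the one-line proof is relying on.
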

\begin{proof}
By the definition of $\sem{\bullet}{\bullet}{\bullet}$.
\end{proof}

\begin{lemma}
\label{lem:unfold3}
If $\mkk{P}{M}{\alpha}(\seq{t}) \in A$, then for each $i \in \set{1,\dots,m}$,
\[
\leastmodeld{\D} \models \sem{\Gamma}{A}{k+1} \Rightarrow \bigvee_{i=1}^m \exists \seq{x_i}.\left( \phi_i \land \sem{\Gamma}{A \cup \mkk{A_i}{M \cup \set{\alpha}}{\circ}}{k} \right),
\]
where $\D(P)(\seq{t}) = \bigvee_{i=1}^m \exists \seq{x_i}.\left( \phi_i \land \bigwedge A_i \right)$.
\end{lemma}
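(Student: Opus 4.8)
The plan is to prove the displayed implication by unfolding the definition of $\sem{\Gamma}{\cdot}{\cdot}$ once and then rearranging quantifiers, conjunctions, and disjunctions. The key observation is that, since $\mkk{P}{M}{\alpha}(\seq{t}) \in A$ and $\D(P)(\seq{t}) = \bigvee_{i=1}^m \exists \seq{x_i}.(\phi_i \land \bigwedge A_i)$, the formula $\bigvee_{i=1}^m \exists \seq{x_i}.(\phi_i \land \sem{\Gamma}{\mkk{A_i}{M \cup \set{\alpha}}{\circ}}{k})$ is, by definition, literally one of the conjuncts of $\sem{\Gamma}{A}{k+1}$; moreover, by Lemma~\ref{lem:unfold1}, $\sem{\Gamma}{A}{k+1}$ also entails $\sem{\Gamma}{A}{k}$. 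The implication is in fact valid in every model, so the $\leastmodeld{\D}$ on the left plays no role and is retained only for uniformity with Lemmas~\ref{lem:unfold1} and \ref{lem:unfold2}.

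The one auxiliary fact I would isolate first is a decomposition property: for all $k$ and all annotated-atom sets $A_1, A_2$, $\sem{\Gamma}{A_1 \cup A_2}{k}$ is logically equivalent to $\sem{\Gamma}{A_1}{k} \land \sem{\Gamma}{A_2}{k}$ (only the direction $\Leftarrow$ is used below). This is immediate from the definition, splitting on whether $k$ is $0$, once one notes that (a) $\sembrack{\Gamma}(A_1 \cup A_2) \equiv \sembrack{\Gamma}(A_1) \land \sembrack{\Gamma}(A_2)$ --- because each user lemma in $\Gamma$ contributes one fixed conjunct regardless of the argument set, while each induction hypothesis contributes exactly one conjunct per matching annotated atom of the argument set, so there are no cross terms --- and that (b) the big conjunction $\bigwedge_{a \in A}(\cdots)$ in the definition of $\sem{\Gamma}{A}{k+1}$ is indexed by the annotated atoms of $A$, with the conjunct attached to a given annotated atom $a$ depending only on $a$, so the index set splits over $A_1 \cup A_2$.

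The lemma then follows by combining these pieces. Conjoining the two consequences of $\sem{\Gamma}{A}{k+1}$ noted above gives $\sem{\Gamma}{A}{k} \land \bigvee_{i=1}^m \exists \seq{x_i}.(\phi_i \land \sem{\Gamma}{\mkk{A_i}{M \cup \set{\alpha}}{\circ}}{k})$. Since the $\seq{x_i}$ are the existential variables of $\D(P)(\seq{t})$, they may be taken fresh; in particular, no variable of $\seq{x_i}$ occurs free in $\sem{\Gamma}{A}{k}$, because the free variables of $\sem{\Gamma}{A}{k}$ are contained in $\fvs{A} \cup \fvs{\Gamma}$ and unfoldings introduce only bound variables. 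Hence $\sem{\Gamma}{A}{k}$ can be pushed inside each disjunct and under each $\exists \seq{x_i}$ without capture, yielding $\bigvee_{i=1}^m \exists \seq{x_i}.(\phi_i \land \sem{\Gamma}{A}{k} \land \sem{\Gamma}{\mkk{A_i}{M \cup \set{\alpha}}{\circ}}{k})$. Applying the decomposition fact with $A_1 = A$ and $A_2 = \mkk{A_i}{M \cup \set{\alpha}}{\circ}$ inside the $i$-th disjunct rewrites $\sem{\Gamma}{A}{k} \land \sem{\Gamma}{\mkk{A_i}{M \cup \set{\alpha}}{\circ}}{k}$ into $\sem{\Gamma}{A \cup \mkk{A_i}{M \cup \set{\alpha}}{\circ}}{k}$, which is exactly the claimed right-hand side.

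The argument is essentially bookkeeping, so I do not expect a deep obstacle; the steps that need a little care are the precise statement and proof of the decomposition fact --- getting the additivity of $\sembrack{\Gamma}$ over annotated-atom sets right, keeping track of the $M$-annotations --- and the side remark that the free variables of $\sem{\Gamma}{A}{k}$ never escape $\fvs{A} \cup \fvs{\Gamma}$, which is what licenses moving $\sem{\Gamma}{A}{k}$ under the existentials without capture.
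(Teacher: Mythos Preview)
Your argument is correct and is essentially the same approach as the paper's, which dispatches the lemma in one line (``By the definition of $\sem{\bullet}{\bullet}{\bullet}$''); you have simply unpacked that one line, making explicit the decomposition $\sem{\Gamma}{A_1}{k}\land\sem{\Gamma}{A_2}{k}\Rightarrow\sem{\Gamma}{A_1\cup A_2}{k}$ and the appeal to Lemma~\ref{lem:unfold1} that the paper leaves implicit. Your observation that the implication holds in every model (not just $\leastmodeld{\D}$) is also correct.
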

\begin{proof}
By the definition of $\sem{\bullet}{\bullet}{\bullet}$.
\end{proof}

\begin{lemma}
\label{lem:gamma}
Suppose that $\mkk{P}{M}{\circ}(\seq{t}) \in A$ and $\alpha$ does not
occur in $\Gamma$ or $A$.  We then obtain
\begin{align*}
\models
& \sem{\Gamma}{A}{k} \land \sem{\set{(\alpha \triangleright P(\seq{t}),A,\phi,h)}}{\set{\mkk{P}{M}{\alpha}(\seq{t})}}{k}
\Rightarrow \\
& \sem{\Gamma \cup \set{(\alpha \triangleright P(\seq{t}),A,\phi,h)}}
{(A \setminus \mkk{P}{M}{\circ}(\seq{t})) \cup \set{\mkk{P}{M}{\alpha}(\seq{t})}}{k}
\end{align*}
\end{lemma}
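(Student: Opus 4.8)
The plan is to prove the statement by induction on $k$, after generalizing it so the induction closes: the single re-annotated atom will be replaced by an arbitrary annotated atom set, so that the statement records how $\sem{\bullet}{\bullet}{\bullet}$ reacts to adjoining the one hypothesis $\iota \triangleq (\alpha \triangleright P(\seq{t}),A,\phi,h)$ to $\Gamma$ and simultaneously carrying the fresh identifier $\alpha$ in the relevant $M$-annotations. The engine of the whole argument is the distributivity $\sembrack{\Gamma \cup \set{\iota}}(C) = \sembrack{\Gamma}(C) \land \sembrack{\set{\iota}}(C)$, read directly off the definition of $\sembrack{\Gamma'}(C)$, together with the freshness of $\alpha$.

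Before the main induction I would settle two bookkeeping facts, each by a short induction on $k$. (i) Since $\circ$ is never the induction identifier of any hypothesis in $\Gamma$, adjoining $\circ$ to an atom's $M$-annotation changes neither $\sembrack{\bullet}$ nor $\sem{\bullet}{\bullet}{\bullet}$, so the $M \cup \set{\circ}$ that \rn{Unfold} produces may be treated as $M$. (ii) If $\alpha$ occurs nowhere in $C$ (neither in $M$-annotations nor as an outer identifier), then it occurs in no atom produced by unfolding $C$ either, because each new atom inherits $M' \cup \set{\beta}$ with $\beta$ an already-present outer identifier, hence $\beta \neq \alpha$; consequently $\sembrack{\set{\iota}}(A') = \top$ for every atom set $A'$ reachable from such a $C$, and $\sem{\Gamma \cup \set{\iota}}{C}{k} = \sem{\Gamma}{C}{k}$. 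Fact (ii) is exactly what identifies the subtrees rooted at the siblings of $\mkk{P}{M}{\alpha}(\seq{t})$ in $B \triangleq (A \setminus \mkk{P}{M}{\circ}(\seq{t})) \cup \set{\mkk{P}{M}{\alpha}(\seq{t})}$ with the corresponding subtrees of $A$, and it also gives $\sembrack{\set{\iota}}(B) = \top$ and $\sembrack{\set{\iota}}(\set{\mkk{P}{M}{\alpha}(\seq{t})}) = \top$.

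The induction on $k$ then runs as follows. In the base case $k = 0$ both sides reduce to $\sembrack{\bullet}$-expressions, and $\sembrack{\Gamma}(B) = \sembrack{\Gamma}(A)$ because the $\bullet$-lemmas of $\Gamma$ ignore the argument's annotations entirely while the genuine hypotheses of $\Gamma$ (identifiers $\neq \alpha$) inspect only the $M$-sets, which $A$ and $B$ share; together with the two $\top$'s from Fact (ii), both sides collapse to $\sembrack{\Gamma}(A) = \sem{\Gamma}{A}{0}$. For the step, I would unfold all three occurrences of $\sem{\bullet}{\bullet}{k+1}$ using $\D(P)(\seq{t}) = \bigvee_{i=1}^m \exists \seq{x_i}.(\phi_i \land \bigwedge A_i)$: the $\sembrack{\bullet}$-heads match as before; each sibling atom's contribution matches by Fact (ii); and the contribution of $\mkk{P}{M}{\alpha}(\seq{t})$ itself --- in all three formulas a disjunction over the same clauses $i$ guarded by the same $\phi_i$, whose $i$-th body obligation is the instance of the generalized induction hypothesis at $C = \mkk{A_i}{M \cup \set{\alpha}}{\circ}$ --- matches disjunct for disjunct.

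The delicate point, which is also what dictates the precise form of the generalization, is this last matching: because $\sem{\bullet}{\bullet}{\bullet}$ branches (through the $\bigvee_i \exists \seq{x_i}$ inherited from $\D(P)$) on which definite clause was applied last, turning ``$\Gamma$ holds along the unfolding of $A$'' plus ``$\iota$ holds along the $\alpha$-marked sub-unfolding'' into ``$\Gamma \cup \set{\iota}$ holds along the unfolding of $B$'' has to be done by threading one common choice of clause and of existential witnesses through the three formulas at every atom and every depth, rather than reasoning about each conjunct of the outer $\bigwedge$ in isolation. I expect this synchronization of the disjunctive and existential structure to be the main obstacle; once the induction hypothesis is stated at the right generality, each of the three cases above is a direct computation from the definition of $\sem{\bullet}{\bullet}{\bullet}$ and Facts (i)--(ii).
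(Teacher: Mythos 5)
Your overall route coincides with the paper's: its proof of this lemma consists of exactly the two observations you make, namely that $\sembrack{\cdot}$ distributes over unions of hypothesis sets and that, because $\alpha$ occurs neither in $\Gamma$ nor in $A$, the new tuple $\iota \triangleq (\alpha \triangleright P(\seq{t}),A,\phi,h)$ is never instantiated along the unfolding of $A$ while $\Gamma$ is insensitive to the re-annotation $\circ \mapsto \alpha$ (the paper phrases this by first asserting, ``by the definition of $\sem{\bullet}{\bullet}{\bullet}$,'' an implication with the four antecedent conjuncts $\sem{\Gamma}{A}{k}$, $\sem{\set{\iota}}{A}{k}$, $\sem{\Gamma}{\set{\mkk{P}{M}{\alpha}(\seq{t})}}{k}$, $\sem{\set{\iota}}{\set{\mkk{P}{M}{\alpha}(\seq{t})}}{k}$, and then discharging the two middle ones by freshness). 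Your Facts (i)--(ii), your base case, and your treatment of the sibling atoms are sound and correspond precisely to that freshness step.

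The problem is the step you yourself single out and then wave through. In the inductive step, the conjunct of the goal for the re-annotated atom is $\bigvee_{i=1}^m \exists \seq{x_i}.\bigl(\phi_i \land \sem{\Gamma \cup \set{\iota}}{\mkk{A_i}{M \cup \set{\alpha}}{\circ}}{k}\bigr)$, whereas the antecedent supplies two \emph{separately} quantified disjunctions: $\bigvee_i \exists \seq{x_i}.\bigl(\phi_i \land \sem{\Gamma}{\mkk{A_i}{M \cup \set{\circ}}{\circ}}{k}\bigr)$ (from $\sem{\Gamma}{A}{k+1}$) and $\bigvee_i \exists \seq{x_i}.\bigl(\phi_i \land \sem{\set{\iota}}{\mkk{A_i}{M \cup \set{\alpha}}{\circ}}{k}\bigr)$ (from the second conjunct). ``Threading one common choice of clause and of existential witnesses through the three formulas'' is exactly what the antecedent does not provide: the guards $\phi_i$ of distinct defining clauses may overlap and the witnesses need not be unique, and from $\bigvee_i \exists \seq{x_i}.(\phi_i \land F_i)$ together with $\bigvee_i \exists \seq{x_i}.(\phi_i \land G_i)$ one cannot infer $\bigvee_i \exists \seq{x_i}.(\phi_i \land F_i \land G_i)$. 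Moreover, the ``generalized induction hypothesis'' that is supposed to render this a direct computation is never stated, and any candidate of the shape $\sem{\Gamma}{C}{k} \land \sem{\set{\iota}}{C}{k} \Rightarrow \sem{\Gamma \cup \set{\iota}}{C}{k}$ faces the same recombination problem one level down, so generalizing alone does not dissolve the obstacle. To be fair, the paper compresses this very point into ``by the definition,'' so you have faithfully reconstructed its skeleton and located its only substantive step; but as written your proposal asserts that step rather than proves it, and it needs either an argument that the disjunctive/existential choices can always be synchronized (e.g., a combination lemma proved in a form where a single unfolding choice is fixed before the hypothesis sets are split) or a reformulation of the statement in which that choice is made once.
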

\begin{proof}
By the definition of $\sem{\bullet}{\bullet}{\bullet}$,
we get
\begin{align*}
\models
& \sem{\Gamma}{A}{k} \land \sem{\set{(\alpha \triangleright P(\seq{t}),A,\phi,h)}}{A}{k} \land \\
& \sem{\Gamma}{\set{\mkk{P}{M}{\alpha}(\seq{t})}}{k} \land
  \sem{\set{(\alpha \triangleright P(\seq{t}),A,\phi,h)}}
      {\set{\mkk{P}{M}{\alpha}(\seq{t})}}{k}
\Rightarrow \\
& \sem{\Gamma \cup \set{(\alpha \triangleright P(\seq{t}),A,\phi,h)}}
{(A \setminus \mkk{P}{M}{\circ}(\seq{t})) \cup \set{\mkk{P}{M}{\alpha}(\seq{t})}}{k}
\end{align*}
Because $\alpha$ does not occur in $\Gamma$ or $A$ and
$\mkk{P}{M}{\circ}(\seq{t}) \in A$, we obtain
\begin{align*}
\models
& \sem{\Gamma}{A}{k} \land 
  \sem{\set{(\alpha \triangleright P(\seq{t}),A,\phi,h)}}
      {\set{\mkk{P}{M}{\alpha}(\seq{t})}}{k}
\Rightarrow \\
& \sem{\Gamma \cup \set{(\alpha \triangleright P(\seq{t}),A,\phi,h)}}
{(A \setminus \mkk{P}{M}{\circ}(\seq{t})) \cup \set{\mkk{P}{M}{\alpha}(\seq{t})}}{k}
\end{align*}
\end{proof}

Because $\leastmodeld{\D}$ is the least interpretation, we obtain:
\begin{lemma}
\label{lem:min}
For all $P \in \pvs{\D}$ and $\psi$ with
$\fvs{\psi} \subseteq \set{\seq{x}}$, we get
\[
\leastmodeld{\D} \models
\forall \seq{x}.
\left( \set{P \mapsto \lambda \seq{x}.\psi}\D(P)(\seq{x}) \Rightarrow \psi \right)
\Rightarrow \forall \seq{x}.(P(\seq{x}) \Rightarrow \psi)
\]
\end{lemma}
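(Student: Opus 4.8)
The plan is to recast the statement as the standard fact that a least fixed point is the least \emph{pre-fixed} point, localized to the single predicate $P$. Write $\rho = \leastmodeld{\D}$; for an interpretation $\rho'$ and a set $T \subseteq \intset^{\arity{P}}$, let $\rho'[P \mapsto T]$ be the interpretation agreeing with $\rho'$ except that it sends $P$ to $T$. Given $P$ and $\psi$ with $\fvs{\psi} \subseteq \set{\seq{x}}$, set $S = \set{\seq{a} \in \intset^{\arity{P}} \mid \rho \models \psi[\seq{x} \mapsto \seq{a}]}$. First I would translate the two sides of the implication using the coincidence/substitution lemma: $\set{P \mapsto \lambda \seq{x}.\psi}\D(P)(\seq{x})$ holds under $\rho$ precisely when $\D(P)(\seq{x})$ holds under $\rho[P \mapsto S]$, so the hypothesis becomes exactly $F_{\D}(\rho[P \mapsto S])(P) \subseteq S$, while the goal becomes exactly $\rho(P) \subseteq S$.

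Next I would record the monotonicity inputs. Since predicate variables occur only positively in the bodies of Horn clauses, each formula $\D(Q)(\seq{x})$ is monotone in the interpretation, hence $F_{\D}$ is monotone on the complete lattice of predicate interpretations; combined with continuity (already noted in the excerpt), $\rho = \leastmodeld{\D}$ is the least pre-fixed point of $F_{\D}$, i.e.\ $\rho \subseteq \rho'$ for every $\rho'$ with $F_{\D}(\rho') \subseteq \rho'$.

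The core step is to produce such a pre-fixed point whose $P$-component sits inside $S$. I would take $\tilde{\rho} \triangleq \rho[P \mapsto \rho(P) \cap S]$, so that $\tilde{\rho} \subseteq \rho$ and also $\tilde{\rho} \subseteq \rho[P \mapsto S]$, and verify $F_{\D}(\tilde{\rho}) \subseteq \tilde{\rho}$ componentwise. For $Q \neq P$, monotonicity and the fixed-point property of $\rho$ give $F_{\D}(\tilde{\rho})(Q) \subseteq F_{\D}(\rho)(Q) = \rho(Q) = \tilde{\rho}(Q)$. For $Q = P$, from $\tilde{\rho} \subseteq \rho$ I get $F_{\D}(\tilde{\rho})(P) \subseteq F_{\D}(\rho)(P) = \rho(P)$, and from $\tilde{\rho} \subseteq \rho[P \mapsto S]$ together with the reformulated hypothesis I get $F_{\D}(\tilde{\rho})(P) \subseteq F_{\D}(\rho[P \mapsto S])(P) \subseteq S$; intersecting, $F_{\D}(\tilde{\rho})(P) \subseteq \rho(P) \cap S = \tilde{\rho}(P)$. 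Leastness of $\rho$ among pre-fixed points then gives $\rho \subseteq \tilde{\rho}$, hence $\rho(P) \subseteq \tilde{\rho}(P) = \rho(P) \cap S \subseteq S$, which is the desired conclusion.

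I expect the only delicate point to be the translation step: making the substitution/coincidence argument fully precise, renaming the bound $\seq{x}$ in $\lambda\seq{x}.\psi$ so that it does not clash with the argument position of $\D(P)(\cdot)$, and noting that any occurrences of $P$ lying \emph{inside} $\psi$ are untouched by the one-step substitution and simply remain interpreted by $\rho$ (this is consistent with the definition of $S$). Everything else is routine monotonicity bookkeeping. As an alternative not relying on Knaster--Tarski, one can instead use $\leastmodeld{\D} = \bigcup_{i} F_{\D}^{i}(\bot)$ (with $\bot$ the everywhere-empty interpretation) and prove $F_{\D}^{i}(\bot)(P) \subseteq S$ by induction on $i$: at each step $F_{\D}^{i-1}(\bot) \subseteq \rho[P \mapsto S]$ (the $P$-component by the induction hypothesis, the others trivially), so monotonicity and the hypothesis give $F_{\D}^{i}(\bot)(P) = F_{\D}(F_{\D}^{i-1}(\bot))(P) \subseteq F_{\D}(\rho[P \mapsto S])(P) \subseteq S$, and the conclusion follows by taking the union.
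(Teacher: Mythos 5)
Your proposal is correct and takes essentially the same approach as the paper: the paper states Lemma~\ref{lem:min} with only the remark that $\leastmodeld{\D}$ is the least interpretation (i.e., the least pre-fixed point of $F_{\D}$, per Proposition~\ref{prp:leastmodel}), and offers no further proof. Your argument --- translating the hypothesis into $F_{\D}(\rho[P \mapsto S])(P) \subseteq S$, exhibiting the pre-fixed point $\rho[P \mapsto \rho(P) \cap S]$, and invoking leastness (or, equivalently, the Kleene-iteration induction you sketch as an alternative) --- is precisely the elaboration that remark leaves implicit, and your care about the substitution/coincidence step and about occurrences of $P$ inside $\psi$ addresses the only genuinely delicate points.
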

\begin{lemma}
\label{lem:mincomp}
For all $P \in \pvs{\D}$ and $\varphi$ with
$\fvs{\varphi} \subseteq \set{\seq{x}}$, we get
\[
\leastmodeld{\D} \models
\forall \seq{x}.
\left( \left( \forall p \prec P. \set{P \mapsto \lambda \seq{x}.\varphi}p(\seq{x}) \right) \Rightarrow \varphi \right)
\Rightarrow \forall \seq{x}.(P(\seq{x}) \Rightarrow \varphi),
\]
where $p \prec P$ is a predicate obtained by unfolding $P$ at least
once.
\end{lemma}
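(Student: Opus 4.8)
The plan is to read Lemma~\ref{lem:mincomp} as a course-of-values (``strong'') induction principle over the depth of derivations in $\leastmodeld{\D}$ and to prove it from the approximant characterization $\leastmodeld{\D} = \bigcup_{i \in \natset} F_{\D}^i(\rho_0)$ of Section~\ref{sec:horn}, where $\rho_0 = \set{P \mapsto \emptyset \mid P \in \pvs{\D}}$. Write $H$ for the antecedent $\forall \seq{x}.\left( \left( \forall p \prec P.\ \set{P \mapsto \lambda \seq{x}.\varphi}p(\seq{x}) \right) \Rightarrow \varphi \right)$ and assume $\leastmodeld{\D} \models H$. Since $\fvs{\varphi} \subseteq \set{\seq{x}}$ and $\leastmodeld{\D}(P) = \bigcup_i F_{\D}^i(\rho_0)(P)$, it suffices to establish, for every $i \in \natset$, the claim $(\ast_i)$: $\leastmodeld{\D} \models \forall \seq{x}.\left( F_{\D}^i(\rho_0)(P)(\seq{x}) \Rightarrow \varphi \right)$; any tuple in $\leastmodeld{\D}(P)$ already belongs to some finite stage $F_{\D}^i(\rho_0)(P)$.

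First I would prove $(\ast_i)$ by induction on $i$. The base case $i = 0$ holds vacuously since $F_{\D}^0(\rho_0)(P) = \emptyset$. For the step, suppose $(\ast_i)$ and fix $\seq{x}$ with $\seq{x} \in F_{\D}^{i+1}(\rho_0)(P)$, equivalently $F_{\D}^i(\rho_0) \models \D(P)(\seq{x})$. By $H$ it is enough to check that $\set{P \mapsto \lambda \seq{x}.\varphi}p(\seq{x})$ holds in $\leastmodeld{\D}$ for every $p$ obtained from $P(\seq{x})$ by at least one unfolding step. The key point is a depth bound: every predicate atom that occurs in such a $p$ and is itself an atom of $P$ corresponds to a strict sub-derivation of the derivation witnessing $\seq{x} \in F_{\D}^{i+1}(\rho_0)(P)$, hence is witnessed already in $F_{\D}^i(\rho_0)$, hence by $(\ast_i)$ satisfies $\varphi$ in $\leastmodeld{\D}$; every atom of a predicate other than $P$ is true in $\leastmodeld{\D}$ because $F_{\D}^i(\rho_0) \subseteq \leastmodeld{\D}$, and every theory conjunct is interpretation-independent. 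Propagating this through the finitely many unfolding steps that produced $p$, and using that $\set{P \mapsto \lambda\seq{x}.\varphi}$ acts on all (including nested) occurrences of $P$, shows the substituted body holds at $\seq{x}$, which is exactly what $H$ consumes.

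An alternative route, which I would try first because it reuses the fixpoint machinery directly, is to derive Lemma~\ref{lem:mincomp} from Lemma~\ref{lem:min} by strengthening the invariant: apply Lemma~\ref{lem:min} with $\psi \triangleq \lambda \seq{x}.\ \forall p \prec P.\ \set{P \mapsto \lambda \seq{x}.\varphi}p(\seq{x})$, check the pre-fixpoint condition $\leastmodeld{\D} \models \forall\seq{x}.\left( \set{P \mapsto \lambda\seq{x}.\psi}\D(P)(\seq{x}) \Rightarrow \psi \right)$ (one unfolding composed with the assumed closure of $\psi$ under further unfolding), and then combine $\forall\seq{x}.(P(\seq{x}) \Rightarrow \psi)$ with the implication $\psi(\seq{x}) \Rightarrow \varphi$ supplied by $H$. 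The main obstacle in either route is the same bookkeeping: giving a fully precise account of ``a predicate obtained by unfolding $P$ at least once'' (in particular how the single substitution $\set{P \mapsto \lambda\seq{x}.\varphi}$ reaches nested $P$-occurrences and why those are the \emph{strict} sub-derivations the induction is entitled to), and confirming that the witness for $F_{\D}^{i+1}(\rho_0)(P)(\seq{x})$ --- respectively the pre-fixpoint closure of $\psi$ --- supplies $\varphi$ for precisely those occurrences and no others. The remaining manipulations (monotonicity of $F_{\D}$, and unfolding preserving truth in $\leastmodeld{\D}$) are routine.
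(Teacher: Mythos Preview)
Your second route---deriving the lemma from Lemma~\ref{lem:min} with a strengthened invariant---is exactly what the paper does; the paper's one-line proof is ``By Lemma~\ref{lem:min} with $\psi = \forall p \preceq P.\ \set{P \mapsto \lambda \seq{x}.\varphi}p(\seq{x})$.'' The only difference is that the paper takes $\preceq$ rather than your $\prec$: with $\preceq$ the case $p = P$ gives $\set{P \mapsto \lambda\seq{x}.\varphi}P(\seq{x}) = \varphi$ directly, so $\psi \Rightarrow \varphi$ is immediate and $H$ is used only to close the pre-fixpoint condition, whereas in your version $H$ is what supplies $\psi \Rightarrow \varphi$ at the end. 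Either choice works and the bookkeeping you flag (tracking nested $P$-occurrences through unfoldings) is the same in both. Your first route, the explicit induction on the approximant index, is a sound and more elementary alternative that avoids invoking Lemma~\ref{lem:min} but unpacks the same content by hand.
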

\begin{proof}
By Lemma~\ref{lem:min} with $\psi = \forall p \preceq
P. \set{P \mapsto \lambda \seq{x}.\varphi}p(\seq{x})$.
\end{proof}
\begin{lemma}
\label{lem:min2}
Suppose that $P \in \pvs{\D}$, $\mkk{P}{M}{\circ}(\seq{t}) \in A$,
$\alpha$ does not occur in $A$, and the following holds
\[
\leastmodeld{\D} \models
\sem{\set{(\alpha \triangleright
P(\seq{t}),A,\phi,h)}}{\set{\mkk{P}{M}{\alpha}(\seq{t})}}{k}
\land \bigwedge A \land \phi \Rightarrow h
\]
It then follows that
\(\leastmodeld{\D} \models \bigwedge A \land \phi \Rightarrow h\).
\end{lemma}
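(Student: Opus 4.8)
The statement is the model-theoretic incarnation of Principle~\ref{principle:ind} (induction on derivations): it says that once the induction step has been discharged --- i.e., once the target $\bigwedge A \land \phi \Rightarrow h$ has been proved under the additional assumption $\sem{\set{(\alpha \triangleright P(\seq{t}),A,\phi,h)}}{\set{\mkk{P}{M}{\alpha}(\seq{t})}}{k}$, which is exactly the conjunction of the induction hypothesis instantiated for every atom arising in the $k$-fold unfolding of the derivation of $P(\seq{t})$ --- that assumption may be dropped. The plan is to obtain this from Lemma~\ref{lem:mincomp}, the strong-induction principle for the inductively defined predicate $P$.

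First I would reformulate the conclusion. Writing $\set{\seq{y}}=\fvs{A,\phi,h}$, regarding $\seq{t}$ as a tuple of terms over $\seq{y}$, and letting $\seq{x}_P$ be the formal arguments of $P$ (as in Section~\ref{sec:notation}), set
\[
\varphi(\seq{x}_P) \triangleq \forall \seq{y}.\left( \bigwedge A \land \phi \land \seq{t} = \seq{x}_P \Rightarrow h \right),
\]
so that $\fvs{\varphi} \subseteq \set{\seq{x}_P}$ (note $\varphi$ may legitimately mention $P$, through the atom $P(\seq{t})\in A$). Because the atom $P(\seq{t})$ contributes $P(\seq{t})$ to $\bigwedge A$, a one-point-rule argument shows that $\leastmodeld{\D} \models \forall \seq{x}_P.\left(P(\seq{x}_P) \Rightarrow \varphi(\seq{x}_P)\right)$ is equivalent to the desired $\leastmodeld{\D} \models \bigwedge A \land \phi \Rightarrow h$. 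By Lemma~\ref{lem:mincomp} it then suffices to prove its premise,
\[
\leastmodeld{\D} \models \forall \seq{x}_P.\left( \left( \forall p \prec P.\ \set{P \mapsto \lambda\seq{x}_P.\varphi}\,p(\seq{x}_P) \right) \Rightarrow \varphi(\seq{x}_P) \right).
\]

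To prove this I would fix $\seq{x}_P$, assume the strong hypothesis $\Psi(\seq{x}_P) \triangleq \forall p \prec P.\ \set{P \mapsto \lambda\seq{x}_P.\varphi}\,p(\seq{x}_P)$, fix a valuation of $\seq{y}$ satisfying $\bigwedge A \land \phi \land \seq{t} = \seq{x}_P$ (so in particular $P(\seq{x}_P)$ holds, being the value of the atom $P(\seq{t})$), and derive $h$ by instantiating the hypothesis of the present lemma at this valuation. Since $\bigwedge A \land \phi$ is available, the remaining obligation is to check that $\sem{\set{(\alpha \triangleright P(\seq{t}),A,\phi,h)}}{\set{\mkk{P}{M}{\alpha}(\seq{t})}}{k}$ holds here. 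I would establish this by induction on $k$ (writing $\Gamma$ for the singleton $\set{(\alpha \triangleright P(\seq{t}),A,\phi,h)}$): the base case $k=0$ reduces to $\top$ because $\alpha \notin M$ (by the hypothesis that $\alpha$ does not occur in $A$); in the step, one matches the disjunction $\bigvee_i \exists \seq{x_i}.\left(\phi_i \land \sem{\Gamma}{\mkk{A_i}{M\cup\set{\alpha}}{\circ}}{k-1}\right)$ coming from the definition of $\sem{\cdot}{\cdot}{\cdot}$ against the unfolding $\D(P)(\seq{x}_P)=\bigvee_i \exists \seq{x_i}.\left(\phi_i \land \bigwedge A_i\right)$ of $P(\seq{x}_P)$ --- a satisfied disjunct exists by $P(\seq{x}_P)$ and the fixed-point property of $\leastmodeld{\D}$ --- while the conjuncts attached to the atoms of that unfolding by $\sembrack{\cdot}$ are precisely instances of $\varphi$ and hence supplied by $\Psi(\seq{x}_P)$; Lemmas~\ref{lem:unfold1}, \ref{lem:unfold2}, and \ref{lem:unfold3} take care of the monotonicity-in-$k$ and unfolding bookkeeping.

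The delicate part is this last correspondence: reconciling the model-theoretic hypothesis $\Psi$, which ranges over finite unfoldings $p \prec P$ of unbounded depth, with the depth-$k$-stratified syntactic formula $\sem{\cdot}{\cdot}{k}$, and keeping the induction-identifier annotations ($M$, $\alpha$, and the sets $M \cup \set{\alpha}$ introduced when $\sem{\cdot}{\cdot}{\cdot}$ unfolds an atom) aligned so that every atom occurring in the $k$-fold unfolding genuinely ``unlocks'' the induction hypothesis, i.e.\ carries $\alpha$ in its identifier set. The annotations are logically transparent --- $\mkk{P}{M}{\alpha}(\seq{t})$ is equivalent to $P(\seq{t})$ --- so the obstacle is ultimately careful bookkeeping rather than a new idea, but it is where the proof must be written with precision.
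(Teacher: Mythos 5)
Your proposal takes exactly the paper's route: the paper's entire proof of this lemma is the one-line appeal ``By Lemma~\ref{lem:mincomp}'', i.e.\ the strong-induction/least-fixed-point principle, which is precisely the reduction you perform, with your choice of $\varphi(\seq{x}_P)$, the one-point-rule reformulation, and the inner induction on $k$ filling in bookkeeping the paper leaves implicit. The argument is correct and, if anything, more detailed than the paper's own proof.
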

\begin{proof}
By Lemma~\ref{lem:mincomp}.
\end{proof}

We now prove Lemma~\ref{lem:sound}.
\begin{proof}[Proof of Lemma~\ref{lem:sound}]
By induction on the derivation of $\D; \Gamma; A; \phi \vdash h$.
\begin{itemize}
\item {\bf Case \rn{Induct}:}
We have
\begin{itemize}
\item $\mkk{P}{M}{\circ}(\seq{t}) \in A$,
\item $\Gamma' = \Gamma \cup \set{(\alpha \triangleright P(\seq{t}),A,\phi,h)}$,
\item $\D; \Gamma'; A'; \phi \vdash h$,
\item $A' = (A \setminus \mkk{P}{M}{\circ}(\seq{t})) \cup \set{\mkk{P}{M}{\alpha}(\seq{t})}$, and
\item $\alpha$ is fresh.
\end{itemize}
By I.H., there is $k'$ such that
\[
\leastmodeld{\D} \models \sem{\Gamma'}{A'}{k'} \land \bigwedge A' \land \phi \Rightarrow h
\]
By Lemma~\ref{lem:gamma}, we get
\begin{align*}
\leastmodeld{\D} \models
& \sem{\set{(\alpha \triangleright
    P(\seq{t}),A,\phi,h)}}{\set{\mkk{P}{M}{\alpha}(\seq{t})}}{k'} \land \\
& \bigwedge A' \land \phi \land \sem{\Gamma}{A}{k'} \Rightarrow h
\end{align*}
By the fact $\models A \Leftrightarrow A'$ and the definition of
$\sem{\bullet}{\bullet}{\bullet}$, we get
\begin{align*}
\leastmodeld{\D} \models
& \sem{\set{(\alpha \triangleright
    P(\seq{t}),A,\phi \land \sem{\Gamma}{A}{k'},h)}}{\set{\mkk{P}{M}{\alpha}(\seq{t})}}{k'} \land \\
& \bigwedge A \land \phi \land \sem{\Gamma}{A}{k'} \Rightarrow h
\end{align*}
Therefore, by Lemma~\ref{lem:min2}, for $k=k'$, we get
\[
\leastmodeld{\D} \models \sem{\Gamma}{A}{k} \land \bigwedge A \land \phi \Rightarrow h
\]

\item {\bf Case \rn{Unfold}:}
For each $i \in \set{1, \dots, m}$, we have
\begin{itemize}
\item $\mkk{P}{M}{\alpha}(\seq{t}) \in A$,
\item $\D(P)(\seq{t}) = \bigvee_{i=1}^m \exists \seq{x_i}.\left( \phi_i \land \bigwedge A_i \right)$, and
\item $\D; \Gamma; A \cup \mkk{A_i}{M \cup \set{\alpha}}{\circ}; \phi \land \phi_i \vdash h$.
\end{itemize}
By I.H., for each $i \in \set{1,\dots,m}$, there is $k_i$ such that 
\[
\leastmodeld{\D} \models \sem{\Gamma}{A \cup \mkk{A_i}{M \cup \set{\alpha}}{\circ}}{k_i} \land \bigwedge \left(A \cup \mkk{A_i}{M \cup \set{\alpha}}{\circ} \right) \land \phi \land \phi_i \Rightarrow h
\]
It then follows immediately that
\begin{align*}
\leastmodeld{\D} \models
& \bigwedge A \land \phi \land \\
& \bigvee_{i=1}^m \left(\phi_i \land \sem{\Gamma}{A \cup \mkk{A_i}{M \cup \set{\alpha}}{\circ}}{k_i} \land \bigwedge \mkk{A_i}{M \cup \set{\alpha}}{\circ} \right) \Rightarrow h
\end{align*}
By Lemma~\ref{lem:unfold3}, we get
\[
\leastmodeld{\D} \models \sem{\Gamma}{A}{k'+1} \Rightarrow \bigvee_{i=1}^m \exists \seq{x_i}.\left( \phi_i \land \sem{\Gamma}{A \cup \mkk{A_i}{M \cup \set{\alpha}}{\circ}}{k'} \right),
\]
where $k'=\max \set{k_i}_{i=1}^m$.  Therefore, by
Lemma~\ref{lem:unfold1}, we obtain
\[
\leastmodeld{\D} \models \sem{\Gamma}{A}{k'+1} \land \bigwedge A \land
\phi \land \left(\bigvee_{i=1}^m \left( \phi_i \land \bigwedge A_i \right)\right) \Rightarrow h
\]
From the facts $\leastmodeld{\D} \models P(\seq{t}) \Leftrightarrow
\bigvee_{i=1}^m \exists \seq{x_i}.\left( \phi_i \land \bigwedge A_i
\right)$ and $\mkk{P}{M}{\alpha}(\seq{t}) \in A$, for $k=k'+1$, we get
\[
\leastmodeld{\D} \models \sem{\Gamma}{A}{k} \land \bigwedge A \land \phi \Rightarrow h.
\]

\item {\bf Case \rn{Apply$\bot$}:}
We have
\begin{itemize}
\item $(g,A',\phi',\bot) \in \Gamma$,
\item $\dom{\sigma} = \fvs{A'}$,
\item $\models \phi \Rightarrow \In{\sigma g}{A}$,
\item $\models \phi \Rightarrow \Sub{\sigma A'}{A}$,
\item $\set{\seq{x}}=\fvs{\phi'} \setminus \dom{\sigma}$, and
\item $\D; \Gamma; A; \phi \land \forall \seq{x}.\neg (\sigma \phi') \vdash h$.
\end{itemize}
By $(g, A', \phi', \bot) \in \Gamma$,
$\models \phi \Rightarrow \In{\sigma g}{A}$, and
Lemmas~\ref{lem:subder} and \ref{lem:unfold1}, for some $k_1$, we get
\[
\models \sem{\Gamma}{A}{k_1} \land \bigwedge \sigma A' \land \phi
\Rightarrow \forall \seq{x}. \neg (\sigma \phi')
\]
By $\models \phi \Rightarrow \Sub{\sigma A'}{A}$ and
Lemma \ref{lem:sub}, we obtain
\[
\models \sem{\Gamma}{A}{k_1} \land \bigwedge A \land \phi
\Rightarrow \forall \seq{x}. \neg (\sigma \phi')
\]
By I.H., there is $k_2$ such that
\[
\leastmodeld{\D} \models \sem{\Gamma}{A}{k_2} \land \bigwedge A \land \phi \land \forall \seq{x}. \neg(\sigma \phi')\Rightarrow h
\]
Therefore, for $k=\max(k_1,k_2)$, by Lemma~\ref{lem:unfold1}, we obtain
\[
\leastmodeld{\D} \models \sem{\Gamma}{A}{k} \land \bigwedge A \land \phi \Rightarrow h.
\]

\item {\bf Case \rn{Apply$P$}:}
We have
\begin{itemize}
\item $(g,A',\phi',P(\seq{t})) \in \Gamma$,
\item $\dom{\sigma} = \fvs{A'} \cup \fvs{\seq{t}}$,
\item $\models \phi \Rightarrow \In{\sigma g}{A}$,
\item $\models \phi \Rightarrow \exists \seq{x}.(\sigma\phi')$,
\item $\models \phi \Rightarrow \Sub{\sigma A'}{A}$,
\item $\set{\seq{x}} = \fvs{\phi'} \setminus \dom{\sigma}$,
\item $\D; \Gamma; A \cup \set{\mkk{P}{\emptyset}{\circ}(\sigma \seq{t})}; \phi \vdash h$
\end{itemize}
By $(g, A', \phi', P(\seq{t})) \in \Gamma$,
$\models \phi \Rightarrow \In{\sigma g}{A}$, and
Lemmas~\ref{lem:subder} and \ref{lem:unfold1}, for some $k_1$, we get
\[
\models \sem{\Gamma}{A}{k_1} \land \bigwedge \sigma A' \land \phi
\land \exists \seq{x}.(\sigma \phi') \Rightarrow P(\sigma \seq{t})
\]
Then, by $\models \phi \Rightarrow \Sub{\sigma A'}{A}$,
Lemma~\ref{lem:sub}, and
$\models \phi \Rightarrow \exists \seq{x}.(\sigma\phi')$, we get
\[
\models \sem{\Gamma}{A}{k_1} \land \bigwedge A \land \phi \Rightarrow  P(\sigma \seq{t})
\]
By I.H., there is $k_2$ such that
\[
\leastmodeld{\D} \models \sem{\Gamma}{A \cup \set{\mkk{P}{\emptyset}{\circ}(\sigma \seq{t})}}{k_2} \land \bigwedge
\left(A \cup \set{\mkk{P}{\emptyset}{\circ}(\sigma \seq{t})}\right)
\land \phi \Rightarrow h
\]
By Lemma~\ref{lem:unfold2}, we obtain
\[
\leastmodeld{\D} \models
\sem{\Gamma}{A}{k_2} \land P(\sigma \seq{t})
\Rightarrow \sem{\Gamma}{A \cup \set{\mkk{P}{\emptyset}{\circ}(\sigma \seq{t})}}{k_2}
\]
It then follows that
\[
\leastmodeld{\D} \models \sem{\Gamma}{A}{k_2} \land \bigwedge
\left(A \cup \set{\mkk{P}{\emptyset}{\circ}(\sigma \seq{t})}\right)
\land \phi \Rightarrow h
\]
Therefore, for $k=\max(k_1,k_2)$, by Lemma~\ref{lem:unfold1}, we get
\[
\leastmodeld{\D} \models \sem{\Gamma}{A}{k} \land \bigwedge A \land \phi \Rightarrow h
\]

\item {\bf Case \rn{Fold}:}
We have
\begin{itemize}
\item $\left(P(\seq{t}) \Leftarrow \phi' \land \bigwedge A'\right) \in \D$,
\item $\dom{\sigma} = \fvs{A'} \cup \fvs{\seq{t}}$,
\item $\models \phi \Rightarrow \exists \seq{x}.(\sigma \phi')$,
\item $\models \phi \Rightarrow \Sub{\sigma A'}{A}$,
\item $\set{\seq{x}} = \fvs{\phi'} \setminus \dom{\sigma}$,
\item $\D; \Gamma; A \cup \set{\mkk{P}{\emptyset}{\circ}(\sigma \seq{t})}; \phi \vdash h$
\end{itemize}
By $P(\seq{t}) \Leftarrow \phi' \land \bigwedge A' \in \D$,
$\dom{\sigma} = \fvs{A'} \cup \fvs{\seq{t}}$, and $\set{\seq{x}}
= \fvs{\phi'} \setminus \dom{\sigma}$, we obtain
\[
\leastmodeld{\D} \models \bigwedge \sigma A' \land \exists
\seq{x}.(\sigma \phi') \Rightarrow P(\sigma \seq{t})
\]
By $\models \phi \Rightarrow \Sub{\sigma A'}{A}$, Lemma~\ref{lem:sub},
and $\models \phi \Rightarrow \exists \seq{x}.(\sigma \phi')$, we get
\[
\leastmodeld{\D} \models \bigwedge A \land \phi \Rightarrow P(\sigma \seq{t})
\]
By I.H., there is $k'$ such that
\[
\leastmodeld{\D} \models \sem{\Gamma}{A \cup
  \set{\mkk{P}{\emptyset}{\circ}(\sigma \seq{t})}}{k'} \land \bigwedge
\left(A \cup \set{\mkk{P}{\emptyset}{\circ}(\sigma \seq{t})}\right)
\land \phi \Rightarrow h
\]
By Lemma~\ref{lem:unfold2}, we obtain
\[
\leastmodeld{\D} \models
\sem{\Gamma}{A}{k'} \land P(\sigma \seq{t})
\Rightarrow \sem{\Gamma}{A \cup \set{\mkk{P}{\emptyset}{\circ}(\sigma \seq{t})}}{k'}
\]
It then follows that
\[
\leastmodeld{\D} \models \sem{\Gamma}{A}{k'} \land \bigwedge
\left(A \cup \set{\mkk{P}{\emptyset}{\circ}(\sigma \seq{t})}\right)
\land \phi \Rightarrow h
\]
Therefore, for $k=k'$, we obtain
\[
\leastmodeld{\D} \models \sem{\Gamma}{A}{k} \land \bigwedge A \land \phi \Rightarrow h
\]

\item {\bf Case \rn{Valid$\bot$}:} We have $h=\bot$ and $\models \phi
  \Rightarrow \bot$.  Therefore, $\leastmodeld{\D} \models
  \sem{\Gamma}{A}{k} \land \bigwedge A \land \phi \Rightarrow h$ holds
  for any $k$.

\item {\bf Case \rn{Valid$P$}:} We have $h=P(\seq{t})$ and $\models
  \phi \Rightarrow \In{P(\seq{t})}{A}$.  By lemma~\ref{lem:mem}, we
  get $\models \bigwedge A \land \phi \Rightarrow P(\seq{t})$.  It
  then follows that
  $\leastmodeld{\D} \models \sem{\Gamma}{A}{k} \land \bigwedge
  A \land \phi \Rightarrow P(\seq{t})$ for any $k$.

\end{itemize}
\end{proof}


\begin{thebibliography}{57}
\providecommand{\natexlab}[1]{#1}
\providecommand{\url}[1]{\texttt{#1}}
\expandafter\ifx\csname urlstyle\endcsname\relax
  \providecommand{\doi}[1]{doi: #1}\else
  \providecommand{\doi}{doi: \begingroup \urlstyle{rm}\Url}\fi

\bibitem[Asada et~al.(2015)Asada, Sato, and Kobayashi]{Asada2015}
K.~Asada, R.~Sato, and N.~Kobayashi.
\newblock Verifying relational properties of functional programs by first-order
  refinement.
\newblock In \emph{PEPM '15}, pages 61--72. ACM, 2015.

\bibitem[Barnett et~al.(2006)Barnett, Chang, DeLine, Jacobs, and
  Leino]{Barnett2006}
M.~Barnett, B.-Y.~E. Chang, R.~DeLine, B.~Jacobs, and K.~R.~M. Leino.
\newblock Boogie: A modular reusable verifier for object-oriented programs.
\newblock In \emph{FMCO '05}, pages 364--387. Springer, 2006.

\bibitem[Barthe et~al.(2004)Barthe, D'Argenio, and Rezk]{Barthe2004}
G.~Barthe, P.~R. D'Argenio, and T.~Rezk.
\newblock Secure information flow by self-composition.
\newblock In \emph{CSFW '04}, pages 100--114. IEEE, 2004.

\bibitem[Barthe et~al.(2011)Barthe, Crespo, and Kunz]{Barthe2011}
G.~Barthe, J.~M. Crespo, and C.~Kunz.
\newblock Relational verification using product programs.
\newblock In \emph{FM '11}, pages 200--214. Springer, 2011.

\bibitem[Barthe et~al.(2012)Barthe, K\"{o}pf, Olmedo, and {Zanella
  B{\'e}guelin}]{Barthe2012}
G.~Barthe, B.~K\"{o}pf, F.~Olmedo, and S.~{Zanella B{\'e}guelin}.
\newblock Probabilistic relational reasoning for differential privacy.
\newblock In \emph{POPL '12}, pages 97--110. ACM, 2012.

\bibitem[Barthe et~al.(2015)Barthe, Gaboardi, {Gallego Arias}, Hsu, Roth, and
  Strub]{Barthe2015}
G.~Barthe, M.~Gaboardi, E.~{Gallego Arias}, J.~Hsu, A.~Roth, and P.-Y. Strub.
\newblock Higher-order approximate relational refinement types for mechanism
  design and differential privacy.
\newblock In \emph{POPL '15}, pages 55--68. ACM, 2015.

\bibitem[Beyene et~al.(2013)Beyene, Popeea, and Rybalchenko]{Beyene2013}
T.~A. Beyene, C.~Popeea, and A.~Rybalchenko.
\newblock Solving existentially quantified horn clauses.
\newblock In \emph{CAV '13}, volume 8044 of \emph{LNCS}, pages 869--882.
  Springer, 2013.

\bibitem[Bobot et~al.(2011)Bobot, Filli\^atre, March\'e, and
  Paskevich]{Bobot2011}
F.~Bobot, J.-C. Filli\^atre, C.~March\'e, and A.~Paskevich.
\newblock Why3: Shepherd your herd of provers.
\newblock In \emph{Boogie '11}, pages 53--64, 2011.

\bibitem[Bouhoula et~al.(1992)Bouhoula, Kounalis, and
  Rusinowitch]{Bouhoula1992}
A.~Bouhoula, E.~Kounalis, and M.~Rusinowitch.
\newblock {SPIKE}, an automatic theorem prover.
\newblock In \emph{LPAR '92}, volume 624 of \emph{LNCS}, pages 460--462.
  Springer, 1992.

\bibitem[Bradley(2011)]{Bradley2011}
A.~R. Bradley.
\newblock {SAT}-based model checking without unrolling.
\newblock In \emph{VMCAI '11}, volume 6538 of \emph{LNCS}, pages 70--87.
  Springer, 2011.

\bibitem[Bundy(2001)]{Bundy2001}
A.~Bundy.
\newblock The automation of proof by mathematical induction.
\newblock In \emph{Handbook of Automated Reasoning}, volume~I, pages 845--911.
  Elsevier, 2001.

\bibitem[Bundy et~al.(1990)Bundy, {van Harmelen}, Horn, and Smaill]{Bundy1990}
A.~Bundy, F.~{van Harmelen}, C.~Horn, and A.~Smaill.
\newblock The oyster-clam system.
\newblock In \emph{CADE-10}, pages 647--648. Springer, 1990.

\bibitem[Chamarthi et~al.(2011)Chamarthi, Dillinger, Manolios, and
  Vroon]{Chamarthi2011}
H.~R. Chamarthi, P.~Dillinger, P.~Manolios, and D.~Vroon.
\newblock The {ACL2} sedan theorem proving system.
\newblock In \emph{TACAS '11}, volume 6605 of \emph{LNCS}, pages 291--295.
  Springer, 2011.

\bibitem[Ciob{\^a}c{\u{a}} et~al.(2014)Ciob{\^a}c{\u{a}}, Lucanu, Rusu, and
  Ro{\c{s}}u]{Ciobaca2014}
{\c{S}}.~Ciob{\^a}c{\u{a}}, D.~Lucanu, V.~Rusu, and G.~Ro{\c{s}}u.
\newblock A language-independent proof system for mutual program equivalence.
\newblock In \emph{ICFEM '14}, pages 75--90. Springer, 2014.

\bibitem[Claessen et~al.(2013)Claessen, Johansson, Ros{\'{e}}n, and
  Smallbone]{Claessen2013}
K.~Claessen, M.~Johansson, D.~Ros{\'{e}}n, and N.~Smallbone.
\newblock Automating inductive proofs using theory exploration.
\newblock In \emph{CADE-24}, volume 7898 of \emph{LNCS}, pages 392--406.
  Springer, 2013.

\bibitem[Clarkson and Schneider(2008)]{Clarkson2008}
M.~R. Clarkson and F.~B. Schneider.
\newblock Hyperproperties.
\newblock In \emph{CSF '08}, pages 51--65. IEEE, 2008.

\bibitem[Cousot and Cousot(1977)]{Cousot1977}
P.~Cousot and R.~Cousot.
\newblock Abstract interpretation: a unified lattice model for static analysis
  of programs by construction or approximation of fixpoints.
\newblock In \emph{POPL '77}, pages 238--252. ACM, 1977.

\bibitem[Craig(1957)]{Craig1957a}
W.~Craig.
\newblock Three uses of the {Herbrand}-{Gentzen} theorem in relating model
  theory and proof theory.
\newblock \emph{Journal of Symbolic Logic}, 22:\penalty0 269--285, 1957.

\bibitem[de~Moura and Bj{\o}rner(2008)]{Moura2008}
L.~de~Moura and N.~Bj{\o}rner.
\newblock Z3: An efficient {SMT} solver.
\newblock In \emph{TACAS '08}, volume 4963 of \emph{LNCS}, pages 337--340.
  Springer, 2008.

\bibitem[Dixon and Fleuriot(2004)]{Dixon2004}
L.~Dixon and J.~Fleuriot.
\newblock Higher order rippling in {IsaPlanner}.
\newblock In \emph{TPHOLs '04}, pages 83--98. Springer, 2004.

\bibitem[Dixon and Fleuriot(2003)]{Dixon2003}
L.~Dixon and J.~D. Fleuriot.
\newblock {IsaPlanner}: A prototype proof planner in {Isabelle}.
\newblock In \emph{CADE-19}, volume 2741 of \emph{LNCS}, pages 279--283.
  Springer, 2003.

\bibitem[Felsing et~al.(2014)Felsing, Grebing, Klebanov, R\"{u}mmer, and
  Ulbrich]{Felsing2014}
D.~Felsing, S.~Grebing, V.~Klebanov, P.~R\"{u}mmer, and M.~Ulbrich.
\newblock Automating regression verification.
\newblock In \emph{ASE '14}, pages 349--360. ACM, 2014.

\bibitem[Giesl(2000)]{Giesl2000}
J.~Giesl.
\newblock Context-moving transformations for function verification.
\newblock In \emph{LOPSTR '00}, volume 1817 of \emph{LNCS}, pages 293--312.
  Springer, 2000.

\bibitem[Godlin and Strichman(2013)]{Godlin2013}
B.~Godlin and O.~Strichman.
\newblock Regression verification: proving the equivalence of similar programs.
\newblock \emph{Software Testing, Verification and Reliability}, 23\penalty0
  (3):\penalty0 241--258, 2013.

\bibitem[Grebenshchikov et~al.(2012)Grebenshchikov, Lopes, Popeea, and
  Rybalchenko]{Grebenshchikov2012}
S.~Grebenshchikov, N.~P. Lopes, C.~Popeea, and A.~Rybalchenko.
\newblock Synthesizing software verifiers from proof rules.
\newblock In \emph{PLDI '12}, pages 405--416. ACM, 2012.

\bibitem[Gupta et~al.(2011{\natexlab{a}})Gupta, Popeea, and
  Rybalchenko]{Gupta2011}
A.~Gupta, C.~Popeea, and A.~Rybalchenko.
\newblock Solving recursion-free horn clauses over {LI+UIF}.
\newblock In \emph{APLAS '11}, volume 7078 of \emph{LNCS}, pages 188--203.
  Springer, 2011{\natexlab{a}}.

\bibitem[Gupta et~al.(2011{\natexlab{b}})Gupta, Popeea, and
  Rybalchenko]{Gupta2011a}
A.~Gupta, C.~Popeea, and A.~Rybalchenko.
\newblock Predicate abstraction and refinement for verifying multi-threaded
  programs.
\newblock In \emph{POPL '11}, pages 331--344. ACM, 2011{\natexlab{b}}.

\bibitem[Gurfinkel et~al.(2015)Gurfinkel, Kahsai, Komuravelli, and
  Navas]{Gurfinkel2015}
A.~Gurfinkel, T.~Kahsai, A.~Komuravelli, and J.~A. Navas.
\newblock The {SeaHorn} verification framework.
\newblock In \emph{CAV '15}, pages 343--361. Springer, 2015.

\bibitem[Hashimoto and Unno(2015)]{Hashimoto2015a}
K.~Hashimoto and H.~Unno.
\newblock Refinement type inference via horn constraint optimization.
\newblock In \emph{SAS '15}, volume 9291 of \emph{LNCS}, pages 199--216.
  Springer, 2015.

\bibitem[Hawblitzel et~al.(2013)Hawblitzel, Kawaguchi, Lahiri, and
  Reb\^{e}lo]{Hawblitzel2013}
C.~Hawblitzel, M.~Kawaguchi, S.~K. Lahiri, and H.~Reb\^{e}lo.
\newblock Towards modularly comparing programs using automated theorem provers.
\newblock In \emph{CADE-24}, pages 282--299. Springer, 2013.

\bibitem[Hoder et~al.(2011)Hoder, Bj{\o}rner, and de~Moura]{Hoder2011}
K.~Hoder, N.~Bj{\o}rner, and L.~de~Moura.
\newblock $\mu${Z}: An efficient engine for fixed points with constraints.
\newblock In \emph{CAV '11}, volume 6806 of \emph{LNCS}, pages 457--462.
  Springer, 2011.

\bibitem[Ireland and Bundy(1996)]{Ireland1996}
A.~Ireland and A.~Bundy.
\newblock Productive use of failure in inductive proof.
\newblock \emph{Journal of Automated Reasoning}, 16\penalty0 (1-2):\penalty0
  79--111, 1996.

\bibitem[Jaffar and Maher(1994)]{Jaffar1994}
J.~Jaffar and M.~J. Maher.
\newblock Constraint logic programming: a survey.
\newblock \emph{The Journal of Logic Programming}, 19:\penalty0 503 -- 581,
  1994.

\bibitem[Johansson et~al.(2010)Johansson, Dixon, and Bundy]{Johansson2010}
M.~Johansson, L.~Dixon, and A.~Bundy.
\newblock Case-analysis for rippling and inductive proof.
\newblock In \emph{ITP '10}, volume 6172 of \emph{LNCS}, pages 291--306.
  Springer, 2010.

\bibitem[Kahsai et~al.(2016)Kahsai, R{\"u}mmer, Sanchez, and
  Sch{\"a}f]{Kahsai2016}
T.~Kahsai, P.~R{\"u}mmer, H.~Sanchez, and M.~Sch{\"a}f.
\newblock {JayHorn}: A framework for verifying {Java} programs.
\newblock In \emph{CAV '16}. Springer, 2016.

\bibitem[Kaufmann et~al.(2000)Kaufmann, Moore, and Manolios]{Kaufmann2000}
M.~Kaufmann, J.~S. Moore, and P.~Manolios.
\newblock \emph{Computer-Aided Reasoning: An Approach}.
\newblock Kluwer Academic Publishers, 2000.

\bibitem[Knowles and Flanagan(2007)]{Knowles2007}
K.~Knowles and C.~Flanagan.
\newblock Type reconstruction for general refinement types.
\newblock In \emph{ESOP '07}, volume 4421 of \emph{LNCS}, pages 505--519.
  Springer, 2007.

\bibitem[Kupferschmid and Becker(2011)]{Kupferschmid2011}
S.~Kupferschmid and B.~Becker.
\newblock Craig interpolation in the presence of non-linear constraints.
\newblock In \emph{FORMATS '11}, pages 240--255. Springer, 2011.

\bibitem[Lahiri et~al.(2012)Lahiri, Hawblitzel, Kawaguchi, and
  Reb\^{e}lo]{Lahiri2012}
S.~K. Lahiri, C.~Hawblitzel, M.~Kawaguchi, and H.~Reb\^{e}lo.
\newblock {SYMDIFF}: A language-agnostic semantic diff tool for imperative
  programs.
\newblock In \emph{CAV '12}, pages 712--717. Springer, 2012.

\bibitem[Leino(2012)]{Leino2012}
K.~R.~M. Leino.
\newblock Automating induction with an {SMT} solver.
\newblock In \emph{VMCAI '12}, volume 7148 of \emph{LNCS}, pages 315--331.
  Springer, 2012.

\bibitem[McMillan and Rybalchenko(2013)]{McMillan2013}
K.~McMillan and A.~Rybalchenko.
\newblock Computing relational fixed points using interpolation.
\newblock Technical Report MSR-TR-2013-6, 2013.

\bibitem[McMillan(2005)]{McMillan2005}
K.~L. McMillan.
\newblock An interpolating theorem prover.
\newblock \emph{Theoretical Computer Science}, 345\penalty0 (1):\penalty0
  101--121, 2005.

\bibitem[Reddy(1990)]{Reddy1990}
U.~S. Reddy.
\newblock Term rewriting induction.
\newblock In \emph{CADE-10}, volume 449 of \emph{LNCS}, pages 162--177.
  Springer, 1990.

\bibitem[Reynolds and Kuncak(2015)]{Reynolds2015}
A.~Reynolds and V.~Kuncak.
\newblock Induction for {SMT} solvers.
\newblock In \emph{VMCAI '15}, volume 8931 of \emph{LNCS}, pages 80--98.
  Springer, 2015.

\bibitem[Rondon et~al.(2008)Rondon, Kawaguchi, and Jhala]{Rondon2008}
P.~Rondon, M.~Kawaguchi, and R.~Jhala.
\newblock Liquid types.
\newblock In \emph{PLDI '08}, pages 159--169. ACM, 2008.

\bibitem[R{\"u}mmer et~al.(2013)R{\"u}mmer, Hojjat, and Kuncak]{Rummer2013}
P.~R{\"u}mmer, H.~Hojjat, and V.~Kuncak.
\newblock Disjunctive interpolants for horn-clause verification.
\newblock In \emph{CAV '13}, volume 8044 of \emph{LNCS}, pages 347--363.
  Springer, 2013.

\bibitem[Sato et~al.(2013)Sato, Unno, and Kobayashi]{Sato2013}
R.~Sato, H.~Unno, and N.~Kobayashi.
\newblock Towards a scalable software model checker for higher-order programs.
\newblock In \emph{PEPM '13}, pages 53--62. ACM, 2013.

\bibitem[Sonnex et~al.(2012)Sonnex, Drossopoulou, and Eisenbach]{Sonnex2012}
W.~Sonnex, S.~Drossopoulou, and S.~Eisenbach.
\newblock {Zeno}: An automated prover for properties of recursive data
  structures.
\newblock In \emph{TACAS '12}, volume 7214 of \emph{LNCS}, pages 407--421.
  Springer, March 2012.

\bibitem[Suter et~al.(2011)Suter, K\"{o}ksal, and Kuncak]{Suter2011a}
P.~Suter, A.~S. K\"{o}ksal, and V.~Kuncak.
\newblock Satisfiability modulo recursive programs.
\newblock In \emph{SAS '11}, volume 6887 of \emph{LNCS}, pages 298--315.
  Springer, 2011.

\bibitem[Terauchi(2010)]{Terauchi2010}
T.~Terauchi.
\newblock Dependent types from counterexamples.
\newblock In \emph{POPL '10}, pages 119--130. ACM, 2010.

\bibitem[Terauchi and Aiken(2005)]{Terauchi2005}
T.~Terauchi and A.~Aiken.
\newblock Secure information flow as a safety problem.
\newblock In \emph{SAS '05}, volume 3672 of \emph{LNCS}, pages 352--367.
  Springer, 2005.

\bibitem[Unno and Kobayashi(2008)]{Unno2008}
H.~Unno and N.~Kobayashi.
\newblock On-demand refinement of dependent types.
\newblock In \emph{FLOPS '08}, volume 4989 of \emph{LNCS}, pages 81--96.
  Springer, 2008.

\bibitem[Unno and Kobayashi(2009)]{Unno2009}
H.~Unno and N.~Kobayashi.
\newblock Dependent type inference with interpolants.
\newblock In \emph{PPDP '09}, pages 277--288. ACM, 2009.

\bibitem[Unno and Terauchi(2015)]{Unno2015}
H.~Unno and T.~Terauchi.
\newblock Inferring simple solutions to recursion-free horn clauses via
  sampling.
\newblock In \emph{TACAS '15}, volume 9035 of \emph{LNCS}, pages 149--163.
  Springer, 2015.

\bibitem[Unno et~al.(2006)Unno, Kobayashi, and Yonezawa]{Unno2006}
H.~Unno, N.~Kobayashi, and A.~Yonezawa.
\newblock Combining type-based analysis and model checking for finding
  counterexamples against non-interference.
\newblock In \emph{PLAS '06}, pages 17--26. ACM, 2006.

\bibitem[Vazou et~al.(2014)Vazou, Seidel, Jhala, Vytiniotis, and
  {Peyton~Jones}]{Vazou2014}
N.~Vazou, E.~L. Seidel, R.~Jhala, D.~Vytiniotis, and S.~L. {Peyton~Jones}.
\newblock Refinement types for {Haskell}.
\newblock In \emph{ICFP '14}, pages 269--282. ACM, 2014.

\bibitem[Xi and Pfenning(1999)]{Xi1999}
H.~Xi and F.~Pfenning.
\newblock Dependent types in practical programming.
\newblock In \emph{POPL '99}, pages 214--227. ACM, 1999.

\end{thebibliography}
\end{document}